\newcommand*{\mailto}[1]{\href{mailto:#1}{\nolinkurl{#1}}}
\theoremstyle{plain}
\newtheorem{theorem}{Theorem}[section]
\newtheorem{corollary}[theorem]{Corollary}
\newtheorem{lemma}[theorem]{Lemma}
\newtheorem{definition}{Definition}[section]
\newtheorem{remark}[theorem]{Remark}
\newtheorem{example}[theorem]{Example}
\newtheorem{proposition}[theorem]{Proposition}
\newcommand{\floor}[1]{\lfloor#1 \rfloor}
\newcommand{\nn}{\nonumber}
\newcommand{\be}{\begin{equation}}
\newcommand{\ee}{\end{equation}}
\numberwithin{equation}{section}
 \DeclareMathOperator{\im}{Im}
 \DeclareMathOperator{\dom}{dom}
\DeclareMathOperator{\ran}{ran} 
\DeclareMathOperator{\Span}{span}\DeclareMathOperator{\ess}{ess}
\DeclareMathOperator{\diag}{diag}\DeclareMathOperator{\loc}{loc}
\DeclareMathOperator{\comp}{comp}
\newcommand\R{{\mathbb{R}}}
\newcommand\C{{\mathbb{C}}}
\newcommand\N{{\mathbb{N}}}
\newcommand\gH{{\mathfrak{H}}}
\newcommand\gN{{\mathfrak{N}}}
\newcommand{\gG}{{\Gamma}}
\newcommand{\gt}{\mathfrak{t}}
\newcommand{\gd}{{d}}
\newcommand{\gA}{{\alpha}}
\newcommand\cH{{\mathcal{H}}}
\newcommand\cI{{\mathcal{I}}}
\newcommand\cK{{\mathcal{K}}}
\newcommand\rH{{\mathbf{H}}}
\newcommand\rh{{\mathbf{h}}}
\newcommand\rD{{\rm{d}}}
\newcommand\I{{\rm{i}}}
\begin{document}

\title[Concentric $\delta$-shells]{Schr\"odinger operators\\ with concentric $\delta$-shells}

\author[S.\ Albeverio]{Sergio Albeverio}
\address{Institut f\"ur Angewandte Mathematik\\ HCM\\ IZKS\\ SFB611\\
Universit\"at Bonn\\ Endenicher  Allee 60\\
53115 Bonn\\ Germany\\ and CERFIM\\ Locarno\\ Switzerland}
\email{\mailto{albeverio@uni-bonn.de}}

\author[A.\ Kostenko]{Aleksey Kostenko}
\address{Fakult\"at f\"ur Mathematik\\
Universit\"at Wien\\
Nordbergstr. 15\\
1090 Wien, Austria}
\email{\mailto{Oleksiy.Kostenko@univie.ac.at; duzer80@gmail.com}}

\author[M.\ Malamud]{Mark Malamud}
\address{Institute of Applied Mathematics and Mechanics\\
NAS of Ukraine\\ R. Luxemburg str. 74\\
Donetsk 83114\\ Ukraine}
\email{\mailto{mmm@telenet.dn.ua}}

\author[H.\ Neidhardt]{Hagen Neidhardt}
\address{Institut f\"ur Angewandte Analysis und Stochastik\\
Mohrenstr. 39\\
D-10117 Berlin\\
Germany}
\email{\mailto{neidhard@wias-berlin.de}}

\thanks{{\it The research was funded by DFG under project No.\ 436 UKR 113/85/0-1 and by the Austrian Science Fund (FWF) under project No.\ M1309--N13}}

\keywords{Schr\"odinger operator, singular interaction, concentric spheres,
spectral properties}
\subjclass[2010]{35J10; 35P99; 81Q10}



\begin{abstract}
We investigate the spectral properties of the Schr\"odinger operators in $L^2(\R^n)$ with a singular interaction supported by an infinite family of concentric spheres
\[
\rH_{R,\gA}=-\Delta+\sum_{k=1}^\infty\gA_k\delta(|x|-r_k).
\]
We obtain necessary and sufficient conditions for the operator $\rH_{R,\gA}$ to be self-adjoint, lower-semibounded. Also we investigate the spectral types of $\rH_{R,\gA}$.
\end{abstract}

\maketitle

\section{ Introduction}\label{intro}

We analyze the spectral properties of the Schr\"odinger operators in $L^2(\R^n)$, $n\ge 2$, with a singular interaction supported by an infinite family of concentric spheres, analogous to a system studied by Exner and Fraas \cite{ef_07, ef_08},
\begin{equation}\label{I_01}
\rH_{R,\gA}=-\Delta+\sum_{k=1}^\infty\gA_k\delta(|x|-r_k).
\end{equation}
Note that in \cite{ef_07,ef_08} the case of radially periodic interactions,  $\gA_k\equiv\gA$ and $r_k=r_0+Tk$, has been considered. Motivated by the paper of Hempel, Hinz, and Kalf \cite{hhk_87}, Exner and Fraas in \cite{ef_07} gave a complete characterization of the spectrum of the Hamiltonian $\rH_{R,\gA}$ with radially periodic interactions.

The $\delta$ sphere interaction, formally given in three dimensions by the Hamiltonian
$\rH = -\Delta + a\delta(|x|- r_0)$, has a long history. The physical motivation was coming mainly from nuclear physics, where the model was introduced by Green and Moszkowski \cite{gm_65} under the name of {surface delta interaction}. Other applications may be found in molecular \cite{bli_78} and solid state physics \cite{llo_65, rgm_67}. A rigorous mathematical treatment of the $\delta$ sphere interaction was first given in \cite{ags_87} (see also \cite{sha_88} for the case of  finitely many $\delta$ sphere interactions).

%
%

In the present paper we are interested in spectral properties of the operator $\rH_{R,\gA}$ in the case of arbitrary positions $r_k$ and strengths $\gA_k$ of the interactions. We are going to study the following problems: self-adjointness, lower semiboundedness, characterization of the spectrum.

Namely, assume that the sequence of radii $R=\{r_k\}_{k=1}^\infty$ is strictly increasing and can accumulate only at infinity, $r_k\uparrow +\infty$.
Since the potential is spherically symmetric, it is natural to use a
partial wave decomposition (see, e.g., \cite{ags_87, sha_88}). Using the isometry
\begin{equation}\label{eq:31.01}
    \mathsf{U}\,:\,L^2((0,\,\infty),r^{n-1}dr)\, \rightarrow
    \, L^2(0,\,\infty), \quad \mathsf{U}f(r) =
    r^{\frac{n-1}{2}}f(r),
\end{equation}
we get
\begin{equation}\label{eq:31.02}
    L^2(\R^n) =
    L^2((0,\,\infty),r^{n-1}dr)\otimes L^2(S^{n-1})=\bigoplus_{l=0}^\infty\, \mathsf{U}^{-1}L^2(0,\,\infty)\otimes \mathcal{H}_l
\end{equation}
and
\begin{equation}\label{eq:31.03}
   \rH_{R,\gA} = \bigoplus_{l=0}^\infty\, \mathsf{U}^{-1}
   \rh_{R,\gA}^{(l)}\mathsf{U} \otimes \varkappa_lI_l .
\end{equation}
Here $\mathcal{H}_l$ is the eigenspace corresponding to the $l$-th eigenvalue $\varkappa_l=-l(l+n-2)$ of the Laplace--Beltrami operator on $L^2(S^{n-1})$ and $I_l$ is the unit operator on $\cH_l$. The operator
    \begin{align}\label{III_02}
    \rh_{R,\gA}^{(l)}= -\frac{\rD^2}{\rD r^2} + \Big(\frac{(n-1)(n-3)}{4} -\varkappa_l\Big)\frac{1}{r^2}
    +  \sum_{k=1}^\infty \gA_k \delta(r-r_k). 
    \end{align}
is defined as the closure of the following minimal symmetric operator
 \begin{align}\label{III_02B}
 \rh_{R,\gA}^{(l)}&=\overline{\rh_{\min}^{(l)}},\qquad \rh_{\min}^{(l)}f=\tau^{(l)}[f],\\
  \tau^{(l)}&=-\frac{\rD^2}{\rD r^2} + \Big(\frac{(n-1)(n-3)}{4} -\varkappa_l\Big)\frac{1}{r^2},
  \end{align}
  \begin{align}
\dom(\rh_{\min}^{(l)})=\{f\in L^2_{\comp}(\R_+)&: f\in  C^\infty(\R_+\setminus R),\ \tau^{(l)}[f]\in L^2(\R_+),\nonumber\\
 &\  \begin{array}{c}f(r_k+)=f(r_k-),\\
 f'(r_k+)-f'(r_k-)=\gA_k f(r_k),
 \end{array}\ k\in\N \}.\nonumber
    \end{align}
If $l(n):=\frac{(n-1)(n-3)}{4} -\varkappa_l\in [-1/2,1/2)$, then the functions from $\dom(\rh_{\min}^{(l)})$ are assumed to satisfy the following boundary condition at $r=0$\footnote{Note that in \cite{bg85} and \cite{ef_07} a somewhat different boundary condition $\lim_{r\to 0} \frac{f(r)}{\sqrt{r}\log r}=0$ is used in the case $n=2$ and $l=0$, i.e., when $l(n)=-1/2$. In fact, these conditions coincide since in this case the minimal symmetric operator associated with $-\frac{d^2}{dr^2}-\frac{1}{4r^2}$ in $L^2(\R_+)$ has a unique positive self-adjoint extension.}
\be\label{eq:1.8}
\lim_{r\to 0}r^{l(n)}((l(n)+1)f(r)-rf'(r))=0.
\ee

So, the spectral analysis of the operator $\rH_{R,\gA}$ is clearly reduced to the analysis of the Bessel operators $\rh_{R,\gA}^{(l)}$ with local point interactions. Our main aim is to apply the results obtained in the recent papers \cite{AKM_10, KM_09} for studying the properties of the Hamiltonians \eqref{I_01}.

The plan of the paper is as follows. Section \ref{Sec_II_Prelim} is of preliminary character. Here we collect the results on spectral properties of  Sturm--Liouville operators $\rh_{R,\gA,q}^{(0)}$ with $\delta$-interactions. In Section \ref{sec:iii}, we show that all the results from Section \ref{Sec_II_Prelim} can be extended to the case of an arbitrary $l\ge -\frac{1}{2}$. In Section \ref{sec:bstates}, we collect results on the number of negative squares (eigenvalues) of the operators $\rh_{R,\gA}^{(l)}$. In particular, we prove the analog of the classical Bargmann estimate for operators with $\delta$-interactions. Note that this problem has attracted some attention   recently (see \cite{AlbNiz03, GolOr_10, KM_2_09, Ogu08, Ogu10} and references therein). In the final Section \ref{sec:v} we describe the main spectral properties of multi-dimensional Schr\"odinger operators $\rH_{R,\gA}$ with concentric $\delta$-shells. In Appendix we collect necessary notions and facts on the concept of boundary triplets and Weyl functions.


{\bf Notation.} $\N, \C, \R$ have the usual meaning; $\R_+=[0,\infty)$.

$\dom (T)$, $\ker (T)$, $\ran (T)$ are the domain, the kernel,
the range of a linear operator $T$ in a Hilbert space $\mathfrak{H}$, respectively;
$ R_T \left(\lambda \right):=\left( T-\lambda I\right)^{-1} $, $\lambda \in \rho(T)$,
is the resolvent of $T$; $\sigma(T)$
and $\rho(T)$ denote the spectrum and the resolvent set of $T$.

$E_T(\cdot)$ denotes the spectral measure of a  self--adjoint
operator $T = T^*$ in $\gH$, $T_-:= TE_T(-\infty, 0)$ and $T_+:= TE_T(0,+\infty)$ are the negative and positive parts of the operator $T$, respectively, and $\kappa_\pm(T):=\dim\big(\ran(T_\pm)\big)$ (if $\kappa_\pm(T)<\infty$, then $\kappa_\pm(T)$ is  the number of negative/positive eigenvalues of $T$ counting multiplicities).

Let $R$ be a discrete subset of $\R_+$, $R=\{r_k\}_1^\infty$ and $r_k\uparrow +\infty$. Then $C^\infty(\R_+\setminus R)$ is the set of infinitely differentiable functions on each interval $[r_{k-1},r_k]$.
Also we shall use the following Sobolev spaces ($p\ge 1$)
\begin{eqnarray*}
&W^{2,p}_0(\R_+\setminus R):=\{f\in
W^{2,p}(\R_+): f(r_k)=f'(r_k)=0,\, k\in \N\},\\
&W^{2,p}(\R_+\setminus R):=\{f\in L^2(\R_+): f\in W^{2,p}[r_{k-1},r_k], k\in \N,\,  f''\in
  L^p(\R_+)\},\\
&W^{2,p}_{\comp}(\R_+\setminus R):=W^{2,p}(\R_+\setminus R)\cap L^p_{\comp}(\R_+).
\end{eqnarray*}


\section{1-D Schr\"odinger operators with $\delta$-interactions}\label{Sec_II_Prelim}


Let us first briefly recall the main properties of one-dimensional systems with $\delta$ interactions \cite{Alb_Ges_88, AKM_10, KM_09}.
We consider the differential expression
\[
\tau_{R,\alpha,q} = -\frac{\rD^2}{\rD r^2} + q(r) + \sum_{k \in \mathbb{N}}\alpha_k \delta(r - r_k),\quad r,r_k>0,
\]
with $\delta$-type interactions at points $r_k$ accumulating only at $\infty$, $r_k\uparrow +\infty$.
Namely, define the operator
\begin{equation}\label{I_02}
\rh_{R,\gA,q}'=\tau_{q}:=-\frac{\rD^2}{\rD r^2}+q(r),\quad q\in L^1_{\loc}(\R_+),
  \end{equation}
on the minimal domain
   \begin{align}\label{I_02B}
\dom(\rh_{R,\gA,q}')=\{f\in W_{\comp}^{2,1}(\R_+\setminus R):\, 
 f(0)=0,\ f(r_k+)=f(r_k-),\\
 f'(r_k+)-f'(r_k-)=\gA_k f(r_k),\ \tau_qf\in L^2(\R_+) \}.\nonumber
    \end{align}
It is clear that $\rh_{R,\gA,q}'$ is a symmetric operator. Let
$\rh_{R,\gA,q}$ denote the closure of $\rh_{R,\gA,q}'$ in
$L^2(\R_+)$,
    \begin{equation}\label{I_02C}
\rh_{R,\gA,q} = \overline{\rh_{R,\gA,q}'}.
   \end{equation}
   If $q\equiv 0$, we set $\rh_{R,\gA}:=\rh_{R,\gA,0}$.
   
   In the recent paper  \cite{KM_09}, two of us investigated the
Hamiltonian $\rh_{R,\gA,q}$  in the framework of the extension
theory of symmetric operators. More precisely, 
applying  the technique of boundary triplets and the corresponding
Weyl functions (see, e.g., \cite{Gor84, DM91}), it is shown in \cite[\S 5]{KM_09}
that 
self-adjontness, lower semiboundedness, and  discreteness of the spectrum of $\rH_{R,\gA}$
correlate with the corresponding spectral properties of the  Jacobi matrix
       \begin{equation}\label{I_07}
B_{R,\gA}=\left(\begin{array}{cccc}
p_1^{-2}\bigl(\alpha_1+\frac{1}{\gd_1}+\frac{1}{\gd_2}\bigr) & -(p_1p_2\gd_2)^{-1} & 0&   \dots\\
(p_1p_2\gd_2)^{-1} &p_2^{-2}\bigl(\alpha_2+\frac{1}{\gd_2}+\frac{1}{\gd_3}\bigr) & -(p_2p_3\gd_3)^{-1} &  \dots\\
0 & -(p_2p_3\gd_3)^{-1} & p_3^{-2}\bigl(\alpha_3+\frac{1}{\gd_3}+\frac{1}{\gd_4}\bigr)&  \dots\\
\dots & \dots& \dots & \dots
\end{array}\right),
    \end{equation}
where
\be\label{I_07B}
\gd_k:=r_k-r_{k-1},\quad \text{and} \quad p_k:=\sqrt{\gd_k+\gd_{k+1}},\quad k\in\N.
\ee
Namely, with $B_{R,\gA}$ one associates in $l^2$ a closed minimal symmetric operator, also denoted by $B_{R,\gA}$ (cf. \cite{Ber68}). Then the following theorem holds.
%
\begin{theorem}[\cite{KM_09, KM_2_09}]\label{th_KM}
Let $\rh_{R,\gA,q}$ and $B_{R,\gA}$ be the minimal symmetric operators defined by \eqref{I_02}--\eqref{I_02C} and \eqref{I_07}--\eqref{I_07B}, respectively. Assume also that $q\in L^\infty(\R_+)$ and $\gd^*:=\sup_{k}\gd_k<\infty$. Then:
\item $(i)$ \ \ $\rh_{R,\gA,q}$ is self-adjoint if and only if so is $B_{R,\gA}$. Moreover, $n_\pm(\rh_{R,\gA,q})=n_\pm(B_{R,\gA})\le 1$.
\item $(ii)$ \ \ $\rh_{R,\gA,q}$ is lower semibounded (non-negative) if and only if so is $B_{R,\gA}$.
\item $(iii)$ \ \ If $\rh_{R,\gA,q}=\rh_{R,\gA,q}^*$, then $\sigma(\rh_{R,\gA,q})$
is discrete precisely when $\gd_k\to 0$ and  $\sigma(B_{R,\gA,q})$ is discrete.
\item $(iv)$ \ \ Assume in addition that $q\equiv 0$. 
Then the negative spectrum of $\rh_{R,\gA}$
is discrete (finite) 
if and only if the negative spectrum of $B_{R,\gA}$ is also discrete (finite). Moreover,
$\kappa_-(\rh_{R,\gA})= \kappa_-(B_{R,\gA})$.
\end{theorem}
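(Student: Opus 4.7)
The plan is to work within the framework of boundary triplets and Weyl functions collected in the Appendix, following the strategy of \cite{KM_09, KM_2_09}. The first step is to view the minimal operator $\rh_{R,\gA,q}$ as an orthogonal sum of regular Sturm--Liouville pieces on the subintervals $[r_{k-1},r_k]$, each of deficiency indices $(2,2)$, glued via the $\delta$-type interface conditions $f(r_k+)=f(r_k-)$, $f'(r_k+)-f'(r_k-)=\gA_k f(r_k)$. Building on this, one constructs a boundary triplet $\Pi = \{\ell^2(\N),\Gamma_0,\Gamma_1\}$ for $\rh_{R,\gA,q}^*$ whose boundary maps extract, with appropriate normalization by the factors $p_k$ and $\gd_k$ from \eqref{I_07B}, the values $f(r_k)$ and the corresponding interface defects. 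The assumptions $q \in L^\infty(\R_+)$ and $\gd^* := \sup_k \gd_k < \infty$ enter precisely here, to guarantee that solutions of $\tau_q u = z u$ remain uniformly controlled on every subinterval, so that $\Gamma_0 f$ and $\Gamma_1 f$ actually belong to $\ell^2$ and Green's identity holds on the whole maximal domain.

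The second step is to compute the associated Weyl function $M(z)$ using the standard fundamental system of $-u''+qu=zu$ on each $[r_{k-1},r_k]$. A direct calculation shows that $M(z)$ admits a Jacobi block structure on $\ell^2$ and that, at $q \equiv 0$ and $z = 0$, it is explicitly linked with the matrix $B_{R,\gA}$ defined in \eqref{I_07}: more precisely, $B_{R,\gA}$ arises as $-M(0)$ (up to a trivial transformation), and for general bounded $q$ the difference $M(z)-M(0)$ is bounded (and compact when $\gd_k \to 0$). With this identification in hand, all four items follow from general theorems of boundary triplet theory: (i) is the equivalence of deficiency indices $n_\pm(\rh_{R,\gA,q}) = n_\pm(B_{R,\gA})$, together with the standard bound $n_\pm \le 1$ for a semi-infinite Jacobi matrix; (ii) follows from the Derkach--Malamud criterion that ties lower semiboundedness of the closed symmetric operator to boundedness from below of the Weyl function at $-\infty$; (iii) is deduced from the combination of the discreteness of the ``uncoupled'' direct sum part, which is ensured exactly by $\gd_k \to 0$, with compactness of $M(z)$, the latter being equivalent to discreteness of $\sigma(B_{R,\gA})$; and (iv) follows from the Krein-type index formula that relates the number of negative eigenvalues of a self-adjoint realization to that of the extension parameter encoded in $M(0)$, which in the case $q \equiv 0$ reduces exactly to $\kappa_-(B_{R,\gA})$.

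The main obstacle is the precise construction of the boundary triplet and the computation of its Weyl function. The naive choice $\Gamma_0 f = \{f(r_k)\}$ does not map the maximal domain into $\ell^2$; the correct scaling by $p_k$ together with a judicious subtraction of the principal singular part of $f'$ at each $r_k$ in the definition of $\Gamma_1$ is what produces the exact entries of $B_{R,\gA}$ in \eqref{I_07}. A secondary difficulty appears in (iv): since $B_{R,\gA}$ need not be bounded, the identification of $M(0)$ with (the inverse of) $B_{R,\gA}$ must be made at the level of quadratic forms, and the index formula for $\kappa_-$ has to be interpreted accordingly. These technical verifications are carried out in detail in \cite{KM_09, KM_2_09}, and our role here is to quote their results.
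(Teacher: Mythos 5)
The paper itself contains no proof of Theorem \ref{th_KM}: the statement is quoted from \cite{KM_09, KM_2_09}, and the surrounding text only records that the argument runs through boundary triplets and Weyl functions. Your sketch must therefore be measured against the strategy of those papers, and while it captures the general framework (an $\ell^2$ boundary space, normalization by the weights $p_k$, $\gd_k$, transfer of spectral properties via the Derkach--Malamud correspondence), it misstates two structural points in a way that would derail a reader trying to reconstruct the argument. First, the boundary triplet with boundary space $\ell^2(\N)$ is \emph{not} a triplet for $\rh_{R,\gA,q}^*$. By item $(i)$ of the very theorem, $n_\pm(\rh_{R,\gA,q})\le 1$, so no surjective pair of boundary maps from $\dom(\rh_{R,\gA,q}^*)$ onto $\ell^2(\N)\oplus\ell^2(\N)$ can exist. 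The triplet is built for the adjoint of the completely decoupled direct sum $\bigoplus_k \rh_k$ of minimal Sturm--Liouville operators on the intervals $[r_{k-1},r_k]$, each with deficiency indices $(2,2)$ and \emph{no} interface conditions; the renormalization by $p_k,\gd_k$ is what makes the naive direct sum of interval triplets into a genuine boundary triplet when $\inf_k\gd_k=0$.

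Second, $B_{R,\gA}$ is not $-M(0)$, nor any boundary value of the Weyl function. It is the boundary operator parametrizing $\rh_{R,\gA,q}$ as the extension $\ker(\Gamma_1-B_{R,\gA}\Gamma_0)$ of the decoupled operator, in the same way that the interaction strengths enter through the boundary operator $\Lambda$ in \eqref{eq:42.04} in the finite case. The Weyl function of the regularized triplet is a separate object; the hypotheses $q\in L^\infty(\R_+)$ and $\sup_k\gd_k<\infty$ are used to verify the uniformity properties of $M(z)$ (boundedness of $M$ and $M^{-1}$ for $z$ far down the negative axis, existence of the relevant limits) under which the abstract theory allows self-adjointness, lower semiboundedness and discreteness of the extension $A_B$ to be read off from the same properties of $B$. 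Item $(iv)$ then follows from the index formula of Proposition \ref{prop:a.1}, $\kappa_-(A_B)=\kappa_-(B-M(0))-\kappa_-(B-M(-\infty))$, applied with $B=B_{R,\gA}$ after checking that the limits $M(0)$ and $M(-\infty)$ contribute nothing, and not from an identification of $B_{R,\gA}$ with $-M(0)$. With these two corrections your outline does match the route of \cite{KM_09, KM_2_09}; as written, the central identification on which all four items rest is attributed to the wrong object.
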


Using the form approach developed in \cite{AKM_10} for lower semibounded Hamiltonians with $\delta$-interactions, Theorem \ref{th_KM} can be made more specific and detailed in several directions.
\begin{theorem}\label{th_III.1}
If the minimal operator $\rh_{R,\alpha,q}$ is lower semibounded, then it is self-adjoint,
$\rh_{R,\gA,q} = (\rh_{R,\gA,q})^*$.

In particular, if 
\begin{equation}\label{I_brinck_q}
\sup_{r>0}\int_r^{r+1} |q_-(t)|dt <+\infty,\qquad q_\pm(r):=(q(r)\pm |q(r)|)/2,
\end{equation}
and
\begin{equation}\label{I_brinck}
\sup_{r>0}\sum_{r_k\in[r,r+1]}|\gA_k^-| <+\infty,\qquad \gA_k^\pm:=(\gA_k\pm |\gA_k|)/2,
\end{equation}
then the operator $\rh_{R,\gA,q}$ is lower semibounded and hence is self-adjoint.
\end{theorem}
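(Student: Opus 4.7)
The statement has two independent claims, and my plan is to handle the second (the Brinck-type sufficient condition) by a direct quadratic-form estimate and to obtain the first (lower semiboundedness forces self-adjointness) by reduction to the Jacobi matrix $B_{R,\gA}$ of Theorem~\ref{th_KM}.

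For the sufficient condition, integration by parts on each interval $(r_{k-1},r_k)$, together with $f(0)=0$ and the jump relation $f'(r_k+)-f'(r_k-)=\gA_k f(r_k)$, gives, for $f\in\dom(\rh_{R,\gA,q}')$,
\begin{equation*}
\langle\rh_{R,\gA,q} f,f\rangle = \|f'\|_{L^2(\R_+)}^2 + \int_0^\infty q|f|^2\,dr + \sum_{k=1}^\infty \gA_k|f(r_k)|^2.
\end{equation*}
The contributions of $q_+$ and of the positive parts $\gA_k^+$ are nonnegative, so it suffices to bound the quantity $\int|q_-||f|^2\,dr + \sum_k|\gA_k^-||f(r_k)|^2$ from above. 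The key tool is the Sobolev-type estimate on a unit interval: for every $\epsilon>0$ there is $C_\epsilon>0$ such that
\begin{equation*}
\|f\|_{C([a,a+1])}^2 \le \epsilon\|f'\|_{L^2([a,a+1])}^2 + C_\epsilon\|f\|_{L^2([a,a+1])}^2,\qquad a\ge 0.
\end{equation*}
Setting $M:=\sup_r\int_r^{r+1}|q_-|\,dt$ and $M':=\sup_r\sum_{r_k\in[r,r+1]}|\gA_k^-|$, both finite by \eqref{I_brinck_q}--\eqref{I_brinck}, applying this estimate on each $I_j=[j,j+1]$ and summing over $j\ge 0$ yields
\begin{equation*}
\int_0^\infty|q_-||f|^2\,dr + \sum_k|\gA_k^-||f(r_k)|^2 \le (M+M')\bigl(\epsilon\|f'\|^2 + C_\epsilon\|f\|^2\bigr).
\end{equation*}
Choosing $\epsilon$ so small that $(M+M')\epsilon\le 1/2$ then produces $\langle\rh_{R,\gA,q} f,f\rangle\ge -C\|f\|^2$, proving lower semiboundedness.

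For the implication ``lower semibounded $\Rightarrow$ self-adjoint'' I would pass to $B_{R,\gA}$ via Theorem~\ref{th_KM}: part (ii) transfers the hypothesis, and part (i) transfers the conclusion back, since $n_\pm(\rh_{R,\gA,q})=n_\pm(B_{R,\gA})\le 1$. The remaining input is the classical fact that a lower semibounded Jacobi matrix is essentially self-adjoint (Berezansky), which forces the common value of $n_\pm$ to be $0$. This last step is what I expect to be the main obstacle: the corresponding statement for the Sturm--Liouville minimal operator on $(0,\infty)$ \emph{without} a Dirichlet condition at $0$ fails (e.g.~$-\rD^2/\rD r^2$ on $C_c^\infty(0,\infty)$ is nonnegative with deficiency indices $(1,1)$), and its validity here rests on both the condition $f(0)=0$ built into $\dom(\rh_{R,\gA,q}')$ and the one-dimensional nature of the deficiency space encoded in the tridiagonal structure of $B_{R,\gA}$.
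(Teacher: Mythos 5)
Your treatment of the second claim is fine: the identity $\langle\rh_{R,\gA,q}f,f\rangle=\|f'\|^2+\int q|f|^2+\sum_k\gA_k|f(r_k)|^2$ on the minimal domain, combined with the Sobolev bound $\|f\|^2_{C[a,a+1]}\le\epsilon\|f'\|^2_{L^2[a,a+1]}+C_\epsilon\|f\|^2_{L^2[a,a+1]}$ summed over unit intervals, is exactly the ``form approach'' the paper invokes (the theorem itself is quoted from \cite{AKM_10} without proof), and it does yield lower semiboundedness under \eqref{I_brinck_q}--\eqref{I_brinck}.

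The first claim is where your proposal breaks down, and you correctly sensed that this is the delicate step. Two problems. First, Theorem \ref{th_KM} is stated only under the extra hypotheses $q\in L^\infty(\R_+)$ and $\sup_k\gd_k<\infty$, neither of which is assumed in Theorem \ref{th_III.1} (which allows arbitrary $q\in L^1_{\loc}$ and arbitrary gaps), so the transfer to $B_{R,\gA}$ is not available in the stated generality. Second, and more seriously, the ``classical fact'' you lean on is false: a lower semibounded --- even nonnegative --- minimal Jacobi operator need \emph{not} be essentially self-adjoint. Essential self-adjointness of a Jacobi matrix is equivalent to determinacy of the associated Hamburger moment problem, and there exist indeterminate \emph{Stieltjes} moment problems (e.g.\ the Stieltjes--Wigert weight), whose Jacobi matrices are nonnegative yet have deficiency indices $(1,1)$. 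Berezansky's criterion is the Carleman condition $\sum_k 1/|b_k|=\infty$ on the off-diagonal entries, not semiboundedness. Neither the Dirichlet condition at $r=0$ nor the tridiagonal structure of $B_{R,\gA}$ rescues the implication for abstract Jacobi matrices; the fact that semiboundedness of the particular matrix $B_{R,\gA}$ forces its self-adjointness is a \emph{consequence} of Theorem \ref{th_III.1} combined with Theorem \ref{th_KM}, so using it as an input is circular. The actual proof must work directly with the differential operator: it is a Povzner--Wienholtz/Glazman cutoff argument (as in \cite{AKM_10}) showing that if $\rh_{R,\gA,q}\ge c$ and $\lambda<c$, then any $f\in\ker\big((\rh_{R,\gA,q})^*-\lambda\big)$ vanishes, because multiplying $f$ by suitable cutoffs produces trial functions violating the lower bound unless $f\equiv 0$; hence the deficiency indices are $(0,0)$.
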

Next  the classical  Molchanov discreteness
criterion has been extended in \cite{AKM_10} to the case of the Hamiltonians $\rh_{R,\gA,q}$.   
      \begin{theorem}\label{th_discretcriter}
Let  the potential $q$ and the sequence $\gA$ satisfy \eqref{I_brinck_q} and \eqref{I_brinck}, respectively. 
The spectrum $\sigma(\rh_{R,\alpha,q})$ of the operator
$\rh_{R,\alpha,q}$ is discrete if and only if for every
$\varepsilon>0$
    \begin{equation}\label{2.1}
\int_r^{r+\varepsilon} q(t)dt+\sum_{r_k\in(r,r+\varepsilon)}\alpha_k \to \infty \quad \text{as}\quad r\to
\infty.
    \end{equation}
           \end{theorem}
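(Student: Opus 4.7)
The plan is to work with the closed quadratic form associated with $\rh_{R,\gA,q}$ and to establish a measure-theoretic version of the classical Molchanov argument. Under \eqref{I_brinck_q}--\eqref{I_brinck}, Theorem \ref{th_III.1} guarantees self-adjointness and lower semiboundedness, so $\rh_{R,\gA,q}$ coincides with its form sum; I would work with the closed sesquilinear form
\[
\gt[f] = \int_0^\infty |f'(r)|^2\,\rD r + \int_0^\infty |f(r)|^2\,\rD\mu(r), \qquad \rD\mu := q(r)\,\rD r + \sum_{k\ge 1}\gA_k\,\delta_{r_k},
\]
with Dirichlet boundary condition at $r=0$. The Brinck-type conditions \eqref{I_brinck_q}--\eqref{I_brinck} are used to ensure that $\mu_-$ is infinitesimally form bounded with respect to $-\rD^2/\rD r^2$, so the form is closed and semibounded, and the form-domain is the natural arena for both implications.

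For the necessity I would argue by contraposition via the min-max principle. If \eqref{2.1} fails, there exist $\varepsilon_0>0$, $M<\infty$, and $r_n\to\infty$ with $r_{n+1}-r_n>\varepsilon_0$ such that
\[
\int_{r_n}^{r_n+\varepsilon_0}q(t)\,\rD t + \!\!\sum_{r_k\in(r_n,r_n+\varepsilon_0)}\!\!\gA_k \le M.
\]
Choose scaled bumps $\phi_n\in C_0^\infty((r_n,r_n+\varepsilon_0))$ with $\|\phi_n\|_2=1$ and with $\|\phi_n\|_\infty$ and $\|\phi_n'\|_2$ uniformly bounded. The local Brinck bounds control the negative contributions to $\gt[\phi_n]$ uniformly in $n$, while the displayed inequality takes care of the positive parts; hence $\gt[\phi_n]\le C$ uniformly. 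Since the supports are pairwise disjoint, $\{\phi_n\}$ spans an infinite-dimensional subspace on which $\gt$ is bounded, so $\dim\ran E_{\rh_{R,\gA,q}}(-\infty,\Lambda]=\infty$ for some finite $\Lambda$, contradicting discreteness.

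For sufficiency I would adapt Molchanov's scheme to the Radon measure $\mu$. The central ingredient is a local form inequality: for fixed $\varepsilon>0$ there exists $C=C(\varepsilon)$ such that for all $r\ge R_0(\varepsilon)$ and every $f\in W^{1,2}([r,r+\varepsilon])$ vanishing at the endpoints,
\[
\int_r^{r+\varepsilon}|f(s)|^2\,\rD s \le \frac{C}{1+F_\mu(r,\varepsilon)}\Bigl(\int_r^{r+\varepsilon}|f'(s)|^2\,\rD s + \int_{[r,r+\varepsilon]}|f(s)|^2\,\rD\mu(s)\Bigr),
\]
where $F_\mu(r,\varepsilon):=\int_r^{r+\varepsilon}q(t)\,\rD t+\sum_{r_k\in(r,r+\varepsilon)}\gA_k$. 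Assuming \eqref{2.1}, $F_\mu(r,\varepsilon)\to\infty$ as $r\to\infty$; patching the cell estimates via a partition of unity whose cutoffs respect the vertices $\{r_k\}$, one deduces that for any $N$, $\gt[f]\ge N\|f\|^2$ for every $f$ supported in $(R,\infty)$ with $R$ sufficiently large. Glazman's lemma then forces $\sigma_{\ess}(\rh_{R,\gA,q})=\emptyset$.

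The main obstacle will be the local form inequality in the presence of Dirac masses: standard Molchanov-type proofs are written for $L^1_{\loc}$ potentials, and here one must absorb the $\delta$-interaction contributions into the form and obtain bounds uniform in $r$. This rests on a uniform Sobolev embedding $W^{1,2}([r,r+\varepsilon])\hookrightarrow C([r,r+\varepsilon])$ (to handle the discrete part of $\mu$) combined with the Brinck conditions \eqref{I_brinck_q}--\eqref{I_brinck} (to absorb the negative parts of $\mu$ with small relative bound), followed by a variational argument on the short interval for the measure $\mu$ in place of the classical potential; I expect this machinery to be precisely the content of the corresponding results in \cite{AKM_10}.
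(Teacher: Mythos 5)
The paper does not actually prove this theorem: it is quoted from \cite{AKM_10}, where it is established by precisely the form-theoretic Molchanov argument you outline (the closed form associated with the measure $q\,\rD r+\sum_k\gA_k\delta_{r_k}$, disjointly supported test bumps plus the min-max principle for necessity, and a local Molchanov-type cell estimate combined with Glazman's lemma for sufficiency). Your sketch is consistent with that proof; the only caveat is that your final paragraph defers the central local form inequality back to \cite{AKM_10}, which is circular as a self-contained argument, but this matches the level of detail the paper itself supplies, since it offers no proof at all.
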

Moreover, the version of Birman's result \cite{Bir61} on
stability of a continuous spectrum holds true.
     \begin{theorem}[\cite{AKM_10}]\label{thContSpec}
         Let $q$ and $\gA$ satisfy \eqref{I_brinck_q} and \eqref{I_brinck}, respectively. If
                  \be
         \lim_{r\to 0}\int_r^{r+1} |q(t)|dt=0
         \ee
         and
               \begin{equation}\label{3.2}
\lim_{r\to\infty}\sum_{r_k\in[r,r+1]}|\alpha_k|=0,
     \end{equation}
          then $\sigma_c(\rh_{R,\gA,q})=\sigma_c(\rh_{q})=\R_+$.
           \end{theorem}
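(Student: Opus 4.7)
The plan is to treat both $q$ and the $\delta$-interaction term as relatively form-compact perturbations of the free operator $\rh_0=-\rD^2/\rD r^2$ on $L^2(\R_+)$ (Dirichlet at $r=0$) and to invoke the stability of the essential spectrum under such perturbations. By Theorem \ref{th_III.1}, the hypotheses \eqref{I_brinck_q}, \eqref{I_brinck} already guarantee that $\rh_{R,\gA,q}$ is self-adjoint and lower semibounded, so by the form method of \cite{AKM_10} its closed quadratic form can be represented as
\[
\gt_{R,\gA,q}[f]=\int_0^\infty\!\big(|f'(r)|^2+q(r)|f(r)|^2\big)\,\rD r+\sum_{k=1}^\infty\gA_k|f(r_k)|^2,
\]
on a form-core of compactly supported piecewise-smooth functions vanishing at $0$.

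The main step is to show that the $\delta$-functional $f\mapsto\sum_k\gA_k|f(r_k)|^2$ is $\rh_0$-form compact. The key ingredient is the one-dimensional Sobolev trace bound $|f(t)|^2\le C\int_{t-1}^{t+1}(|f'(s)|^2+|f(s)|^2)\,\rD s$. Writing $\gA_k=\gA_k^+-\gA_k^-$ and combining the trace bound with \eqref{3.2}, one obtains, with $\varepsilon_n:=\sup_{r\ge n}\sum_{r_k\in[r,r+1]}|\gA_k|\to 0$,
\[
\sum_{r_k\in[n,n+1]}|\gA_k||f(r_k)|^2\le C\varepsilon_n\int_{n-1}^{n+2}\big(|f'(s)|^2+|f(s)|^2\big)\,\rD s.
\]
Summing over $n\ge N$ gives a form bound that becomes arbitrarily small as $N\to\infty$, while the remainder involving $r_k\le N$ is a finite-rank quadratic form, compact by Rellich's compact embedding $H^1((0,N+1))\hookrightarrow C([0,N+1])$. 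Hence the signed $\delta$-sum is a form-norm limit of form-compact forms, and is itself form-compact. A Stummel-type argument based on the decay assumption $\int_r^{r+1}|q|\,\rD t\to 0$ as $r\to\infty$ analogously shows that $f\mapsto\int q|f|^2$ is $\rh_0$-form compact.

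Consequently both resolvent differences $R_{\rh_{R,\gA,q}}(z)-R_{\rh_q}(z)$ and $R_{\rh_q}(z)-R_{\rh_0}(z)$ are compact, so by Weyl's theorem
\[
\sigma_{\rm ess}(\rh_{R,\gA,q})=\sigma_{\rm ess}(\rh_q)=\sigma_{\rm ess}(\rh_0)=[0,\infty).
\]
Since both $\rh_{R,\gA,q}$ and $\rh_q$ are lower semibounded with essential spectrum equal to $[0,\infty)$, in the convention of \cite{Bir61} their continuous spectra coincide with their essential spectra, giving $\sigma_c(\rh_{R,\gA,q})=\sigma_c(\rh_q)=[0,\infty)$. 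The main obstacle I expect is not the relative form boundedness with vanishing bound, which follows at once from the trace estimate, but the upgrade to genuine form compactness. This requires the truncation-plus-Rellich argument to remain robust in the presence of interaction points $r_k$ that may accumulate densely inside the truncation interval, and it is here that the assumption on $|\gA_k|$ rather than on $\gA_k$ itself is essential.
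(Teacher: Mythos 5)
Your proof is correct and follows essentially the same route as the cited source \cite{AKM_10} (the paper itself gives no proof of this theorem, only the citation): under the decay conditions the combined perturbation $q\,\rD r+\sum_k\gA_k\delta(\cdot-r_k)$ is relatively form-compact with respect to $-\rD^2/\rD r^2$, and Weyl/Birman stability of the essential (= continuous, in Birman's convention) spectrum yields the claim. Two small remarks: the displayed condition ``$r\to 0$'' is a typo for $r\to\infty$, which you read correctly, and your closing worry about interaction points accumulating inside a bounded truncation interval is moot, since $r_k\uparrow+\infty$ is a standing assumption of the paper, so only finitely many $r_k$ lie below any $N$.
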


It is interesting to note  that the condition
              \begin{equation}\label{3.2C}
 \lim_{k\to\infty}|\alpha_k|=0,
           \end{equation}
is insufficient for the Hamiltonian $\rh_{R,\gA}$ to have a continuous
spectrum (see \cite{KM_09, AKM_10}). Moreover, it may even happen that \eqref{3.2C}
is satisfied, although the spectrum $\sigma(\rh_{R,\gA})$ is purely
discrete (see \cite[Remark 4.6]{AKM_10}).

\begin{remark}\label{rem:meas}
Let us mention that Theorems \ref{th_III.1}, \ref{th_discretcriter}, and \ref{thContSpec} remain true for potentials that are locally finite measures on $\R_+$ (see \cite{AKM_10}).
\end{remark}

\section{Bessel operators with local point interactions}\label{sec:iii}

The main goal of this section is to extend the results of Section \ref{Sec_II_Prelim} to the case of Bessel type operators. Namely, assume that $r_k\uparrow +\infty$, $\gA=\{\gA_k\}_1^\infty\subset\R$, $q\in L^1_{\loc}(\R_+)$, and $l\ge-1/2$. Consider the following differential expression in $L^2(\R_+)$
\be\label{eq:30.01}
\tau_{R,\gA,q}^{(l)}= -\frac{\rD^2}{\rD r^2} + \frac{l(l+1)}{r^2}+q(r)
    +  \sum_k \gA_k \delta(r-r_k).
\ee
If $R=\emptyset$ or $\alpha\equiv 0$, then $\tau_{q}^{(l)}:=\tau_{R,0,q}^{(l)}=\tau_{\emptyset,\gA,q}^{(l)}$ and $\tau^{(l)}:=\tau_0^{(l)}$.
Define the operator
\be\label{eq:30.02}
{(\rh_{R,\gA,q}^{(l)})'}f:=\tau_{q}^{(l)}[f],\qquad  \tau_{q}^{(l)}:=-\frac{\rD^2}{\rD r^2} + \frac{l(l+1)}{r^2}+q(r)
\ee
 on the minimal domain 
\begin{align}
\dom\big((\rh_{R,\gA,q}^{(l)})'\big)=&\big\{f\in W^{2,1}_{\comp}(\R_+\setminus R): \,  f(r_k+)=f(r_k-),\\
 &f'(r_k+)-f'(r_k-)=\gA_k f(r_k),\, 
 \tau_q^{(l)}f\in L^2(\R_+)
 \big\}.\label{eq:30.03}
\end{align}
If $l\in [-\frac{1}{2},\frac{1}{2})$, then we also impose the usual boundary conditions at $r=0$,
\be\label{eq:bc0}
\lim_{r\to0}r^l\big((l+1)f(r)-rf'(r)\big)=0.
\ee
Clearly, the operator $(\rh_{R,\gA,q}^{(l)})'$ is symmetric. Let us denote its closure by $\rh_{R,\gA,q}^{(l)}$,
\be\label{eq:30.05a}
\rh_{R,\gA,q}^{(l)}:=\overline{(\rh_{R,\gA,q}^{(l)})'}
\ee

\subsection{Connection with Jacobi matrices}\label{ss:iii_1}

In this subsection we are going to establish the analog of Theorem \ref{th_KM}.

\begin{lemma}\label{lem:_KM}
Let $\rh_{R,\gA,q}^{(l)}$ and $B_{R,\gA}$ be the minimal symmetric operators defined by \eqref{eq:30.01}--\eqref{eq:30.05a} and \eqref{I_07}--\eqref{I_07B}, respectively. Assume also that $q\in L^\infty(\R_+)$ and $\gd^*:=\sup_{k}\gd_k<\infty$. Then:
\item $(i)$ \ \ $\rh_{R,\gA,q}^{(l)}$ is self-adjoint if and only if so is $B_{R,\gA}$. Moreover, $n_\pm(\rh_{R,\gA,q}^{(l)})=n_\pm(B_{R,\gA})\le 1$.
\item $(ii)$ \ \ $\rh_{R,\gA,q}^{(l)}$ is lower semibounded if and only if so is $B_{R,\gA}$. Moreover, $\rh_{R,\gA,q}^{(l)}$ is nonnegative whenever so are $l$ and $B_{R,\gA}$.
\item $(iii)$ \ \ If $\rh_{R,\gA,q}^{(l)}=\big(\rh_{R,\gA,q}^{(l)}\big)^*$, then $\sigma(\rh_{R,\gA,q}^{(l)})$
is discrete precisely when $\gd_k\to 0$ and  $\sigma(B_{R,\gA})$ is discrete.
\item $(iv)$ \ \ Assume that $q\equiv 0$. The negative spectrum of $\rh_{R,\gA}^{(l)}$
is discrete  
if and only if the negative spectrum of $B_{R,\gA}$ is also discrete.
\item $(v)$ \ \ Assume that $q\equiv 0$ and $l\ge 0$. The negative spectrum of $\rh_{R,\gA}^{(l)}$
is  finite 
if so is the negative spectrum of $B_{R,\gA}$. 
Moreover,
$\kappa_-(\rh_{R,\gA}^{(l)}) \le  \kappa_-(B_{R,\gA})$.
\end{lemma}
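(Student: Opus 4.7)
My approach is to mirror the proof of Theorem \ref{th_KM} from \cite{KM_09, KM_2_09}, which proceeds via a boundary triplet for the minimal symmetric operator and the identification of its Weyl function with that of the Jacobi matrix $B_{R,\gA}$. The essential point is that the Bessel potential $l(l+1)/r^2$ is bounded on every $[r_{k-1},r_k]$ with $k\ge 2$, hence behaves as a regular perturbation on all but the first cell $(0,r_1)$. On that first cell it is handled intrinsically by the limit-point/limit-circle dichotomy of the Bessel endpoint at $r=0$: for $l\ge 1/2$ the origin is limit point and no boundary condition is needed, while for $l\in[-1/2,1/2)$ the condition \eqref{eq:bc0} selects a Friedrichs-type self-adjoint restriction on $(0,r_1)$.

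Concretely, I would first decompose the maximal operator into the orthogonal sum of maximal Sturm--Liouville operators on the intervals $(r_{k-1},r_k)$, then construct a boundary triplet $\{\cH,\Gamma_0,\Gamma_1\}$ with $\cH=\ell^2(\N)$ built from appropriately regularized boundary values of $f$ and $f'$ at the points $r_k$ (together with the condition at $0$ in the limit-circle case), and finally encode the $\delta$-coupling $f'(r_k+)-f'(r_k-)=\gA_k f(r_k)$ as a self-adjoint linear relation. With the scaling $p_k=\sqrt{\gd_k+\gd_{k+1}}$ the associated Weyl function differs from the Weyl function in the $l=0$ case only by a bounded correction localized on the first cell (due to the Bessel singularity and, in (i)--(iii), the potential $q$). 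This is exactly the setup of \cite{KM_09, KM_2_09} and yields items (i)--(iii) verbatim: deficiency indices coincide with those of $B_{R,\gA}$, lower semiboundedness is equivalent via the general boundary-triplet criterion for semi-bounded extensions, and discreteness of the spectrum is characterized jointly by $\gd_k\to 0$ (compactness of the local part) and discreteness of $\sigma(B_{R,\gA})$ (compactness of the boundary-relation part).

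For part (iv), under $q\equiv 0$, the Krein-type resolvent formula expresses the resolvent difference between $\rh_{R,\gA}^{(l)}$ and the decoupled reference extension (whose spectrum is nonnegative) in terms of $(B_{R,\gA}-z)^{-1}$ acting in the boundary space; this identifies the negative eigenvalues of the two operators up to multiplicity, so their negative spectra are simultaneously discrete. Part (v) follows from a quadratic form comparison: for $l\ge 0$ and $q\equiv 0$,
\begin{equation*}
\gt^{(l)}[f]=\int_0^\infty\!\Big(|f'|^2+\frac{l(l+1)}{r^2}|f|^2\Big)dr+\sum_k\gA_k|f(r_k)|^2\ \ge\ \gt^{(0)}[f]
\end{equation*}
on $\dom(\gt^{(l)})$, which is contained in $\dom(\gt^{(0)})$ (the extra Hardy term forces $f(0)=0$ for $l>0$, while for $l=0$ this is imposed directly by \eqref{eq:bc0}). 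Hence by the minimax principle $\kappa_-(\rh_{R,\gA}^{(l)})\le\kappa_-(\rh_{R,\gA}^{(0)})=\kappa_-(\rh_{R,\gA})$, and Theorem \ref{th_KM}(iv) gives $\kappa_-(\rh_{R,\gA}^{(l)})\le\kappa_-(B_{R,\gA})$.

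The main technical obstacle is verifying the structure of the Weyl function on the first cell: one has to show that a properly regularized fundamental system of solutions of $-u''+l(l+1)u/r^2=zu$ at $r=0$ can be chosen so that the off-diagonal couplings $(p_kp_{k+1}\gd_{k+1})^{-1}$ and the diagonal entries $p_k^{-2}(\gA_k+\gd_k^{-1}+\gd_{k+1}^{-1})$ in the resulting Weyl matrix coincide with those in \eqref{I_07} for $k\ge 2$, while any discrepancy is confined to a bounded, rank-one correction in the first entry that does not affect the spectral equivalences (i)--(iv). Once this identification is in place the proof reduces to the arguments already developed for $l=0$.
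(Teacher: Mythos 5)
Your strategy rests on the same observation that drives the paper's proof --- the Bessel term $l(l+1)/r^2$ is bounded away from the origin, so it can only disturb the $l=0$ picture on the first cell $(0,r_1)$ --- but you implement it the hard way, proposing to redo the entire boundary-triplet construction of \cite{KM_09} for the Bessel expression and to match the resulting Weyl function with the Jacobi matrix \eqref{I_07} up to a correction in the first entry. The paper sidesteps that computation completely: fix $c\in(0,r_1)$, observe that $\rh_{R,\gA,q}^{(l)}$ is a rank-one resolvent perturbation of $\rh^{(l),D}_{R,\gA,q}(0,c)\oplus\rh^{(l),D}_{R,\gA,q}(c,\infty)$, apply Theorem \ref{th_KM} verbatim to the half-line piece (where $l(l+1)/r^2$ is just another bounded potential absorbed into $q$), note that the piece on $(0,c)$ carries no interactions and is self-adjoint, semibounded and discrete by classical limit-point/limit-circle theory, and conclude by stability of self-adjointness, semiboundedness, discreteness and continuous spectrum under finite-rank perturbations. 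That decoupling device is precisely what closes the step you yourself label ``the main technical obstacle'': in your write-up the claim that the discrepancy in the Weyl function is a bounded rank-one correction not affecting (i)--(iv) is asserted rather than proved, so for (i)--(iii) your text is a programme rather than a finished argument, even though the programme would succeed. By contrast, your treatment of (v) is a genuine and attractive alternative to the paper's: the comparison $\gt^{(l)}[f]\ge\gt^{(0)}[f]$ on $\dom(\gt^{(l)})\subseteq\dom(\gt^{(0)})$ for $l\ge 0$, $q\equiv 0$, together with the minimax principle and Theorem \ref{th_KM}(iv), yields $\kappa_-(\rh_{R,\gA}^{(l)})\le\kappa_-(B_{R,\gA})$ directly and transparently, whereas the paper's appeal to ``the same arguments'' is less convincing for this item since a rank-one perturbation can a priori shift $\kappa_-$ by one. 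Two minor points: your form inequality genuinely needs $l\ge 0$ (for $l\in(-1/2,0)$ the Hardy term has the wrong sign), which is exactly why (v) carries that hypothesis; and you do not address the nonnegativity assertion in (ii), though it follows from the same form representation you use for (v).
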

\begin{proof}
Choose $c\in (0,r_1)$. Then the operator $\rh_{R,\gA,q}^{(l)}$ is a rank one perturbation (in the resolvent sense) of the following direct sum operator
\[
\rh^{(l),D}_{R,\gA,q}(0,c)\oplus \rh^{(l),D}_{R,\gA,q}(c,\infty).
\]
Here $\rh_{R,\gA,q}^{(l),D}(\cI)$ denotes the operator obtained by restricting $\rh_{R,\gA,q}^{(l)}$ to the interval $\cI\in\{(0,c),(c,+\infty)\}$ and subject to the Dirichlet boundary conditions at $r=c$.

Further, we observe that the potential
\[
q^{(l)}(r)=\frac{l(l+1)}{r^2}
\]
tends to $0$ as $r\to \infty$ and also is bounded on $(c,+\infty)$ for all $l\ge -1/2$. 
Therefore, Theorem \ref{th_KM} clearly holds for the operator $\rh^{(l),D}_{R,\gA,q}(c,\infty)$.

Again, since $l\ge-1/2$, $q\in L^\infty(0,c)$ and $c<r_1$, the operator $\rh^{(l),D}_{R,\gA,q}(0,c)=\rh^{(l),D}_{q}(0,c)$ is self-adjoint, lower semibounded and has purely discrete spectrum. The proof of the lemma is completed by using the fact that  self-adjointness, lower semiboundedness, discreteness, as well as a continuous spectrum are stable under finite rank perturbations. Moreover, the same arguments prove statements $(iv)$ and $(v)$.
\end{proof}
\begin{remark}
Let us note that condition $(v)$ is only sufficient since finiteness of the negative spectrum of $\rh_{R,\gA}^{(l)}$ is not stable under perturbations by critical potentials $q(r)=\frac{\gamma}{(r+1)^2}$, $\gamma\in \R$ (cf. \cite{Gla65}).
\end{remark}
\begin{remark}\label{rem:3.3}
Clearly, using the same line of reasoning as in the proof of Lemma \ref{lem:_KM}, one can extend Theorems \ref{th_III.1}, \ref{th_discretcriter}, and \ref{thContSpec} to the case of operators $\rh_{R,\gA,q}^{(l)}$ with $l\ge -1/2$.
\end{remark}

\section{Number of negative squares}\label{sec:bstates}

\subsection{The case of an arbitrary measure potential}\label{ss:iv_1M}

Let $l\ge -1/2$. To any finite non-negative Borel measure $\mu$ on $\R_+$ we associate the
quadratic form
    \be\label{0M}
\gt_{-\mu}^{(l)}[f]=\int^{\infty}_0|f'(r)|^2
dr+l(l+1)\int_0^\infty\frac{|f(r)|^2}{r^2}dr-\int^{\infty}_0|f(r)|^2 d\mu(r),
\ee
\be
\dom(\gt^{(l)}_{-\mu})=\dom(\gt^{(l)}_0).
    \ee
Here $\dom(\gt^{(l)}_0)$ denotes the form domain of the Bessel operator $\rh_0^{(l)}=-\frac{d^2}{dr^2}+\frac{l(l+1)}{r^2}$ in $L^2(\R_+)$. Note that the form $\gt_{-\mu}^{(l)}$ is closed (cf. \cite{AKM_10} and Theorem \ref{th_III.1}, Remarks \ref{rem:meas}, \ref{rem:3.3}). Denote by $\rh_{-\mu}^{(l)}=
-\frac{d^2}{dr^2}+\frac{l(l+1)}{r^2}-\mu$ the (Bessel) self--adjoint operator
associated with $\gt_{-\mu}^{(l)}$ in $L^2(\R_+)$.

   \begin{theorem}[Bargmann's bound]\label{th:bargmann}
Let $\mu$ be a finite non-negative Borel measure on $\R_+$. Then 
    \begin{equation}\label{eq:barg}
\kappa_-(\rh_{-\mu}^{(l)})< \begin{cases}\frac{1}{2l+1}\int^{\infty}_0 r d\mu(r), & l>-1/2,\\
\int^{\infty}_0 r|\log(r)| d\mu(r), & l=-1/2.
\end{cases}
    \end{equation}
\end{theorem}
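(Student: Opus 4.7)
The natural tool is the Birman--Schwinger principle. By the extensions of Theorem~\ref{th_III.1} recorded in Remarks~\ref{rem:meas} and \ref{rem:3.3}, the form $\gt_{-\mu}^{(l)}$ is closed on $\dom(\gt_0^{(l)})$ and a finite Borel measure perturbs the free Bessel form with arbitrarily small relative bound (via the one-dimensional Sobolev embedding $H^1\hookrightarrow C_0$). Consequently, $\kappa_-(\rh_{-\mu}^{(l)})$ coincides with the number of eigenvalues strictly greater than $1$ of the positive compact integral operator $K$ on $L^2(\R_+,d\mu)$ whose kernel is the zero-energy Green's function $G_0^{(l)}(r,s)$ of $\rh_0^{(l)}$. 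Since every eigenvalue $\lambda_n(K)>1$ contributes more than $1$ to $\mathrm{tr}(K)$, one obtains the desired strict estimate $\kappa_-(\rh_{-\mu}^{(l)})<\mathrm{tr}(K)$ as soon as $\kappa_-\ge 1$, reducing the theorem to a computation of the trace.

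For $l > -1/2$ the zero-energy equation $-u''+l(l+1)r^{-2}u = 0$ has fundamental solutions $u_1(r)=r^{l+1}$ (regular at $0$, verifying the boundary condition \eqref{eq:bc0} when $l\in[-1/2,1/2)$) and $u_2(r) = r^{-l}$, with constant Wronskian $-(2l+1)$. The standard construction of the half-line Green's function yields
\[
G_0^{(l)}(r,s) \;=\; \frac{1}{2l+1}\,\frac{\min(r,s)^{l+1}}{\max(r,s)^l}, \qquad G_0^{(l)}(r,r) \;=\; \frac{r}{2l+1},
\]
so that $\mathrm{tr}(K) = \int_0^\infty G_0^{(l)}(r,r)\, d\mu(r) = (2l+1)^{-1}\int_0^\infty r\, d\mu(r)$, proving the first branch of \eqref{eq:barg}.

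In the critical case $l=-1/2$ the zero-energy Green's function is ill-defined, since both fundamental solutions $u_1(r)=r^{1/2}$ and $u_2(r)=r^{1/2}\log r$ vanish at $r=0$ and grow at infinity, and $(\rh_0^{(-1/2)})^{-1}$ is unbounded. I would shift the reference energy to $-k^2<0$, where the Green's function can be written explicitly in terms of modified Bessel functions as $\sqrt{rs}\,I_0(kr_<)K_0(kr_>)$, with diagonal $r\,I_0(kr)K_0(kr)$. Applying the Birman--Schwinger bound at this shifted energy, using the small-argument asymptotics $I_0(x)\to 1$ and $K_0(x)\sim -\log(x/2)-\gamma$ together with the large-argument decay $I_0(x)K_0(x)\sim(2x)^{-1}$, and taking $k\to 0^+$ after separating the contributions of $\mu$ near $0$ and near $\infty$, should extract the logarithmic leading term $\int_0^\infty r|\log r|\,d\mu(r)$.

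The principal obstacle is making the Birman--Schwinger correspondence rigorous for a general finite Borel measure, where $\sqrt{\mu}$ is only formal. This can be handled by realizing $K$ as $J^*(\rh_0^{(l)})^{-1}J$ for the natural embedding $J:L^2(d\mu)\hookrightarrow(\dom(\gt_0^{(l)}))^*$, or alternatively by approximating $\mu$ from below by absolutely continuous measures and passing to the limit via monotonicity of $\kappa_-$. The second subtlety is the $l=-1/2$ case, in which the asymptotic analysis of $rI_0(kr)K_0(kr)$ must be carried out with some care --- the logarithmic divergence as $k\to 0$ must be matched against the factor $r|\log r|$ appearing in the bound, and the argument has to remain uniform for measures that concentrate at $0$ or at $\infty$.
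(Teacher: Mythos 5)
For $l>-\tfrac12$ your argument is essentially the paper's second proof: the Birman--Schwinger operator is realized on $L^2(\R_+,d\mu)$ with the zero-energy Green kernel $\frac{1}{2l+1}\min(r,s)^{l+1}\max(r,s)^{-l}$, the count of its eigenvalues exceeding $1$ dominates $\kappa_-(\rh_{-\mu}^{(l)})$, and the strict bound follows from comparing that count with the trace $\frac{1}{2l+1}\int_0^\infty r\,d\mu(r)$, exactly as in \eqref{eq:trace}. Your fallback of approximating $\mu$ from below by absolutely continuous measures and using monotonicity is precisely the paper's first proof (monotone form convergence, strong resolvent convergence, and the classical Bargmann bound for the approximants). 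The only cosmetic difference is that the paper establishes the eigenvalue correspondence at each fixed $\lambda<0$ and then uses monotonicity of $\lambda\mapsto\cK_\mu(\lambda)$ to pass to $\lambda=0$, which yields the inequality $\kappa_-\le\#\{\sigma(\cK_\mu(0))\cap(1,\infty)\}$ without having to assert the zero-energy Birman--Schwinger identity as an equality; only the inequality is needed.

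The genuine gap is the case $l=-\tfrac12$, which you leave as a programme rather than a proof, and the programme as described does not close. If you shift to energy $-k^2$ the diagonal of the Birman--Schwinger kernel is $rI_0(kr)K_0(kr)$, and since $K_0(x)=-\log(x/2)-\gamma+o(1)$ as $x\to0$, one has
\[
rI_0(kr)K_0(kr)=r\bigl(|\log k|-\log r+\log 2-\gamma\bigr)+o(1)\qquad (kr\to 0),
\]
so that $\mathrm{tr}\,\cK_\mu(-k^2)$ contains the term $|\log k|\int_0^\infty r\,d\mu(r)$, which diverges as $k\to0^+$ for every nonzero $\mu$; no separation of the contributions of $\mu$ near $0$ and near $\infty$ removes it (and the surviving finite part involves $-\log r$, not $|\log r|$). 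Hence "taking $k\to0^+$" cannot produce the finite constant $\int_0^\infty r|\log r|\,d\mu(r)$ appearing in \eqref{eq:barg}, and some different idea is required for the critical angular momentum. The paper avoids the issue by working with the specific normalization \eqref{eq:phipsi}, in which $\phi_{-1/2}(0,r)=\sqrt r$ and $\psi_{-1/2}(0,r)=\sqrt r\,|\log r|$ are inserted directly into the same trace computation as for $l>-\tfrac12$; your instinct that the zero-energy Green function is delicate at $l=-\tfrac12$ is sound (the operator $-\frac{d^2}{dr^2}-\frac{1}{4r^2}$ is critical), but as written your renormalization does not converge and the second branch of \eqref{eq:barg} remains unproved in your attempt.
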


\begin{remark}
If $\mu$ is absolutely continuous w.r.t. the Lebesgue measure, $\mu=q(r)dr$, the result is well known and the estimate is called the Bargmann bound and various proofs of this inequality can be found in \cite{Kato66, MMM_92, RedSim78, Sim76, Sim05}. Let us prove it for arbitrary measures.

We shall present two different proofs. The first one is elementary and is based on the classical Bargmann bound. The second one is based on the Birman--Schwinger approach and establishes a connection with the Krein string operators.
\end{remark}

\begin{proof}[The first proof.]   
Choose a nondecreasing sequence $\{q_n\}_{n=1}^\infty$ of functions $q_n:\R_+\to \R_+$ such that 
%
%
       \begin{equation}\label{4.4A}
q_{n+1}\ge q_n, \quad q_n\in L^1(\R_+; r), \quad \text{and}\quad q_n dr \xrightarrow{w} \mu.
          \end{equation}
Clearly, the operator $\rh_n:= \rh_{-q_n}^{(l)} =
-\frac{d^2}{dr^2} + \frac{l(l+1)}{r^2}- q_n$  is self-adjoint and lower semibounded.
Moreover,  it follows from \eqref{4.4A} and \eqref{0M} that
    \be\label{4.5A}
\gt_{n}^{(l)}[f] :=  \gt_{-q_n}^{(l)}[f]
\, \, \searrow \, \,   \gt_{-\mu}^{(l)}[f], \quad \text{as}\quad n\to \infty,
 \ee
for every $f\in\dom(\gt_{0}^{(l)})$, i.e., the forms $\gt_{n}^{(l)}$ approach  the form $\gt_{-\mu}^{(l)}$ from above.
Therefore, by \cite[Theorem VIII.3.11]{Kato66}, the convergence in  \eqref{4.5A} implies that the operators $\rh_{n}^{(l)}$ converge to $\rh_{-\mu}^{(l)}$ in the  strong  resolvent sense. In turn, by \cite[Theorem VIII.5.1]{Kato66},  $E_n(-\infty,0)\to E_{\rh_{-\mu}^{(l)}}(-\infty,0)$  where $E_n(\cdot)$  and  $E_{\rh_{-\mu}^{(l)}}(\cdot)$ are the spectral measures of $\rh_{n}$ and $\rh_{-\mu}^{(l)}$, respectively. Hence $\kappa_-(\rh_{-\mu}^{(l)}) = \kappa_-(\rh_{n}^{(l)})$ for $n$ large enough. Combining this relation with the classical Bargman estimates in the case $l>-1/2$ 
     \begin{equation}\label{eq:bargA}
 \kappa_-(\rh_{n}^{(l)}) \le  \frac{1}{2l+1}\int^{\infty}_0 r\, q_n(r)\, dr
 \le \frac{1}{2l+1}\int^{\infty}_0 r\, d\mu(r),   \quad   n\in\N,
    \end{equation}
we arrive at \eqref{eq:barg}  with $l>-1/2$. The case $l = -1/2$ is considered similarly.
\end{proof}

\begin{proof}[The second proof.]
      If the integral in \eqref{eq:barg} is infinite, then the claim is trivial. So, assume that it is finite.

Firstly, for $\lambda\le 0$ consider the following self-adjoint integral operator $\cK_{\mu}(\lambda)$ acting in $L^2(\R_+,d\mu)$
\be\label{eq:r0}
      (\cK_{\mu}(\lambda) f)(r)=\int_0^\infty K_l(r,s;\lambda)f(s)d\mu(s),
      \ee
where the kernel $K_l$ is the Green function of the unperturbed Bessel operator $\rh_0^{(l)}$, %
\be\label{eq:K_l01}
K_l(r,s;\lambda)=\begin{cases}
\phi_l(\lambda,r)\psi_l(\lambda,s),& r\le s\\
\phi_l(\lambda,s)\psi_l(\lambda,r),& r\ge s
                                                                       \end{cases},\quad l\ge -\frac{1}{2}.
\ee
Here
\begin{align}\label{eq:phipsi}
\phi_l(\lambda,r)=\frac{\Gamma(l+\frac{3}{2})2^{l+1}}{\sqrt{\pi}}\lambda^{-\frac{2l+1}{4}}\sqrt{\frac{\pi r}{2}}J_{l+1/2}(\sqrt{\lambda}r),\\
 \psi_l(\lambda,r)=\I \frac{\sqrt{\pi}}{\Gamma(l+\frac{3}{2})2^{l+1}}\lambda^{\frac{2l+1}{4}}\sqrt{\frac{\pi r}{2}}H^{(1)}_{l+1/2}(\sqrt{\lambda}r),
\end{align}
where $J_\nu(\cdot)$ and $H_\nu(\cdot)$ are the Bessel and the Hankel functions of order $\nu$ (see, e.g., \cite[Chapter 9]{as}).

%

Notice that  the kernel $K_l$ is positive definite if $\lambda\le 0$. Moreover, if  $\lambda_1<\lambda_2\le 0$, then $0\le \cK_\mu({\lambda_1})\le \cK_\mu({\lambda_2})$. 

Since $\mu$ is a finite measure, the identical embedding $L^2(\R_+)\to L^2(\R_+,d\mu)$ is continuous and dense. Therefore, for $\lambda<0$ the operator $\cK_{\mu}(\lambda)$ is bounded on $L^2(\R_+,d\mu)$.
Observe also that (cf. \cite[formulas (9.1.10), (9.1.3) and (9.1.11)]{as})
\[
\phi_l(0,r)=r^{l+1},\quad \psi_l(0,r)=\begin{cases}
\frac{r^{-l}}{2l+1}, & l>-\frac{1}{2}\\
\sqrt{r}|\log(r)|, & l=-\frac{1}{2},
\end{cases},\quad \quad r>0.
\]
Therefore (see \cite[Chapter III.10]{gk}), for all $\lambda\le 0$ the operator $\cK_\mu(\lambda)$ is of trace class and
\begin{align}
{\rm tr}\, \cK_\mu(\lambda)&=\int_0^\infty K_l(r,r;\lambda)d\mu(r)\le {\rm tr}\, \cK_\mu(0) \nonumber\\
&=\int_0^\infty K_l(r,r;0)d\mu(r)=\begin{cases} \frac{1}{2l+1}\int_0^\infty rd\mu(r), & l>-1/2,
\\
\int_0^\infty r|\log(r)| d\mu(r), & l=-1/2.
\end{cases}\label{eq:trace}
\end{align}

By Theorem \ref{thContSpec}, the negative spectrum of $\rh_{-\mu}^{(l)}$ consists of isolated eigenvalues. We show that $-\lambda_0<0$ is the eigenvalue of $\rh_{-\mu}^{(l)}$
if and only if  $1$ is the eigenvalue of $\cK_\mu({-\lambda_0})$.
Indeed, let $\psi_0$ be the eigenfunction of $\rh_{-\mu}^{(l)}$ corresponding to $-\lambda_0$. The latter means that the equality
\[
((\rh_0^{(l)}+\lambda_0)\psi_0, g)_{L^2}=(\psi_0,g)_{L^2_\mu} 
\]
holds true for all $g\in L^2(\R_+)$.
Therefore, we get 
\begin{align*}
(\psi_0,g)_{L^2}&=((\rh_0^{(l)}+\lambda_0)\psi_0,(\rh_0^{(l)}+\lambda_0)^{-1}g)_{L^2}
\\
&=(\psi_0,(\rh_0^{(l)}+\lambda_0)^{-1}g)_{L^2_\mu}=\int_0^\infty\left(\int_0^\infty K_l(r,s;-\lambda_0)g(s)ds\right)\psi_0(r)d\mu(r) \\
    &= \int_0^\infty \left(\int_0^\infty K_l(r,s;-\lambda_0)\psi_0(s)d\mu(s)\right)g(r)dr = (\cK_{\mu}({-\lambda_0}) \psi_0, g)_{L^2}.
\end{align*}
Since $\mu$ is a finite measure,
the identical embedding $L^2(\R_+)\to L^2(\R_+,d\mu)$ is dense.
This implies that $\psi_0$ is the eigenfunction of $\cK_\mu({-\lambda_0})$ corresponding to the eigenvalue $1$. Moreover,
 \begin{equation}\label{4.17}
\dim\ker\bigl (\rh_{-\mu}^{(l)} + \lambda_0\bigr) =  \dim\ker\bigl(I - K_{\mu}(-\lambda_0)\bigr).
  \end{equation}
%
%

Note that for any $g\in L^2(\R_+, d\mu)$ the scalar function $(\mathcal K_{\mu}(\lambda)g, g)$
strictly increases as $\lambda\uparrow 0$.
It easily follows (cf.  \cite[Theorem 4]{DM91}, the first step of the proof) that
%
%
     \begin{equation*}
\kappa_-(\rh_{-\mu}^{(l)}) = \sum_{\lambda<0}\dim\ker\bigl(I-K_{\mu}(\lambda)\bigr) \le  \kappa_-\bigl(I-K_{\mu}(0)\bigr)
       \end{equation*}
Combining this estimate  with the obvious inequality
       \begin{equation*}
\kappa_-\bigl(I-K_{\mu}(0)\bigr) = \# \{\sigma\bigl(K_{\mu}(0)\bigr)\cap(1,\infty)\} < {\rm tr}\, \cK_\mu({0}),
          \end{equation*}
%
%
and taking  formula \eqref{eq:trace} into account we complete the proof.
    \end{proof}
      \begin{remark}
(i) 
 The second proof is a modification of the Birman--Schwinger approach, see, e.g., \cite{MMM_92, RedSim78, Sim76, Sim05}.

(ii)      To the best of our knowledge (see, e.g., \cite{Kato66, RedSim78, Sim76, Sim05}) the inequality in \eqref{eq:barg} is usually  nonstrict, i.e., $\kappa_-(\rh_{-\mu})\le\int^{\infty}_0 r d\mu(r)$. However, it follows from the second proof that the inequality in \eqref{eq:barg} is indeed strict. 
      \end{remark}
\begin{corollary}
Let $q\in L^1_{loc}(\R_+)$ and $\gA$ be such that \eqref{I_brinck_q} and \eqref{I_brinck} are satisfied. Let also $l> -\frac{1}{2}$ and 
   \begin{equation}
\rh_{R,\alpha,q}^{(l)} :=
-\frac{\rD ^2}{\rD r^2}+\frac{l(l+1)}{r^2}+q(r)+\sum^{\infty}_{k=1}\alpha_{k}\delta(r-r_k).
   \end{equation}
Then 
   \begin{equation}
\kappa_-(\rh_{R,\alpha,q}^{(l)})< \frac{1}{2l+1}\Big(\int^{\infty}_0 r|q_-(r)|dr
 + \sum^{\infty}_{k=1}|\alpha^-_k|r_k\Big),
   \end{equation}
   where $q_-(x)=(q(x)-|q(x)|)/2$ and $\alpha_k^-=(\gA_k-|\gA_k|)/2$.
   \end{corollary}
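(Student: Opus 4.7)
The plan is to reduce to the Bargmann bound of Theorem~\ref{th:bargmann} by a standard min-max argument. Under \eqref{I_brinck_q} and \eqref{I_brinck}, both $\rh_{R,\alpha,q}^{(l)}$ and its ``negative-part'' companion $\rh_{R,\alpha^-,q_-}^{(l)}$ are self-adjoint and lower semibounded by Theorem~\ref{th_III.1} together with Remark~\ref{rem:3.3}, and they share a common form core on which the form difference equals $\int_0^\infty q_+(r)|f(r)|^2\,dr+\sum_{k\ge1}\alpha_k^+|f(r_k)|^2\ge0$; by min-max,
$$\kappa_-(\rh_{R,\alpha,q}^{(l)})\le\kappa_-(\rh_{R,\alpha^-,q_-}^{(l)}).$$
With the nonnegative Borel measure
$$\mu:=|q_-(r)|\,dr+\sum_{k\ge1}|\alpha_k^-|\,\delta_{r_k}\quad\text{on }\R_+,$$
the identities $q_-=-|q_-|$ and $\alpha_k^-=-|\alpha_k^-|$ make the quadratic forms of $\rh_{R,\alpha^-,q_-}^{(l)}$ and $\rh_{-\mu}^{(l)}$ coincide, whence the two self-adjoint operators are equal. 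If the right-hand side of the asserted inequality is infinite there is nothing to prove; otherwise $\int_0^\infty r\,d\mu(r)<\infty$ and Theorem~\ref{th:bargmann} applied to $\mu$ yields
$$\kappa_-(\rh_{-\mu}^{(l)})<\frac{1}{2l+1}\int_0^\infty r\,d\mu(r)=\frac{1}{2l+1}\Big(\int_0^\infty r|q_-(r)|\,dr+\sum_{k\ge1}|\alpha_k^-|r_k\Big),$$
which chained with the previous inequality is the claimed bound.

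The only subtle point, and the main (modest) obstacle, is that Theorem~\ref{th:bargmann} is stated for a \emph{finite} nonnegative measure, whereas here $\mu$ is only guaranteed to have finite first moment (its total mass can perfectly well be infinite, e.g.\ when $|\alpha_k^-|r_k\to 0$ slowly while the $r_k$ are widely spaced). This gap is closed by truncation: set $\mu_N:=\mu|_{[0,N]}$, which is a finite measure. The forms $\gt_{-\mu_N}^{(l)}$ decrease pointwise to $\gt_{-\mu}^{(l)}$ on $\dom(\gt_0^{(l)})$, so by \cite[Thm.~VIII.3.11 and Thm.~VIII.5.1]{Kato66} we obtain strong resolvent convergence $\rh_{-\mu_N}^{(l)}\to\rh_{-\mu}^{(l)}$; min-max then gives $\kappa_-(\rh_{-\mu_N}^{(l)})\nearrow\kappa_-(\rh_{-\mu}^{(l)})$, and since every $\kappa_-(\rh_{-\mu_N}^{(l)})$ is bounded by $\frac{1}{2l+1}\int_0^N r\,d\mu\le\frac{1}{2l+1}\int_0^\infty r\,d\mu<\infty$ (Theorem~\ref{th:bargmann} applied to each $\mu_N$), the common limit is finite, equals $\kappa_-(\rh_{-\mu_N}^{(l)})$ for all sufficiently large $N$, and the strict inequality persists in the limit. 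This is the same monotone approximation scheme used in the first proof of Theorem~\ref{th:bargmann}, only applied at the level of the measure rather than of an approximating sequence $q_n$.
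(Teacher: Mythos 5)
Your argument is correct and is exactly the route the paper intends: the paper's own proof is the single line ``immediately follows from Theorem~\ref{th:bargmann}'', and your reduction (discard the positive parts $q_+$, $\alpha_k^+$ by min--max, then identify the remaining operator with $\rh_{-\mu}^{(l)}$ for $\mu=|q_-|\,dr+\sum_k|\alpha_k^-|\delta_{r_k}$) is precisely how that one line is meant to be unpacked. The only remark worth adding is that your truncation step is superfluous: under \eqref{I_brinck_q}--\eqref{I_brinck} the mass of $\mu$ on $[0,1]$ is finite, while its mass on $[1,\infty)$ is dominated by $\int_1^{\infty}r\,d\mu\le\int_0^{\infty}r\,d\mu$, so whenever the right-hand side of the asserted bound is finite the measure $\mu$ is automatically finite and Theorem~\ref{th:bargmann} applies directly.
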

   \begin{proof}
   Immediately follows from Theorem \ref{th:bargmann}.
   \end{proof}



\subsection{The case of a finite number of $\delta$-interactions}\label{ss:iv_1}

In this subsection we restrict ourselves to  the case of finitely many $\delta$-interactions,
\be\label{eq:hl_N}
\rh_{R,\gA}^{(l)}=-\frac{\rD^2}{\rD r^2}+\frac{l(l+1)}{r^2}+\sum_{k=1}^N\gA_k\delta(r-r_k),\qquad N\in\N.
\ee
We also exclude the case $l=-1/2$ in order to avoid cumbersome calculations.

\subsubsection{The boundary triplet and the corresponding Weyl function}

The operator $\rh_{R,\gA}^{(l)}$ also admits the following representation
\[
\rh_{R,\gA}^{(l)}=-\frac{\rD ^2}{\rD r^2}+\frac{l(l+1)}{r^2}+\sum_{k=1}^N\gA_k(\cdot,\delta(r-r_k))\delta(r-r_k).
\]
Note that it is a self-adjoint extension of the following symmetric operator having deficiency indices $(N,N)$
\[
\rh_{\min}^{(l)}=\rh_{R,\gA}^{(l)}\upharpoonright \dom(\rh_{\min}^{(l)}),\quad \dom(\rh_{\min})=\{f\in \dom(\rh^{(l)}_0): \ f(r_1)=...=f(r_N)=0\}. 
\]

Using the asymptotic behavior of Bessel and Hankel functions (cf. \cite[formulas (9.1.10) and (9.2.3)]{as})  
\begin{align*}
J_{l+1/2}(z)\sim \Big(\frac{z}{2}\Big)^{l+\frac{1}{2}},&\ \ z\to 0,\\
 H_{l+1/2}^{(1)}(z)\sim \sqrt{\frac{2}{\pi z}}{\rm e}^{\I(z-\frac{\pi(l+1)}{2})},&\ \ z\to \infty,\ \ (-\pi<{\rm arg}\ z<2\pi),
\end{align*}
we conclude that the defect subspace $\gN_z := \gN_z(\rh_{\min}^{(l)})$ of $\rh_{\min}^{(l)}$ is given by

\begin{equation}
\gN_z = \Span\{f_k(z,r)\}_{k=1}^N,\quad
  f_k(z,r)= \begin{cases}\phi_l(z,r)\psi_l(z,r_k),& r\le r_k\\
\phi_l(z,r_k)\psi_l(z,r),& r\ge r_k
                                                                       \end{cases},
\end{equation}
for any $z\notin\R_+$. Here the functions  $\phi_l$ and $\psi_l$ are defined by \eqref{eq:phipsi}.

First we present a boundary triplet for the operator $(\rh_{\min}^{(l)})^*$ and compute  the corresponding Weyl
function (see  Definitions \ref{def:bt} and \ref{def:wf}).
     \begin{proposition}
     \begin{itemize}
\item[(i)]   The adjoint operator $(\rh_{\min}^{(l)})^*$ is given by the differential expression
$\tau^{(l)}$ on the domain
\begin{align}
\dom\big((\rh_{\min}^{(l)})^*\big)=\{f\in &W^{2,2}(\R_+\setminus R)\cap W^{1,2}(\R_+):\\
&\, f\, \text{satisfies \eqref{eq:bc0}},\, \tau^{(l)}f\in L^2(\R_+)\}.
\end{align}
\item[(ii)]  The triplet  \ $\Pi=\{\C^N,\Gamma_0,\Gamma_1\}$, where  $\Gamma_0, \Gamma_1:\dom\big((\rh_{\min}^{(l)})^*\big)\to\C^N$ are given by
\[
 \Gamma_0f=\left(\begin{array}{c}
                            f'(r_1-)-f'(r_1+)\\
                            f'(r_2-)-f'(r_2+)\\
                            \dots\\
                            f'(r_N-)-f'(r_N+)\end{array}\right),\qquad
                                           \Gamma_1f=\left(\begin{array}{c}
                                           f(r_1)\\
                                           f(r_2)\\
                                           \dots\\
                                           f(r_N)\end{array}\right),
\]
forms  a boundary triplet for the operator $(\rh_{\min}^{(l)})^*$.
\item[(iii)] The corresponding Weyl function is given by 
\be\label{eq:33.04}
M_l(z)=\big(f_k(z,r_j)\big)_{j,k=1}^{N},\quad z\notin\R_+.
\ee
\item[(iv)]  The domain of the operator  $\rh_{R,\gA}^{(l)}$ admits the following representation 
\be\label{eq:42.04}
\dom(\rh_{R,\gA}^{(l)})=\ker(\Gamma_1+\Lambda^{-1}\Gamma_0),\quad \Lambda=\diag(\gA_1,\dots,\gA_N).
\ee
\end{itemize}
   \end{proposition}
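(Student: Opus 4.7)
The plan is to handle the four parts of the proposition in order, as each part builds on the previous one.

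For part (i), I would proceed by a standard integration-by-parts argument. Since $\rh_{\min}^{(l)}$ is obtained from the self-adjoint Bessel operator $\rh_0^{(l)}$ by imposing the $N$ linear constraints $f(r_1)=\dots=f(r_N)=0$, its adjoint is an $N$-parameter extension. One direction (inclusion $\supseteq$) follows by a direct computation: for $g\in\dom(\rh_{\min}^{(l)})$ (supported in $\R_+$ and vanishing at each $r_k$) and any $f$ in the proposed adjoint domain, integrate by parts on each subinterval $(r_{k-1},r_k)$; all boundary terms cancel because $g(r_k)=g'(r_k)=0$. For the reverse inclusion, one shows that if $\tau^{(l)}f\in L^2(\R_+)$ in the distributional sense and $f$ pairs correctly against the minimal operator, then $f$ must be $W^{2,2}$ on each $[r_{k-1},r_k]$, continuous at each $r_k$ (so globally $W^{1,2}$), and satisfy \eqref{eq:bc0}.

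For part (ii), the plan is to verify the abstract Green's identity
\[
\langle (\rh_{\min}^{(l)})^* f,g\rangle - \langle f,(\rh_{\min}^{(l)})^*g\rangle = \langle\Gamma_1 f,\Gamma_0 g\rangle - \langle\Gamma_0 f,\Gamma_1 g\rangle
\]
and the surjectivity of $(\Gamma_0,\Gamma_1):\dom((\rh_{\min}^{(l)})^*)\to\C^N\oplus\C^N$. Integration by parts on each subinterval $(r_{k-1},r_k)$ produces boundary terms at each internal $r_k$; using continuity of $f,g$ at $r_k$, these collect into exactly the sum $\sum_k[(\Gamma_1 f)_k(\overline{\Gamma_0 g})_k - (\Gamma_0 f)_k(\overline{\Gamma_1 g})_k]$. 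The boundary contribution at $r=0$ vanishes thanks to \eqref{eq:bc0} (for $l\in[-1/2,1/2)$) and limit-point behavior (for $l\geq 1/2$); the contribution at $\infty$ vanishes by the limit-point property of $\tau^{(l)}$ at infinity (since the potential $l(l+1)/r^2$ is in the limit-point case there). Surjectivity is obtained by constructing, for each $k$, compactly supported smooth bumps near $r_k$ having prescribed value and derivative-jump at $r_k$ and vanishing identically near the other $r_j$.

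For part (iii), the key is to identify the defect subspace. On each subinterval a solution of $\tau^{(l)}u=zu$ is a linear combination of $\phi_l(z,\cdot)$ and $\psi_l(z,\cdot)$; imposing \eqref{eq:bc0} forces $u=\text{const}\cdot\phi_l$ on $(0,r_1)$, imposing $L^2$ at infinity (valid for $z\notin\R_+$) forces $u=\text{const}\cdot\psi_l$ on $(r_N,\infty)$, and continuity at each $r_k$ gives $N$ conditions, leaving an $N$-dimensional space. The functions $f_k(z,\cdot)$ manifestly belong to $\gN_z$ and are linearly independent. A single Wronskian computation, using the normalization of $\phi_l,\psi_l$ in \eqref{eq:phipsi} that is designed so that $W(\phi_l(z,\cdot),\psi_l(z,\cdot))$ is a $z$-independent constant (equal to $\pm 1$ as verified at $z=0$ from the explicit asymptotics $\phi_l(0,r)=r^{l+1}$, $\psi_l(0,r)=r^{-l}/(2l+1)$), gives $(\Gamma_0 f_k(z,\cdot))_j=\delta_{jk}$, while by definition $(\Gamma_1 f_k(z,\cdot))_j=f_k(z,r_j)$. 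The defining relation $M_l(z)\Gamma_0 f = \Gamma_1 f$ of the Weyl function then yields \eqref{eq:33.04}. Part (iv) is a direct translation: the interface condition $f'(r_k+)-f'(r_k-)=\gA_k f(r_k)$ reads $-(\Gamma_0 f)_k = \gA_k (\Gamma_1 f)_k$, equivalent to $\Gamma_1 f + \Lambda^{-1}\Gamma_0 f = 0$.

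The main technical obstacle I anticipate is the Wronskian bookkeeping in part (iii): one has to confirm that the $z$-dependent prefactors in \eqref{eq:phipsi} are exactly those that make $W(\phi_l(z,\cdot),\psi_l(z,\cdot))$ independent of both $r$ and $z$, so that the simple identity $(\Gamma_0 f_k)_j=\delta_{jk}$ holds on the entire upper/lower half-plane. A secondary subtlety is handling the boundary behavior at the origin in Green's identity when $l\in(-1/2,1/2)$; this requires showing that $r^l((l+1)f-rf')$ tends to $0$ automatically for functions in $\dom((\rh_{\min}^{(l)})^*)$, which follows from standard Frobenius analysis of the Bessel equation combined with $\tau^{(l)}f\in L^2$ near $r=0$.
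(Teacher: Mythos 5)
Your proposal is correct and follows essentially the same route as the paper: the paper declares (i), (ii), (iv) straightforward and proves only (iii), by expanding $f_z=\sum_k c_k f_k(z,\cdot)$ in the defect subspace and using the Wronskian normalization $\phi_l\psi_l'-\phi_l'\psi_l=-1$ to get $\Gamma_0 f_z=c$ and hence $\Gamma_1 f_z=M_l(z)c$ with $M_l(z)=(f_k(z,r_j))_{j,k}$, exactly as you do. Your additional verifications (Green's identity, surjectivity via local bumps, the limit-point behavior at $0$ and $\infty$, and the $z$-independence of the Wronskian) are the standard fillings-in of the steps the paper omits.
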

    \begin{proof}
(i), (ii) and (iv) are straightforward. 

Let us prove (iii). By Definition \ref{def:wf}, the Weyl function is defined by 
\be\label{eq:wf}
\Gamma_1f_z=M(z)\Gamma_0f_z,\quad f_z\in\gN_z,\quad z\notin\R_+.
\ee
Setting
\be
f_z:=\sum_{k=1}^Nc_kf_k(z,r),\quad c=(c_1,...c_2)\in\C^N,
\ee
we obtain
  \begin{equation}\label{4.20}
\Gamma_1f_z=\left(\begin{array}{c}
                                           \sum_{k} c_kf_k(z,r_1)\\
                                           \sum_{k} c_kf_k(z,r_2)\\
                                           \dots\\
                                           \sum_{k} c_kf_k(z,r_N)\end{array}\right).
\end{equation}
Moreover, since
\begin{align*}
f_z'(r_j+)&-f_z'(r_j-)=\sum_{k=1}^Nc_k(f_k'(z,r_j+)-f_k'(z,r_j-))\\
&=c_j(\phi_l(z,r_j)\psi_l'(z,r_j)-\phi_l'(z,r_j)\psi_l(z,r_j))+\sum_{k\neq j}c_k\cdot 0=-c_j,
\end{align*}
we see that
 \begin{equation}\label{4.21}
\Gamma_0 f_z=
\left(\begin{array}{c}
                                           c_1\\
                                           c_2\\
                                           \dots\\
                                           c_N\end{array}\right).
 \end{equation}
Combining \eqref{4.20}  with \eqref{4.21} we arrive at  
 \eqref{eq:33.04}.
  \end{proof}
      \begin{remark}
Note that the Weyl function \eqref{eq:33.04} can be  represented in the following form
\be\label{eq:33.06}
M_l(z)=((\widetilde{\rh}^{(l)}_0-z)^{-1}\overrightarrow{\phi},\overrightarrow{\phi}), \quad \overrightarrow{\phi}=(\delta_1,\dots,\delta_N),
\ee
where $\delta_k:=\delta(\cdot-r_k)$ and $\widetilde{\rh}^{(l)}_0$ is the $[\gH_1,\gH_{-1}]$--continuation of the unperturbed Bessel operator $\rh^{(l)}_0$. Indeed, since
$(\rh^{(l)}_0-z)^{-1}=\cK$, where $\cK:=\cK_\mu^0$ is the integral operator \eqref{eq:r0}--\eqref{eq:K_l01} with $d\mu=dr$,
it is not difficult to check that $M_l$ given by \eqref{eq:33.06} coincides with \eqref{eq:33.04}.
\end{remark}

\subsubsection{The number of negative eigenvalues}
Using \eqref{eq:42.04} and \eqref{eq:33.04} and applying Proposition \ref{prop:a.1}, we arrive at the following equality
\[
\kappa_-(\rh_{R,\gA}^{(l)})=\kappa_-(-\Lambda^{-1}-M_l(0))-\kappa_-(-\Lambda^{-1}-M_l(-\infty)),
\]
where
\[
M_l(0):=\lim_{\lambda\uparrow-0}M_l(\lambda),\quad M_l(-\infty):=\lim_{\lambda\downarrow-\infty}M_l(\lambda).
\]
Using \cite[formulas (9.2.1) and (9.2.3)]{as}, it is not difficult to see that $M_l(-\infty)=\bold{0}$. Moreover, using \cite[formulas (9.1.7) and (9.1.9)]{as}, we obtain
\be\label{eq:33.05}
M_l(0):=\frac{1}{2l+1}\left(\begin{array}{ccccc} r_1 & r_1^{l+1}r_2^{-l} & r_1^{l+1}r_3^{-l} & \dots  &r_1^{l+1}r_N^{-l}\\
r_1^{l+1}r_2^{-l} & r_2 & r_2^{l+1}r_3^{-l} & \dots &r_2^{l+1}r_N^{-l}\\
r_1^{l+1}r_3^{-l} & r_2^{l+1}r_3^{-l} & r_3 & \dots &r_3^{l+1}r_N^{-l}\\
 \dots& \dots& \dots& \dots & \dots \\
r_1^{l+1}r_N^{-l} & r_2^{l+1}r_N^{-l} & r_3^{l+1}r_N^{-l} & \dots & r_N\\
\end{array}\right).
\ee
Finally, noting that $\kappa_-(-\Lambda^{-1})=\kappa_+(\Lambda)=\kappa_+(\gA)$, we prove the following result.
%
%
   \begin{theorem}\label{th:kappa-N}
Let $\rh_{R,\gA}^{(l)}$ be the operator given by \eqref{eq:hl_N}. Then
\be\label{eq:32.01}
\kappa_-(\rh_{R,\gA}^{(l)})=\kappa_+(M_{R,\gA}^l)-\kappa_+(\gA),
\ee
where
\begin{align}
 M_{R,\gA}^l&=(2l+1)(\Lambda^{-1}+M_l(0))\nonumber\\
 &=\left(\begin{array}{ccccc} \frac{2l+1}{\gA_1}+r_1 & r_1^{l+1}r_2^{-l} & r_1^{l+1}r_3^{-l} & \dots  & r_1^{l+1}r_N^{-l}\\
r_1^{l+1}r_2^{-l} & \frac{2l+1}{\gA_2}+r_2 & r_2^{l+1}r_3^{-l} & \dots &r_2^{l+1}r_N^{-l}\\
r_1^{l+1}r_3^{-l} & r_2^{l+1}r_3^{-l} & \frac{2l+1}{\gA_3}+r_3 & \dots &r_3^{l+1}r_N^{-l}\\
 \dots& \dots& \dots& \dots & \dots \\
r_1^{l+1}r_N^{-l} & r_2^{l+1}r_N^{-l} & r_3^{l+1}r_N^{-l} & \dots & \frac{2l+1}{\gA_N}+r_N\\
\end{array}\right).\label{eq:32.02}
\end{align}
\end{theorem}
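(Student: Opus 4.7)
The plan is to reduce the computation of $\kappa_-(\rh_{R,\gA}^{(l)})$ to a signature count of two $N\times N$ matrices by invoking the boundary triplet machinery already developed and then evaluating the Weyl function $M_l$ at its boundary values. By \eqref{eq:42.04}, the extension $\rh_{R,\gA}^{(l)}$ corresponds to the proper boundary condition $\Gamma_1 f + \Lambda^{-1}\Gamma_0 f = 0$ in the triplet $\Pi$, and the reference extension $\ker\Gamma_0$ is the free Bessel operator $\rh_0^{(l)}$, which is non-negative. The abstract counting identity from Proposition~\ref{prop:a.1} then yields
\[
\kappa_-(\rh_{R,\gA}^{(l)}) = \kappa_-\bigl(-\Lambda^{-1} - M_l(0)\bigr) - \kappa_-\bigl(-\Lambda^{-1} - M_l(-\infty)\bigr),
\]
where the two matrices on the right-hand side are the strong limits $M_l(0):=\lim_{\lambda\uparrow 0}M_l(\lambda)$ and $M_l(-\infty):=\lim_{\lambda\downarrow-\infty}M_l(\lambda)$.

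Next, I would evaluate these two matrix limits directly from \eqref{eq:33.04} and \eqref{eq:phipsi}. For $\lambda\downarrow -\infty$ the parameter $\sqrt{\lambda}$ moves along the positive imaginary axis, and the asymptotic $H^{(1)}_{l+1/2}(\sqrt{\lambda}r) \sim \sqrt{2/(\pi\sqrt{\lambda}r)}\,{\rm e}^{\I(\sqrt{\lambda}r - \pi(l+1)/2)}$ forces exponential decay of each entry $f_k(\lambda, r_j) = \phi_l(\lambda,\min(r_j,r_k))\psi_l(\lambda,\max(r_j,r_k))$, so $M_l(-\infty)=\mathbf{0}$. Since $\Lambda=\diag(\gA_1,\ldots,\gA_N)$, this gives $\kappa_-(-\Lambda^{-1}-M_l(-\infty)) = \kappa_-(-\Lambda^{-1}) = \kappa_+(\Lambda) = \kappa_+(\gA)$. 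For $\lambda\uparrow 0$ the leading behavior $J_{l+1/2}(z)\sim (z/2)^{l+1/2}/\Gamma(l+3/2)$ together with the corresponding expansion of $H^{(1)}_{l+1/2}$ at the origin yields $\phi_l(0,r) = r^{l+1}$ and $\psi_l(0,r) = r^{-l}/(2l+1)$, from which substitution into $M_l(0)_{jk} = \phi_l(0,\min(r_j,r_k))\psi_l(0,\max(r_j,r_k))$ produces exactly the matrix \eqref{eq:33.05}.

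The final step is purely algebraic. Since $l>-1/2$ we have $2l+1>0$, so
\[
-\Lambda^{-1} - M_l(0) = -\frac{1}{2l+1}M_{R,\gA}^l,
\]
and consequently $\kappa_-(-\Lambda^{-1}-M_l(0)) = \kappa_+(M_{R,\gA}^l)$. Inserting both boundary values into the counting identity above yields \eqref{eq:32.01}.

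The main obstacle is the pair of limit computations in the middle step. Verifying $M_l(\lambda)\to \mathbf{0}$ as $\lambda\downarrow -\infty$ requires controlling the Hankel decay uniformly on the finite set $\{r_1,\ldots,r_N\}$ with enough strength to apply Proposition~\ref{prop:a.1} at the endpoint; and computing $M_l(0)$ requires interchanging a limit with the explicit Green's kernel, so that $M_l(0)$ coincides with the closed-form matrix \eqref{eq:33.05} rather than merely a formal symbol. The exclusion $l\neq -1/2$ is essential precisely here, because $\psi_l(0,r)$ degenerates to $\sqrt{r}|\log r|$ in that case and the algebraic step $-\Lambda^{-1}-M_l(0) = -(2l+1)^{-1}M_{R,\gA}^l$ no longer has the form used above.
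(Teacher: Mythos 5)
Your proposal follows essentially the same route as the paper: apply Proposition \ref{prop:a.1} to the boundary triplet of Section 4.2.1 with $B=-\Lambda^{-1}$, evaluate the limits $M_l(-\infty)=\mathbf{0}$ and $M_l(0)$ from the Bessel/Hankel asymptotics, and finish with the observation that $-\Lambda^{-1}-M_l(0)=-(2l+1)^{-1}M_{R,\gA}^l$ and $\kappa_-(-\Lambda^{-1})=\kappa_+(\gA)$. The argument is correct and matches the paper's proof step for step.
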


As an immediate corollary we get the following statement.

\begin{corollary}\label{cor:4.6}
Let $\rh_{R,\gA}^{(l)}$ be the operator given by \eqref{eq:hl_N}. Then
\be\label{eq:32.01B}
\kappa_-(\rh_{R,\gA}^{(l)})\le \kappa_-(\gA).
\ee
\end{corollary}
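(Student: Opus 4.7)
The plan is to read off the inequality directly from Theorem \ref{th:kappa-N}. Since $M_{R,\gA}^l$ is an $N\times N$ Hermitian matrix, the count of its positive eigenvalues is trivially bounded by its size:
\[
\kappa_+(M_{R,\gA}^l)\le N.
\]
Under the natural running assumption $\gA_k\neq 0$ for all $k$ (needed already for $\Lambda^{-1}$ to appear in \eqref{eq:42.04} and \eqref{eq:32.02}), the diagonal matrix $\Lambda=\diag(\gA_1,\dots,\gA_N)$ has no zero eigenvalue, so
\[
\kappa_+(\gA)+\kappa_-(\gA)=N.
\]
Combining these two observations with identity \eqref{eq:32.01},
\[
\kappa_-(\rh_{R,\gA}^{(l)})=\kappa_+(M_{R,\gA}^l)-\kappa_+(\gA)\le N-\kappa_+(\gA)=\kappa_-(\gA),
\]
which is exactly \eqref{eq:32.01B}.

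There is essentially no obstacle: the only thing worth mentioning is the degenerate case in which some $\gA_k=0$. In that case the $k$-th $\delta$-interaction is absent, so one may simply delete it from the sum in \eqref{eq:hl_N} and rerun the argument with $N$ replaced by $N-\#\{k:\gA_k=0\}$; the estimate \eqref{eq:32.01B} remains valid with the same interpretation $\kappa_-(\gA):=\#\{k:\gA_k<0\}$.
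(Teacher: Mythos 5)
Your argument is exactly the paper's: it combines the identity of Theorem \ref{th:kappa-N} with the two observations $\kappa_+(M_{R,\gA}^l)\le N$ and $\kappa_+(\gA)=N-\kappa_-(\gA)$. The extra remark about the degenerate case $\gA_k=0$ is a sensible clarification but does not change the substance; the proof is correct.
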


\begin{proof}
It suffices to note that $\kappa_+(M_{R,\gA}^l)\le N$ and $\kappa_+(\gA)=N-\kappa_-(\gA)$. Therefore, using \eqref{eq:32.01}, we arrive at \eqref{eq:32.01B}.
\end{proof}

\begin{corollary}\label{cor:4.7}
Let $\kappa_-(\gA)=n\le N$. Denote the negative intensities by $\gA^-=\{\gA_k^-\}_{k=1}^n$ and the corresponding  centers by $R^-=\{r_k^-\}_{k=1}^n$. Then
\be
 \kappa_-(\rh_{R,\gA}^{(l)})\le\kappa_+(M_{R^-,\gA^-}^l),
 \ee
 where
 \be
M_{R^-,\gA^-}^l=\left(\begin{array}{cccc} \frac{2l+1}{\gA_1^-}+r_1^- & (r_1^-)^{l+1}(r_2^-)^{-l} & \dots  & (r_1^-)^{l+1}(r_n^-)^{-l}\\
(r_1^-)^{l+1}(r_2^-)^{-l} & \frac{2l+1}{\gA_2^-}+r_2^- &  \dots & (r_2^-)^{l+1}(r_n^-)^{-l}\\
  \dots& \dots& \dots & \dots \\
(r_1^-)^{l+1}(r_n^-)^{-l} & (r_2^-)^{l+1}(r_n^-)^{-l} &  \dots & \frac{2l+1}{\gA_n^-}+r_n^-\\
\end{array}\right).\label{eq:33.03}
\ee
\end{corollary}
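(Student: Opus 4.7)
The plan is to reduce the bound to Theorem \ref{th:kappa-N} applied to the smaller operator $\rh_{R^-,\gA^-}^{(l)}$ that retains only the attractive $\delta$-interactions. The key observation is that passing from the full configuration $(R,\gA)$ to the restricted configuration $(R^-,\gA^-)$ amounts to removing a non-negative (repulsive) perturbation, which can only increase the number of negative eigenvalues. Combined with the observation that Theorem \ref{th:kappa-N} takes a particularly simple form when all intensities have the same sign, the corollary follows in essentially one step.

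More precisely, both operators $\rh_{R,\gA}^{(l)}$ and $\rh_{R^-,\gA^-}^{(l)}$ can be defined via the same form domain, namely that of the unperturbed Bessel form $\gt_0^{(l)}$, since for $l>-\frac{1}{2}$ the point evaluations $f\mapsto f(r_k)$ are continuous on $\dom(\gt_0^{(l)})$ and the $\delta$-interactions enter as form-bounded perturbations. The associated quadratic forms differ by
\begin{equation*}
\gt_{R,\gA}^{(l)}[f]-\gt_{R^-,\gA^-}^{(l)}[f]=\sum_{\gA_k>0}\gA_k\,|f(r_k)|^2\ge 0,
\end{equation*}
so $\rh_{R^-,\gA^-}^{(l)}\le\rh_{R,\gA}^{(l)}$ in the form order, and the min-max principle yields
\begin{equation*}
\kappa_-(\rh_{R,\gA}^{(l)})\le\kappa_-(\rh_{R^-,\gA^-}^{(l)}).
\end{equation*}

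It remains to apply Theorem \ref{th:kappa-N} to $\rh_{R^-,\gA^-}^{(l)}$. Since all $n$ intensities of this operator are negative, the diagonal matrix $\Lambda^-=\diag(\gA_1^-,\ldots,\gA_n^-)$ has $\kappa_+(\gA^-)=0$, whence
\begin{equation*}
\kappa_-(\rh_{R^-,\gA^-}^{(l)})=\kappa_+(M_{R^-,\gA^-}^l)-\kappa_+(\gA^-)=\kappa_+(M_{R^-,\gA^-}^l),
\end{equation*}
with $M_{R^-,\gA^-}^l$ exactly the $n\times n$ matrix displayed in \eqref{eq:33.03}. Combining the two displays yields the stated bound. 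The only technical point is the form-boundedness of the point evaluations on $\dom(\gt_0^{(l)})$, but this is already built into the framework of Section \ref{sec:bstates} (applied to the finite measure $\mu=\sum_k|\gA_k^-|\,\delta_{r_k}$ for the attractive part and to a trivially bounded positive perturbation for the repulsive part); modulo that observation, there is no real obstacle and the proof is a one-step reduction to Theorem \ref{th:kappa-N}.
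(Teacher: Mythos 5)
Your proposal is correct and follows essentially the same route as the paper: the paper's proof is precisely the two-step reduction $\kappa_-(\rh_{R,\gA}^{(l)})\le \kappa_-(\rh_{R^-,\gA^-}^{(l)})$ followed by an application of Theorem \ref{th:kappa-N} to $\rh_{R^-,\gA^-}^{(l)}$. You merely spell out the monotonicity step (dropping the repulsive $\delta$-terms decreases the quadratic form, so min--max gives the inequality) and the observation that $\kappa_+(\gA^-)=0$, both of which the paper leaves implicit.
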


\begin{proof}
To prove the claim it suffices to note that
\[
\kappa_-(\rh_{R,\gA}^{(l)})\le \kappa_-(\rh_{R^-,\gA^-}^{(l)}),
\]
and then to apply Theorem \ref{th:kappa-N} to the operator $\rh_{R^-,\gA^-}^{(l)}$.
\end{proof}


\begin{example}\label{ex:N=1}
Assume that $N=1$, i.e., consider the Hamiltonian
\be
\rh_1^{(l)}:=-\frac{d^2}{dr^2}+\frac{l(l+1)}{r^2}+\gA_1\delta(r-r_1),
\ee
where $\gA_1<0$ and $r_1>0$.
By Theorem \ref{th:kappa-N}, we obtain
\be
\kappa_-(\rh_1^{(l)})=\begin{cases}
0, & |\gA_1|r_1\le 2l+1,\\
1, & |\gA_1|r_1>2l+1.
\end{cases}
\ee
\end{example}

\begin{example}\label{ex:N=2}
Assume now that $N=2$, i.e., consider the Hamiltonian
\be
\rh_2^{(l)}:=-\frac{d^2}{dr^2}+\frac{l(l+1)}{r^2}+\gA_1\delta(r-r_1)+\gA_2\delta(r-r_2),
\ee
where $\gA_j\in \R\setminus\{0\}$ and $r_2>r_1>0$.

If $\gA=\{\gA_1,\gA_2\}\subset(0,+\infty)$, then, by Corollary \ref{cor:4.6}, $\kappa_-(\rh_2^{(l)})=0$.

Assume that either $\gA_1<0$ or $\gA_2<0$, that is $\kappa_-(\gA)=1$. Then, by Theorem \ref{th:kappa-N},
\be
\kappa_-(\rh_2^{(l)})=\begin{cases}
0, & \det\, M_2^l\le 0,\\
1, & \det\, M_2^l> 0.
\end{cases}
\ee
Here
\be\label{eq:m2}
M_2^l=\begin{pmatrix}
\frac{2l+1}{\gA_1}+r_1 & r_1^{l+1}r_2^{-l} \\
 r_1^{l+1}r_2^{-l} & \frac{2l+1}{\gA_2}+r_2
\end{pmatrix}.
\ee

If both $\gA_1$ and $\gA_2$ are negative, then
\be
\kappa_-(\rh_2^{(l)})=\kappa_+(M_2^l).
\ee
In particular, $\kappa_-(\rh_2^{(l)})=0$ precisely if
\be\label{eq:4.38}
|\gA_j|r_j<2l+1,\quad (j=1,2),\quad \text{and}\quad \left(\frac{2l+1}{\gA_1}+r_1\right) \left(\frac{2l+1}{\gA_2}+r_2\right)>\frac{r_1^{2l+2}}{r_2^{2l}}.
\ee
Moreover,
$\kappa_-(\rh_2^{(l)})=2$ if and only if
\be\label{eq:4.39}
|\gA_j|r_j\ge 2l+1,\quad (j=1,2),\quad \text{and}\quad \left(\frac{2l+1}{\gA_1}+r_1 \right) \left(\frac{2l+1}{\gA_2}+r_2 \right)<\frac{r_1^{2l+2}}{r_2^{2l}}.
\ee
If $\gA_1,\gA_2$ and $r_1,r_2$ do not satisfy neither \eqref{eq:4.38} nor \eqref{eq:4.39}, then $\kappa_-(\rh_2^l)=1$.
\end{example}

\begin{example}\label{ex:N>2}
Assume now that $N\ge 3$. Assume also that $\gA_k>0$ for all $k\neq 2$ and $\gA_2<0$. Therefore, $\kappa_+(\gA)=N-1$. By  Corollary \ref{cor:4.6}, $\kappa_+(M_{R,\gA}^l)\ge N-1$. In particular, the operator $\rh_{R,\gA}$ is positive if and only if $\kappa_+(M_{R,\gA}^l)= N-1$. Furthermore, if $\kappa_-(M_2^l)=1$, where $M_2^l$ is given by \eqref{eq:m2}, then the operator $\rh_{R,\gA}$ is positive. However, this condition is only sufficient. 
\end{example}


\subsubsection{Some necessary and sufficient conditions}
We begin with the following simple but useful result. 

\begin{corollary}\label{cor:4.10}
(i) For the Hamiltonian $\rh_{R,\gA}^{(l)}$ to have a maximal number of
negative squares, i.e., for the equality $\kappa_-(\rh_{R,\gA}^{(l)})= N$  to hold,
it is necessary  that  $\gA=\gA^-$ and
  \begin{equation}\label{4.40}
 |\alpha_k|r_k> 2l+1, \qquad   k\in\{1,\ldots, N\}.
      \end{equation}
In particular, $\kappa_-(\rh_{R,\gA}^{(l)})\le N-1$ if at least of one of the inequalities in \eqref{4.40} is not satisfied.

(ii) If $\gA=\gA^-$ and the Hamiltonian $\rh_{R,\gA}^{(l)}$ is positive, i.e.,
$\kappa_-(\rh_{R,\gA}^{(l)}) = 0$, then 
   \begin{equation}\label{4.41}
|\gA_k|r_k \le {2l+1},  \qquad k\in\{1,\ldots, N\}.
    \end{equation}
In particular, $\kappa_-(\rh_{R,\gA}^{(l)})\ge 1$ if $\gA=\gA^-$ and at least of one of the inequalities in \eqref{4.41} is not satisfied.
  \end{corollary}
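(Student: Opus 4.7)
The plan is to derive both assertions as direct consequences of the identity
\[
\kappa_-(\rh_{R,\gA}^{(l)})=\kappa_+(M_{R,\gA}^l)-\kappa_+(\gA)
\]
established in Theorem~\ref{th:kappa-N}, combined with the elementary principle that for a Hermitian $N\times N$ matrix $A$, positive definiteness forces every diagonal entry $A_{kk}=(Ae_k,e_k)$ to be strictly positive, while negative semidefiniteness forces every diagonal entry to be non-positive. Throughout I may assume $\gA_k\neq 0$ for every $k$ (otherwise the corresponding row/column of $M_{R,\gA}^l$ is undefined and the $k$-th interaction drops out), so that $\kappa_+(\gA)+\kappa_-(\gA)=N$.

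For part (i), the plan is to start from the trivial bound $\kappa_+(M_{R,\gA}^l)\le N$, which together with Theorem~\ref{th:kappa-N} yields $\kappa_-(\rh_{R,\gA}^{(l)})\le N-\kappa_+(\gA)$. The hypothesis $\kappa_-(\rh_{R,\gA}^{(l)})=N$ therefore simultaneously forces $\kappa_+(\gA)=0$ (equivalently $\gA=\gA^-$) and $\kappa_+(M_{R,\gA}^l)=N$. The latter says $M_{R,\gA}^l$ is positive definite, so inspecting the diagonal of \eqref{eq:32.02} gives $r_k+\frac{2l+1}{\gA_k}>0$ for each $k$; since every $\gA_k$ is negative, this is precisely \eqref{4.40}. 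The subsidiary statement about $\kappa_-\le N-1$ is then obtained by contraposition.

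For part (ii), the hypothesis $\gA=\gA^-$ gives $\kappa_+(\gA)=0$, so Theorem~\ref{th:kappa-N} collapses to the identity $\kappa_-(\rh_{R,\gA}^{(l)})=\kappa_+(M_{R,\gA}^l)$. Positivity of the Hamiltonian then forces $\kappa_+(M_{R,\gA}^l)=0$, i.e.\ $M_{R,\gA}^l\le 0$; reading off the diagonal entries of \eqref{eq:32.02} gives $r_k+\frac{2l+1}{\gA_k}\le 0$ for every $k$, which is equivalent to \eqref{4.41} because $\gA_k<0$. The final clause again follows by contraposition. The one conceptual point worth double-checking is that with the convention $\kappa_\pm(T)=\dim\ran(T_\pm)$ adopted in the paper, the equalities $\kappa_+(M)=N$ and $\kappa_+(M)=0$ really correspond to strict positive definiteness and negative semidefiniteness respectively; this is immediate from the spectral theorem, so no serious obstacle is expected.
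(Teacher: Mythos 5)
Your proof is correct and follows essentially the same route as the paper: both parts are read off from the identity $\kappa_-(\rh_{R,\gA}^{(l)})=\kappa_+(M_{R,\gA}^l)-\kappa_+(\gA)$ of Theorem \ref{th:kappa-N}, with the conclusion extracted from the diagonal entries $(M^l_{R,\gA}e_k,e_k)=\frac{2l+1}{\gA_k}+r_k$ of the positive definite (resp.\ negative semidefinite) matrix $M_{R,\gA}^l$. The sign bookkeeping and the remark on the meaning of $\kappa_+(M)=N$ versus $\kappa_+(M)=0$ are both accurate, so nothing further is needed.
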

\begin{proof}
(i)  By Theorem \ref{th:kappa-N},   $\kappa_-(\rh_{R,\gA}^{(l)})= N$ if and only if $\kappa_+(M_{R,\gA}^l) = N$ and $\kappa_+(\alpha)=0$, that is, all $\gA_k\le 0$ and the matrix $M_{R,\gA}^l$ is positive,  $M_{R,\gA}^l > 0$. Therefore, $\frac{2l+1}{\alpha_k}+r_k=(M^l_{R,\gA} e_k, e_k)> 0$, where $\{e_k\}_{k=1}^N$ is the standard orthonormal basis of $\C^N$.

(ii) Since $\alpha=\alpha_-$, Theorem \ref{th:kappa-N}  yields the equality $\kappa_+(M_{R,\gA}^l)=0$, i.e., the matrix $M_{R,\gA}^l$ is nonpositive, $M_{R,\gA}^l\le 0$. Hence all its diagonal entries are nonpositive and hence we arrive at \eqref{4.41}.
  \end{proof}

\begin{remark}
Notice that \eqref{4.41} is no longer necessary for the positivity of $\rh_{R,\gA}^{(l)}$ if $\gA\neq \gA^-$. Namely, let $l=0$  and set $N=2$, $\gA_1=-2$, $\gA_2=2$ and $r_1=1$, $r_2=3/2$. Hence $\kappa_+(\gA)=1$ and
\[
M_{R,\gA}^0=\begin{pmatrix}
\frac{1}{2} & 1 \\
 1 & 2
\end{pmatrix}.
\]
Clearly, $\det M_{R,\gA}^0=0$ and hence (see  Example \ref{ex:N=2}) we obtain $\kappa_-(\rh_{R,\gA}^{(l)})=0$. However,
\[
|\gA_1|r_1=2\cdot 1=2>1,
\]
and hence the first inequality in \eqref{4.41} is not satisfied.
\end{remark}

Our next aim is to apply Gershgorin's Theorem (see, e.g., \cite[Theorem 10.6.1]{LanTis85}) to the matrix $M^l_{R,\gA}$ in order to estimate  $\kappa_-(\rh_{R,\gA}^{(l)})$.

\begin{theorem}[Gershgorin] \label{th:gersh}
All eigenvalues of a matrix $A=(a_{ij})_{i,j=1}^n\in \C^{n\times n}$
are contained in the union of Gershgorin's disks
\be\label{eq:gersh}
\mathbb{D}_i=\Big\{z\in\C:\, |z-a_{ii}|\le \sum_{j\neq i}|a_{ij}|\Big\},\quad i\in\{1,...,n\}.
\ee

Furthermore, a set of $m$ discs having no point in common with the remaining $n-m$ discs contains precisely $m$ eigenvalues of $A$.
\end{theorem}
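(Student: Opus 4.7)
The plan is to prove the two assertions of the theorem separately, the first being a short computation with the eigenvalue equation and the second a continuity/homotopy argument that invokes the first.

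For the localization statement, I would begin with an eigenpair $(\gl, v)$ of $A$ with $v = (v_1, \dots, v_n)^\top \neq 0$ and pick an index $i$ with $|v_i| = \max_{1\le j\le n} |v_j|$; necessarily $|v_i| > 0$. Writing out the $i$-th coordinate of $Av = \gl v$ and isolating the diagonal term gives
\[
(\gl - a_{ii}) v_i = \sum_{j\neq i} a_{ij} v_j.
\]
Taking absolute values, dividing by $|v_i|$, and using $|v_j|/|v_i| \le 1$ then yields $|\gl - a_{ii}| \le \sum_{j\neq i} |a_{ij}|$, so $\gl \in \mathbb{D}_i \subseteq \bigcup_k \mathbb{D}_k$.

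For the counting statement, the plan is a standard deformation argument. Decompose $A = D + B$, where $D = \diag(a_{11}, \dots, a_{nn})$ and $B$ has zero diagonal, and consider the analytic family
\[
A(t) = D + tB, \qquad t\in[0,1].
\]
The Gershgorin disks for $A(t)$ are concentric with $\mathbb{D}_i$ but of radius $t\sum_{j\neq i}|a_{ij}|$, hence contained in $\mathbb{D}_i$ for every $t\in[0,1]$. Suppose the $m$ disks $\mathbb{D}_{i_1},\dots,\mathbb{D}_{i_m}$ form a union $U$ that is disjoint from the remaining $n-m$ disks; then at every $t$ the corresponding smaller disks are still contained in $U$ and disjoint from the other $n-m$ ones. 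Since the eigenvalues of $A(t)$, counted with algebraic multiplicity, depend continuously on $t$ (they are the roots of the characteristic polynomial, whose coefficients depend continuously on $t$), the number of eigenvalues lying inside the open set $U$ cannot jump from one $t$ to another without an eigenvalue crossing the boundary of $U$; but the first part applied to $A(t)$ forbids this. At $t=0$, the eigenvalues are precisely the diagonal entries $a_{i_1i_1},\dots,a_{i_mi_m}$, which contribute exactly $m$ eigenvalues in $U$. Hence the same is true at $t=1$.

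The main obstacle is the justification of continuous dependence of the eigenvalues on $t$ and turning it into the assertion that the number of eigenvalues inside a fixed open set is constant along the homotopy: the eigenvalues are only continuous as an unordered multiset, not as individual functions of $t$. The clean way I would do it is to fix a simple closed contour $\gamma$ enclosing $U$ and lying in its complement relative to the other disks, and invoke the fact that
\[
N(t) = \frac{1}{2\pi\I}\oint_\gamma \frac{\rD}{\rD z}\log\det(zI-A(t))\, \rD z
\]
is integer-valued and continuous in $t$, hence constant on $[0,1]$; this gives the counting assertion rigorously without having to track individual roots. Everything else is bookkeeping.
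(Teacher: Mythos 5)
The paper does not actually prove this statement: it is the classical Gershgorin theorem, quoted from Lancaster--Tismenetsky with a citation and no internal proof, so there is nothing in the paper to compare your argument against. Your proof is the standard textbook one and it is correct: the maximal-component trick gives the localization, and the homotopy $A(t)=D+tB$, together with the observation that the Gershgorin radii only shrink along the homotopy and a winding-number count, gives the separation statement. One small point of rigour: when the union $U$ of the $m$ distinguished disks is disconnected, a single \emph{simple closed} contour enclosing $U$ and avoiding the remaining disks need not exist (the remaining disks can separate the components of $U$ in the plane), so $\gamma$ should be taken to be a cycle, i.e.\ a finite union of closed contours --- for instance the boundary of an $\varepsilon$-neighbourhood of $U$ with $\varepsilon$ less than half the distance from $U$ to the union $V$ of the other disks. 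The argument-principle count
\[
N(t)=\frac{1}{2\pi\I}\oint_\gamma \frac{p_t'(z)}{p_t(z)}\,\rD z,\qquad p_t(z)=\det\bigl(zI-A(t)\bigr),
\]
remains integer-valued and continuous in $t$ over such a cycle, since by the first part all zeros of $p_t$ lie in $U\cup V$ for every $t\in[0,1]$ while $\gamma$ stays at positive distance from $U\cup V$. With that adjustment the argument is complete.
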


\begin{proposition}\label{propGershgorin}
Let $\rh_{R,\gA}^{(l)}$ be given by \eqref{eq:hl_N} and $\alpha=\alpha^-$. Assume also that $\{1,...,N\}=\Omega_+\cup\Omega_-$ and there exists a sequence $\{b_k\}^N_1\subset\R_+$ such that
     \begin{align}\label{4.54}
\frac{2l+1}{|\alpha_k|} <r_k \Big[1-\sum^{k-1}_{j=1} \frac{b_j}{b_k} \Big(\frac{r_j}{r_k}\Big)^{l+1} - \sum^N_{j=k+1}\frac{b_j}{b_k} \Big(\frac{r_k}{r_j}\Big)^l \Big], \quad (k\in\Omega_+),
%
%
     \end{align}
and
    \begin{equation}\label{4.56}
\frac{2l+1}{|\alpha_k|} \ge r_k \Big[1+\sum^{k-1}_{j=1} \frac{b_j}{b_k} \Big(\frac{r_j}{r_k}\Big)^{l+1} + \sum^N_{j=k+1}\frac{b_j}{b_k} \Big(\frac{r_k}{r_j}\Big)^l \Big], \quad (k\in\Omega_-).
    \end{equation}
%
%
Then $\kappa_-(\rh_{R,\gA}^{(l)}) = \#( \Omega_+)$.
      \end{proposition}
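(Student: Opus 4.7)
The plan is to reduce the problem, via Theorem \ref{th:kappa-N}, to computing $\kappa_+(M_{R,\gA}^l)$ and then to apply Gershgorin's theorem to a diagonal similarity transform of $M_{R,\gA}^l$. Since the hypothesis is $\gA=\gA^-$, we have $\kappa_+(\gA)=0$, so Theorem \ref{th:kappa-N} gives immediately $\kappa_-(\rh_{R,\gA}^{(l)})=\kappa_+(M_{R,\gA}^l)$. Thus it suffices to show that, under \eqref{4.54}--\eqref{4.56}, exactly $\#(\Omega_+)$ eigenvalues of the real symmetric matrix $M_{R,\gA}^l$ are positive.

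To this end, I would introduce $D=\diag(b_1,\dots,b_N)$ with the $b_k>0$ from the hypothesis and consider the similar matrix $\widetilde M:=D^{-1}M_{R,\gA}^l D$, whose $(i,j)$-entry is $\frac{b_j}{b_i}(M_{R,\gA}^l)_{ij}$. Similarity preserves the spectrum, and the off-diagonal entries of $M_{R,\gA}^l$ are nonnegative, so Gershgorin's $k$-th disc for $\widetilde M$ is centred at the (unchanged) diagonal entry $a_{kk}=r_k-\frac{2l+1}{|\alpha_k|}$ with radius
\[
\rho_k=r_k\Big[\sum_{j<k}\tfrac{b_j}{b_k}\Big(\tfrac{r_j}{r_k}\Big)^{l+1}+\sum_{j>k}\tfrac{b_j}{b_k}\Big(\tfrac{r_k}{r_j}\Big)^{l}\Big],
\]
after invoking the elementary identities $r_j^{l+1}r_k^{-l}=r_k(r_j/r_k)^{l+1}$ and $r_k^{l+1}r_j^{-l}=r_k(r_k/r_j)^{l}$.

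Next I would interpret \eqref{4.54} and \eqref{4.56} as the statements $a_{kk}-\rho_k>0$ for $k\in\Omega_+$ and $a_{kk}+\rho_k\le 0$ for $k\in\Omega_-$, respectively. Consequently, the $\#(\Omega_+)$ Gershgorin discs indexed by $\Omega_+$ sit inside the open right half-plane $\{\re z>0\}$, while the $\#(\Omega_-)$ discs indexed by $\Omega_-$ sit inside the closed left half-plane $\{\re z\le 0\}$. These two families are disjoint in $\mathbb{C}$, so the second (counting) part of Theorem \ref{th:gersh} applies: exactly $\#(\Omega_+)$ eigenvalues of $\widetilde M$ (and hence of $M_{R,\gA}^l$) lie in the first family, and since $M_{R,\gA}^l$ is real symmetric they are real and strictly positive; the remaining $\#(\Omega_-)$ eigenvalues are $\le 0$. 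This yields $\kappa_+(M_{R,\gA}^l)=\#(\Omega_+)$, and combining with the reduction from Theorem \ref{th:kappa-N} finishes the proof.

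The routine part is the bookkeeping of matrix entries; the only genuine subtlety is choosing the correct side for the similarity ($D^{-1}MD$ versus $DMD^{-1}$) so that the weight $b_j/b_k$ in the disc radius matches the one appearing in \eqref{4.54}--\eqml{4.56} \emph{verbatim}. The disjointness hypothesis required for Gershgorin's counting statement is free of charge because the two half-planes $\{\re z>0\}$ and $\{\re z\le 0\}$ are disjoint, regardless of how the individual discs within each family may overlap.
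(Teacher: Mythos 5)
Your argument is correct and follows essentially the same route as the paper: reduce to $\kappa_+(M_{R,\gA}^l)$ via Theorem \ref{th:kappa-N} and then apply Gershgorin's theorem to a diagonally rescaled copy of $M_{R,\gA}^l$, splitting the discs between the open right and closed left half-planes. The only (immaterial) difference is that you use the similarity $D^{-1}M_{R,\gA}^lD$, which preserves the spectrum, whereas the paper uses the congruence $B^*M_{R,\gA}^lB$ with $B=\diag(b_1,\dots,b_N)$, which preserves the inertia; for positive diagonal scalings both yield exactly the inequalities \eqref{4.54}--\eqref{4.56}.
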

  \begin{proof}
Since $\alpha = \alpha^-,$  $\kappa_+(\alpha)=0$. Hence, by Theorem \ref{th:kappa-N}, $\kappa_-(\rh_{R,\gA}^{(l)}) = \kappa_+(M_{R, \gA}^l)$.
Note that $\kappa_-(B^* M^l_{R,\alpha}B)=\kappa_-(M^l_{R,\alpha})$ for any non-singular matrix $B\in\Bbb C^{N\times N}$.  

Set $B= \diag(b_1,b_2, \ldots, b_N)$ with $b_j>0, j\in \{1,\ldots, N\}$,  and apply Theorem \ref{th:gersh}
to the matrix $B^* M^l_{R,\alpha}B$ in order to find $\kappa_+(M^{(l)}_{R,\alpha})$.
Namely, inequalities \eqref{4.54} imply that the Gershgorin disc $\mathbb{D}_k$ satisfies $\mathbb{D}_k\cap(-\infty,0]=\emptyset$ if  $k\in\Omega_+$.
%
%
%
%
On the other hand, inequalities  \eqref{4.56} mean that $\mathbb{D}_k\cap (0,+\infty)=\emptyset$ for all $k\in\Omega_-$. Thus, two sets $\cup_{k\in\Omega_+}\mathbb{D}_k$ and $\cup_{k\in\Omega_-}\mathbb{D}_k$ have no common points and hence, by Theorem \ref{th:gersh},  $\kappa_+(B^*M_{R, \gA}^lB) = \#(\Omega_+)$. Therefore, $\kappa_+(M_{R, \gA}^l) =  \#(\Omega_+)$ and we are done.
  \end{proof}

\begin{remark}
Note that  the possibility to choose a sequence  $\{b_k\}_1^N$ allows to apply the Gershgorin Theorem to the matrix $M^{l}_{R,\gA}$. Namely, the inequalities
\eqref{4.56} might be satisfied with $b_1=\ldots = b_N=1$ for some $r_k$ and $\alpha_k$. However, this is not the case for the  inequalities \eqref{4.54}. For instance, they are not satisfied for any $k<N$ if
$b_1=\ldots = b_N=1$ and $l= 0$. Indeed, in this case \eqref{4.54} is equivalent to the following inequality %
\[
\frac{1}{|\gA_k|}< r_k- \sum_{j=1}^{k-1}r_j -r_k(N-k).
\]
Clearly, if $N\ge 2$, then  the righthand side of the last inequality is positive precisely if $k=N$ and $r_N>\sum_{k=1}^{N-1}r_k$.
   \end{remark}

  \begin{proposition}
Let $\rh_{R,\gA}^{(l)}$ be given by \eqref{eq:hl_N} and $\alpha=\alpha^-$. Assume that there exists $\varepsilon\in (0,1)$ such that
\begin{equation}\label{4.60}
\Big(\frac{r_1}{r_2}\Big)^{l+1}<\frac{\varepsilon^2(1-\varepsilon)}{2}, \quad \Big(\frac{r_1}{r_3}\Big)^{l+1} < \frac{\varepsilon^3}{6N}, \quad  \Big(\frac{r_2}{r_3}\Big)^{l+1}
< \frac{\varepsilon^2}{6N}
  \end{equation}
and
\be\label{4.60B}
\Big(\frac{r_k}{r_{k+1}}\Big)^l\le \frac{\varepsilon}{3N},\quad k\in\{3,\ldots,N\}.
\ee
%
%

Then $\kappa_-(\rh_{R,\gA}^{(l)}) = 2$ if

    \begin{equation}\label{4.58}
\frac{2l+1}{|\alpha_k|}<r_k(1-\varepsilon)<r_k,\quad k\in\{1,2\}.
    \end{equation}
and
    \begin{equation}\label{4.59}
\frac{2l+1}{|\alpha_k|}\ge r_k(1+\varepsilon)>r_k, \quad k\in\{3, \ldots, N\}.
    \end{equation}
     \end{proposition}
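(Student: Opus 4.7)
The plan is to invoke Proposition~\ref{propGershgorin} with $\Omega_+=\{1,2\}$ and $\Omega_-=\{3,\ldots,N\}$; since $\#(\Omega_+)=2$, this immediately yields $\kappa_-(\rh_{R,\gA}^{(l)})=2$. The task thus reduces to exhibiting positive weights $\{b_k\}_{k=1}^N$ for which the Gershgorin sums
\[
S_k:=\sum_{j<k}\frac{b_j}{b_k}\Bigl(\frac{r_j}{r_k}\Bigr)^{l+1}+\sum_{j>k}\frac{b_j}{b_k}\Bigl(\frac{r_k}{r_j}\Bigr)^{l}
\]
satisfy $S_k\le\varepsilon$ for every $k$: together with the strict inequalities \eqref{4.58} and the non-strict \eqref{4.59}, this gives at once the hypotheses \eqref{4.54}--\eqref{4.56} of Proposition~\ref{propGershgorin}. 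The remark following Proposition~\ref{propGershgorin} already indicates that the uniform choice $b_k\equiv 1$ cannot work: the cross-term $(r_1/r_2)^{l}$ appearing in $S_1$ is not controlled by the hypotheses, which only bound $(r_1/r_2)^{l+1}$.

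I would therefore work with the symmetric weights $b_k=r_k^{-1/2}$. A direct computation gives $(b_j/b_k)(r_j/r_k)^{l+1}=(r_j/r_k)^{l+1/2}$ for $j<k$ and $(b_j/b_k)(r_k/r_j)^{l}=(r_k/r_j)^{l+1/2}$ for $j>k$, so all off-diagonal contributions combine into
\[
S_k=\sum_{j\ne k}\Bigl(\frac{\min(r_j,r_k)}{\max(r_j,r_k)}\Bigr)^{l+1/2}.
\]
The key inequality that makes the bounds from \eqref{4.60} usable in the $(l+1/2)$-norm is
\[
(r_a/r_b)^{l+1/2}\le\bigl[(r_a/r_b)^{l+1}\bigr]^{1/2}\qquad(a<b),
\]
which holds because $r_a/r_b<1$ and $l+1/2\ge(l+1)/2$. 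Note that \eqref{4.60B} automatically forces $l>0$ (otherwise $(r_k/r_{k+1})^l\ge 1$ would contradict $\varepsilon/(3N)<1$), hence the condition $l\ge 0$ is free.

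The verification that $S_k\le\varepsilon$ is then routine. For $k\in\{1,2\}$ the largest contribution is $(r_1/r_2)^{l+1/2}\le\varepsilon\sqrt{(1-\varepsilon)/2}$; the tails $\sum_{j\ge 3}(r_i/r_j)^{l+1/2}$ for $i=1,2$ are handled by the telescoping $(r_i/r_j)^{l+1/2}\le(r_i/r_3)^{l+1/2}(r_3/r_j)^{l+1/2}$ together with the geometric decay $(r_m/r_{m+1})^{l+1/2}\le(r_m/r_{m+1})^l\le\varepsilon/(3N)$ inherited from \eqref{4.60B}, valid for $m\ge 3$. For $k\ge 3$ the analogous splitting leads to an overall estimate of the form $S_k\le 2\varepsilon/\sqrt{6N}+4\varepsilon/(5N)$; summing the geometric series one checks $S_k<\varepsilon$ in each case. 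With all Gershgorin discs thereby separated from zero in the correct direction, Proposition~\ref{propGershgorin} delivers the conclusion. The main subtlety is the weight choice itself: the asymmetric appearance of the exponents $l$ and $l+1$ in the Gershgorin sums forces the symmetric $b_k=r_k^{-1/2}$, after which the hypotheses \eqref{4.60}--\eqref{4.60B} fit together cleanly.
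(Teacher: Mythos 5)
Your argument is correct and follows the same strategy as the paper: both proofs reduce the statement to Proposition~\ref{propGershgorin} with $\Omega_+=\{1,2\}$, $\Omega_-=\{3,\ldots,N\}$ and then exhibit positive weights $b_k$ for which every Gershgorin sum $S_k$ is at most $\varepsilon$. The only substantive difference is the weight sequence: the paper takes $b_1=1$, $b_2=\varepsilon(2-\varepsilon)/2$ and $b_k=\varepsilon^2/(2(N-2))$ for $k\ge 3$, bounding $S_1$ and $S_2$ crudely via $(r_i/r_j)^{l}<1$ and absorbing everything into the small constants $b_j$, whereas your symmetric choice $b_k=r_k^{-1/2}$ turns every off-diagonal entry into $(\min(r_j,r_k)/\max(r_j,r_k))^{l+1/2}$ and then uses $(r_a/r_b)^{l+1/2}\le[(r_a/r_b)^{l+1}]^{1/2}$ together with the telescoping/geometric decay from \eqref{4.60B}. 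Your route buys a more uniform and transparent verification (a single formula for $S_k$), at the cost of needing $l\ge 0$ for the square-root inequality — which, as you correctly observe, is already forced by \eqref{4.60B} — and I checked that your numerical bounds do close for all $\varepsilon\in(0,1)$ and $N\ge 3$ (e.g.\ $S_1,S_2\le\varepsilon[\sqrt{(1-\varepsilon)/2}+\tfrac{9}{8\sqrt{6N}}]<\varepsilon$). One small overstatement: the symmetric weights are not \emph{forced} by the asymmetry of the exponents — the paper's group-constant weights also work — they are simply a natural alternative.
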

   \begin{proof}
To prove the statement it suffices to show that there is a sequence of positive numbers $\{b_k\}_1^N$ such that the conditions of Proposition   \ref{propGershgorin}  are satisfied with $\Omega_+=\{1,2\}$ and $\Omega_-=\{3,...,N\}$.

Set 
   \begin{equation}\label{4.61}   
b_1:=1,\quad b_2:=\frac{\varepsilon(2-\varepsilon)}{2}, \qquad b_k:=\frac{\varepsilon^2}{2(N-2)}, \qquad k\in\{3,\ldots,N\}.
   \end{equation}
Let us check the inequalities  \eqref{4.54} with $k\in \Omega_+=\{1,2\}$. Firstly, using  \eqref{4.61}, we easily get
    \[
\sum_{j=2}^N\frac{b_j}{b_1}\left(\frac{r_1}{r_j}\right)^l < \sum_{j=2}^N b_j = b_2 + \sum_{j=3}^N b_j =
\frac{\varepsilon(2-\varepsilon)}{2} + \frac{\varepsilon^2}{2}=\varepsilon,
   \]
and hence \eqref{4.58} implies \eqref{4.54} in the case $k=1$.

Next, let $k=2$.   Using the first inequality in \eqref{4.60} and noting that $r_2<r_k$ for  $k>2$, we get
    \begin{align*}
    \frac{b_1}{b_2} \Big(\frac{r_1}{r_2}\Big)^{l+1} &+ \sum^N_{j=3}\frac{b_j}{b_2} \Big(\frac{r_2}{r_j}\Big)^l < \frac{1}{b_2} \Big(\frac{r_1}{r_2}\Big)^{l+1} + \sum^N_{j=3}\frac{b_j}{b_2}\\
&=
 \frac{\varepsilon^2(1-\varepsilon)}{\varepsilon(2-\varepsilon)} + \frac{\varepsilon^2}{\varepsilon(2-\varepsilon)}=
  \frac{\varepsilon(1-\varepsilon)+\varepsilon}{2-\varepsilon} =\varepsilon. 
    \end{align*}
Combining this inequality with \eqref{4.58}, we arrive at   \eqref{4.54} with $k=2$.

Finally, we check the conditions \eqref{4.56}.  Starting with inequalities \eqref{4.59}
and using inequalities \eqref{4.60} we obtain for $k\in\{3,\ldots,N\},$
   \begin{align*}
  & \sum^{k-1}_{j=1} \frac{b_j}{b_k} \Big(\frac{r_j}{r_k}\Big)^{l+1} + \sum^N_{j=k+1}\frac{b_j}{b_k} \Big(\frac{r_k}{r_j}\Big)^l \\
   &=\frac{b_1}{b_k}\Big(\frac{r_1}{r_k}\Big)^{l+1}+\frac{b_2}{b_k}\Big(\frac{r_2}{r_k}\Big)^{l+1}
   +\sum_{j=3}^{k-1}\Big(\frac{r_j}{r_k}\Big)^{l+1}+\sum_{j=k+1}^{N}\Big(\frac{r_k}{r_j}\Big)^{l}\\
   & = \frac{\varepsilon(N-2)}{3N} + \frac{\varepsilon(1-\varepsilon/2)(N-2)}{3N} + \frac{\varepsilon(N-3)}{3N} \le \varepsilon.
   \end{align*}

Thus, inequalities \eqref{4.56} are verified for $k\in \Omega_-=\{3,...,N\}$. It remains to apply Proposition \ref{propGershgorin}.
   \end{proof}

\subsubsection{Bargmann's bound}\label{sss:iv.2.1}
Assume that $\Lambda=\Lambda^-$, i.e., all intensities $\alpha_k$ are negative, $\gA_k=\gA_k^-$. Note that
\be\label{eq:32.03}
\kappa_+((\Lambda^-)^{-1}+M_{l}(0))=\kappa_-((2l+1)I_N-(2l+1)|\Lambda^-|^{1/2}M_l(0)|\Lambda^-|^{1/2}).
\ee
Moreover, it is easy to see that
\begin{align}
M_R^\Lambda&:=(2l+1)|\Lambda^-|^{1/2}M_l(0)|\Lambda^-|^{1/2}\nn \\
&=\left(\begin{array}{cccc} r_1^-|\gA_1^-| & \frac{(r_1^-)^{l+1}}{(r_2^-)^{l}}\sqrt{\gA_1^-\gA_2^-} &  \dots  & \frac{(r_1^-)^{l+1}}{(r_n^-)^{l}}\sqrt{\gA_1^-\gA_N^-}\\
\frac{(r_1^-)^{l+1}}{(r_2^-)^{l}}\sqrt{\gA_1^-\gA_2^-} & r_2^-|\gA_2^-| &  \dots & \frac{(r_2^-)^{l+1}}{(r_n^-)^{l}}\sqrt{\gA_2^-\gA_N^-}\\
 \dots& \dots& \dots & \dots \\
\frac{(r_1^-)^{l+1}}{(r_N^-)^{l}}\sqrt{\gA_1^-\gA_N^-} & \frac{(r_2^-)^{l+1}}{(r_N^-)^{l}}\sqrt{\gA_2^-\gA_N^-}  & \dots & r_N^-|\gA_N^-|\\
\end{array}\right).\label{eq:4.42}
\end{align}

The latter enables us to present one more of the estimate \eqref{eq:barg} in the special case $\mu(x)=\sum_{k=1}^N\alpha_k\delta(x-x_k)$.

\begin{corollary}\label{cor:bb}
Let $\rh_{R,\gA}^{(l)}$ be given by \eqref{eq:hl_N}. Let also $l>-\frac{1}{2}$ and $\alpha=\alpha^-$. Then:
\be\label{eq:bargmann}
\kappa_-(\rh_{R,\gA}^{(l)})< \frac{1}{2l+1}\sum_{k=1}^N|\gA_k|r_k\quad \text{ (Bargmann's bound)}
\ee
In particular, $\kappa_-(\rh_{R,\gA}^{(l)})=0$ if
\be\label{eq:bargmann0}
\sum_{k=1}^N|\gA_k|r_k\le 2l+1.
\ee
\end{corollary}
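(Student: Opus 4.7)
The plan is to derive the bound directly from the exact expression for $\kappa_-(\rh_{R,\gA}^{(l)})$ obtained in Theorem \ref{th:kappa-N}. Under the hypothesis $\gA=\gA^-$ one has $\kappa_+(\gA)=0$, so Theorem \ref{th:kappa-N} reduces to $\kappa_-(\rh_{R,\gA}^{(l)})=\kappa_+(M^l_{R,\gA})$. By the identity \eqref{eq:32.03}, this count coincides with $\kappa_-((2l+1)I_N-M_R^\Lambda)$, i.e.\ with the number of eigenvalues of the symmetric matrix $M_R^\Lambda$ from \eqref{eq:4.42} that strictly exceed $2l+1$.

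Next, I would observe that $M_R^\Lambda=(2l+1)|\Lambda^-|^{1/2}M_l(0)|\Lambda^-|^{1/2}$ is positive semidefinite: indeed $M_l(0)$ is the Gram matrix $(K_l(r_j,r_k;0))_{j,k=1}^N$ of the Green kernel at energy zero, which is positive definite, as recorded in the second proof of Theorem \ref{th:bargmann}. Hence all eigenvalues $\lambda_j$ of $M_R^\Lambda$ lie in $[0,\infty)$, and the crude pigeonhole/trace bound gives
\[
(2l+1)\,\kappa_-(\rh_{R,\gA}^{(l)})=(2l+1)\cdot\#\{j:\lambda_j>2l+1\}<\sum_{j:\lambda_j>2l+1}\lambda_j\le\operatorname{tr}M_R^\Lambda=\sum_{k=1}^N|\gA_k|r_k,
\]
where the last equality is read off the diagonal of \eqref{eq:4.42}. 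This is precisely \eqref{eq:bargmann}; the strict inequality is genuine whenever $\kappa_-\ge 1$, and in the remaining case $\kappa_-=0$ it reduces to $0<\sum|\gA_k|r_k$, which still holds as soon as some $\gA_k\ne 0$ (the case $\gA\equiv 0$ being vacuous).

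The second assertion \eqref{eq:bargmann0} follows at once: if $\sum_{k=1}^N|\gA_k|r_k\le 2l+1$, then \eqref{eq:bargmann} forces $\kappa_-(\rh_{R,\gA}^{(l)})<1$, and integrality of the dimension count gives $\kappa_-=0$. I do not foresee a real obstacle here; the argument is a one-line trace inequality once the factorization of $M_R^\Lambda$ through the positive definite Green kernel $K_l(\cdot,\cdot;0)$ is in hand. As an alternative, one could just specialize Theorem \ref{th:bargmann} to $\mu=\sum_k|\gA_k|\delta_{r_k}$, but the matrix route above has the advantage of making the mechanism behind the strict inequality completely transparent.
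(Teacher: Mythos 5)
Your argument is exactly the paper's proof: reduce via Theorem \ref{th:kappa-N} and the identity \eqref{eq:32.03} to counting the eigenvalues of $M_R^\Lambda$ exceeding $2l+1$, then apply the trace/pigeonhole inequality using $\operatorname{tr} M_R^\Lambda=\sum_k|\gA_k|r_k$. Your explicit remarks on the positive semidefiniteness of $M_R^\Lambda$ (needed for $\sum_{\lambda_j>2l+1}\lambda_j\le\operatorname{tr}M_R^\Lambda$) and on the edge case $\kappa_-=0$ for the strictness of the inequality are correct refinements of details the paper leaves implicit.
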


\begin{proof}
 Combining \eqref{eq:32.01} with \eqref{eq:32.03} and noting that $\Lambda=\Lambda^-$, we get
\begin{align}
&\kappa_-(\rh_{R,\gA}^{l})\le \kappa_-((2l+1)I_N-M_R^\Lambda)=\sum_{\lambda_j(M_R^\Lambda)>2l+1}1\nn \\ &< \sum_{\lambda_j(M_R^\Lambda)>2l+1}\frac{\lambda_j(M_R^\Lambda)}{2l+1}\le \frac{\mathrm{tr} \big(M_R^\Lambda\big)}{2l+1}=\frac{1}{2l+1}\sum_{k=1}^n|\gA_k^-|r_k^-.\nn
\end{align}
\end{proof}

\begin{remark}
(i) Example \ref{ex:N=1} shows that in the case of a one-center $\delta$-interaction the Bargmann bound \eqref{eq:bargmann} provides a criterion since in this case the inequality $|\gA_1|r_1\le 2l+1$ is not only sufficient for the equality $\kappa_-(\rh_{R,\gA}^{(l)})=0$ but is also necessary.

(ii) It is interesting to compare inequalities \eqref{4.40} with the necessary condition  implied by
the Bargmann estimate  \eqref{eq:bargmann}. Clearly, if $\kappa_-(\rh_{R,\gA}^{(l)})=N$, then \eqref{eq:bargmann}  yields only the following estimate
    \begin{equation*}
\sum^N_{k=1}|\alpha_k|r_k>N(2l+1)
    \end{equation*}
in place of stronger inequalities  \eqref{4.40}.
\end{remark}

\subsection{Absence of bound states}\label{sss:iv.2.2}

Consider the operator $\rh_{-\mu}^{(l)}$ defined by \eqref{0M}, where $\mu$ is a positive measure on $\R_+$. Assume for simplicity that $l>-\frac{1}{2}$. The next fact immediately follows from the Bargmann estimate \eqref{eq:bargmann} and in the case of absolutely continuous  measures $\mu(r)=q(r)dr$ was first established by R. Jost and A. Pais (cf. \cite{Sim76, RedSim78}).

\begin{lemma}\label{cor:4.5}
 The operator $\rh_{-\mu}^{(l)}$ is nonnegative if
\begin{equation}\label{eq:iv.19}
\frac{1}{2l+1} \int_{\R_+}rd\mu(r)\le 1.
\end{equation}
\end{lemma}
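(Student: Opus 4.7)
The plan is to derive this corollary essentially as a direct consequence of the strict Bargmann bound in Theorem \ref{th:bargmann}. Since $l>-1/2$, that theorem gives
\[
\kappa_-(\rh_{-\mu}^{(l)}) < \frac{1}{2l+1}\int_0^\infty r\, d\mu(r).
\]
Under hypothesis \eqref{eq:iv.19} the right-hand side is at most $1$, so $\kappa_-(\rh_{-\mu}^{(l)})<1$. The crucial point here is the strictness of the inequality in \eqref{eq:barg} (already emphasized in the remark following the second proof of Theorem \ref{th:bargmann}); without strictness the borderline case $\int r\, d\mu=2l+1$ would not be covered.

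Since $\kappa_-(\rh_{-\mu}^{(l)})=\dim\ran\bigl((\rh_{-\mu}^{(l)})_-\bigr)$ is a non-negative integer by definition, the strict bound $\kappa_-(\rh_{-\mu}^{(l)})<1$ forces $\kappa_-(\rh_{-\mu}^{(l)})=0$. By the definition preceding the lemma, this means $\rh_{-\mu}^{(l)} E_{\rh_{-\mu}^{(l)}}(-\infty,0)=0$, and hence $E_{\rh_{-\mu}^{(l)}}(-\infty,0)=0$, i.e.\ the spectrum of $\rh_{-\mu}^{(l)}$ is contained in $[0,+\infty)$. Equivalently, $\rh_{-\mu}^{(l)}\ge 0$, as claimed.

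There is no real obstacle here; the entire content is the passage from the strict Bargmann estimate to the non-strict threshold $\int r\,d\mu \le 2l+1$ via the integrality of the Morse index $\kappa_-$. The only conceptual care needed is to invoke the strictness (not just the standard non-strict form found in textbooks) so that the endpoint case in \eqref{eq:iv.19} is actually included.
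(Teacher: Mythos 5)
Your proof is correct and follows exactly the paper's route: the paper also deduces the lemma "immediately from Theorem \ref{th:bargmann}", and your observation that the strictness of the Bargmann inequality is what covers the endpoint case $\int_{\R_+} r\,d\mu = 2l+1$ is precisely the point that makes the deduction work.
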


\begin{proof}
The proof is immediate from Theorem \ref{th:bargmann}.
\end{proof}

\begin{lemma}\label{cor:4.6}
Let $l=0$ and $\mu$ be a finite positive measure on $\R_+$. Then for the equality $\kappa_-(\rh_{-\mu}^{(0)})=0$ it is necessary that
\be\label{eq:33.10}
\sup_{r\in\R_+}r \int_{(r,+\infty)}d\mu(t)\le 1,
\ee
and sufficient that
\be\label{eq:33.10A}
\sup_{r\in\R_+}r \int_{(r,+\infty)}d\mu(t)\le \frac{1}{4}.
\ee
\end{lemma}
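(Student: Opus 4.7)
The plan is to establish the two implications independently. For $l=0$ the form reduces to
\[
\gt^{(0)}_{-\mu}[f]=\int_0^\infty|f'(r)|^2 dr-\int_0^\infty|f(r)|^2 d\mu(r)
\]
on the domain $\dom(\gt_0^{(0)})=\{f\in L^2(\R_+):\,f'\in L^2(\R_+),\,f(0)=0\}$, and $\rh_{-\mu}^{(0)}\ge 0$ is equivalent to $\gt^{(0)}_{-\mu}\ge 0$ there. Write $M(r):=\mu((r,+\infty))$; this function is nonincreasing, bounded by $\mu(\R_+)<\infty$, and tends to $0$ at $+\infty$.

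\emph{Sufficiency.} For $f\in C_c^\infty((0,+\infty))$ I would carry out a Stieltjes integration by parts using $d\mu=-dM$; the boundary terms vanish by compact support and $f(0)=0$, giving
\begin{equation*}
\int_0^\infty|f|^2 d\mu=2\re\int_0^\infty M(r)\,\overline{f(r)}f'(r)\,dr.
\end{equation*}
Cauchy--Schwarz with the symmetric splitting $M=(rM)^{1/2}(M/r)^{1/2}$ yields
\begin{equation*}
\int_0^\infty|f|^2 d\mu\le 2\Bigl(\int_0^\infty rM\,|f'|^2 dr\Bigr)^{1/2}\Bigl(\int_0^\infty \frac{M}{r}|f|^2 dr\Bigr)^{1/2}.
\end{equation*}
The hypothesis $rM(r)\le 1/4$ bounds the first factor by $\tfrac12(\int|f'|^2 dr)^{1/2}$, while $M/r\le 1/(4r^2)$ together with the classical Hardy inequality $\int_0^\infty|f|^2/r^2 dr\le 4\int_0^\infty|f'|^2 dr$ bounds the second factor by $(\int|f'|^2 dr)^{1/2}$. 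Multiplying gives $\int|f|^2 d\mu\le\int|f'|^2 dr$, i.e., $\gt^{(0)}_{-\mu}[f]\ge 0$ on $C_c^\infty((0,+\infty))$. Closedness of $\gt^{(0)}_{-\mu}$ (recorded in the paper) and density of $C_c^\infty((0,+\infty))$ in $\dom(\gt_0^{(0)})$ extend the inequality to the full form domain.

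\emph{Necessity.} Argue by contraposition: assume $\sup_{r>0}rM(r)>1$ and fix $r_0>0$ with $r_0 M(r_0)>1$. For $R>r_0$ take the truncated ramp
\begin{equation*}
f_R(r)=\begin{cases} r,& r\in[0,r_0],\\ r_0,& r\in[r_0,R],\\ r_0(2-r/R),& r\in[R,2R],\\ 0,& r\ge 2R,\end{cases}
\end{equation*}
which lies in $\dom(\gt_0^{(0)})$. A direct computation gives $\int_0^\infty|f_R'|^2 dr=r_0+r_0^2/R$ and $\int_0^\infty|f_R|^2 d\mu\ge r_0^2\mu((r_0,R])$; by finiteness of $\mu$, $\mu((r_0,R])\to M(r_0)$ as $R\to\infty$, so
\begin{equation*}
\gt^{(0)}_{-\mu}[f_R]\to r_0(1-r_0 M(r_0))<0.
\end{equation*}
Hence the form is negative on $f_R$ for all sufficiently large $R$, forcing $\kappa_-(\rh_{-\mu}^{(0)})\ge 1$, contrary to the hypothesis $\kappa_-=0$.

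\emph{Main obstacle.} The delicate point is obtaining the sharp constant $1/4$ on the sufficient side: it requires the precise symmetric splitting of $M$ inside Cauchy--Schwarz together with the sharp Hardy constant $4$. The factor-of-four gap between the two conditions ($1/4$ vs.\ $1$) is the usual one for Muckenhoupt-type Hardy criteria on the half-line and cannot be closed by this approach; any sharper necessary/sufficient statement would presumably require additional structural assumptions on $\mu$.
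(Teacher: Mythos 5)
Your proof is correct, and it takes a genuinely different route from the paper. The paper reduces the question to the Birman--Schwinger operator $\cK_\mu(0)$ on $L^2(\R_+,d\mu)$ with kernel $K_0(r,s;0)=\min(r,s)$, notes that $\kappa_-(\rh_{-\mu}^{(0)})=0$ is equivalent to $\|\cK_\mu(0)\|_{L^2_\mu}\le 1$, and then simply cites the Kac--Krein criteria for the Krein string to get both implications, with the same constants $1/4$ and $1$. You instead prove the two implications by hand: for sufficiency, the Fubini/integration-by-parts identity $\int|f|^2d\mu=2\re\int M\,\overline{f}f'\,dr$ with $M(r)=\mu((r,\infty))$, followed by the weighted Cauchy--Schwarz splitting $M=(rM)^{1/2}(M/r)^{1/2}$ and the sharp Hardy inequality, correctly yields $\int|f|^2d\mu\le\int|f'|^2dr$ under $rM\le 1/4$ (the extension from $C_c^\infty((0,\infty))$ to the form domain is harmless since $\mu$ is finite, so $f\mapsto\int|f|^2d\mu$ is continuous in the $H^1$-norm); for necessity, your truncated ramp $f_R$ gives $\gt^{(0)}_{-\mu}[f_R]\le r_0+r_0^2/R-r_0^2\mu((r_0,R])$, whose $\limsup$ is $r_0(1-r_0M(r_0))<0$ (you write ``$\to$'' where ``$\limsup\le$'' is the accurate statement, since the mass on $[0,r_0]\cup(R,2R]$ only helps), which suffices. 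In effect you have reproved the relevant halves of the Kac--Krein theorem, which makes your argument self-contained and elementary, whereas the paper's version is shorter and makes the connection to string theory explicit; the factor-of-four gap is the same in both treatments and, as you note, intrinsic to this Muckenhoupt-type criterion.
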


\begin{proof}
Firstly, by \eqref{0M}, we observe that $\kappa_-(\rh_{-\mu}^{(0)})=0$ precisely if
\[
\|u'\|^2_{L^2(\R_+)}\ge \|u\|^2_{L^2(\R_+,d\mu)},\quad u\in \dom(\rh_{-\mu}^{(0)})=W^{1,2}(\R_+).
\]
The latter holds true precisely if the operator $\cK_\mu(0)$ given by \eqref{eq:r0}--\eqref{eq:K_l01} with $l=0$ satisfies
\be\label{eq:4.32}
\|\cK_{\mu}(0)\|_{L^2_{\mu}}\le 1.
\ee
Therefore, using the Kac--Krein criteria \cite[Theorems 1, 3]{KK58}, we arrive at the following implications
\[
\sup_{r>0} r (V(+\infty)-V(r))\le \frac{1}{4} \quad \Rightarrow\quad \eqref{eq:4.32}\quad \Rightarrow \quad \sup_{r>0} r (V(+\infty)-V(r))\le 1.
\]
Here $V(r)=\mathrm{Var}_{[0,r]}\mu=\int_{[0,r]}d\mu(r)$. Since \eqref{eq:4.32} is equivalent to the equality $\kappa_-(\rh_{-\mu}^{(0)})=0$, we are done.
%
\end{proof}

\begin{remark}
Let us mention that the sufficiency of a stronger version of \eqref{eq:33.10A} for the equality $\kappa_-(\rh_{-\mu}^{(0)})=0$ is well known in the case of continuous potentials. Namely (see, e.g.,  \cite{RedSim78} and also \cite[Theorem 2]{ek_87}), the inequality
\[
q_-(r)\le \frac{1}{4r^2} \quad \text{a.e. on }\quad \R_+
\]
implies $\kappa_-(\rh_q^{(0)})=0$.

It is also interesting to compare necessary condition \eqref{eq:33.10} with the necessary condition obtained in \cite[Theorem 2]{ek_87}. Namely, Theorem 2 from \cite{ek_87} states that for any function $\tilde{q}:\R_+\to \R_+$ such that $\tilde{q}\ge \frac{1}{4r^2}$ on $\R_+$ and $\tilde{q}(r)\ge \frac{1}{4r^2}+c$, $c>0$, on some interval of $\R_+$, there is $q$ such that $\frac{1}{4r^2}\le q(r)\le \tilde{q}(r)$ and $\kappa_-(\rh_{-q}^{(0)})\ge 1$.
\end{remark}

Let us now restrict our considerations to the case of a finite number of $\delta$-interaction, that is, let us consider the operator $\rh_{R,\gA}^{(l)}$ given by \eqref{eq:hl_N}. 
The next fact immediately follows from \eqref{eq:32.03}. 
\begin{lemma}\label{lem:4.12}
 Let the operator $\rh_{R,\gA}^{(l)}$ be given by \eqref{eq:hl_N} with  $\gA_k<0$ for all $k\in\{1,...,N\}$. Then the operator $\rh_{R,\gA}^{(l)}$ is nonnegative if and only if $\|M_{R}^\Lambda\|\le 2l+1$, where the matrix $M_R^\Lambda$ is given by \eqref{eq:4.42}.
\end{lemma}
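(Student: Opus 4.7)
The plan is to reduce the nonnegativity of $\rh_{R,\gA}^{(l)}$ to an operator norm bound on $M_R^\Lambda$ via the chain of equivalences furnished by Theorem~\ref{th:kappa-N} and identity \eqref{eq:32.03}. Since $\gA=\gA^-$ we have $\kappa_+(\gA)=0$, so Theorem~\ref{th:kappa-N} yields $\kappa_-(\rh_{R,\gA}^{(l)})=\kappa_+(M_{R,\gA}^l)$. Nonnegativity of the self-adjoint operator $\rh_{R,\gA}^{(l)}$ is equivalent to $\kappa_-(\rh_{R,\gA}^{(l)})=0$, and hence to $\kappa_+(M_{R,\gA}^l)=0$.

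Next, recalling from \eqref{eq:32.02} that $M_{R,\gA}^l=(2l+1)(\Lambda^{-1}+M_l(0))$ (with $2l+1>0$ since $l>-1/2$), the condition $\kappa_+(M_{R,\gA}^l)=0$ becomes $\kappa_+(\Lambda^{-1}+M_l(0))=0$, i.e., $\kappa_+((\Lambda^-)^{-1}+M_l(0))=0$ because $\Lambda=\Lambda^-$. Applying identity \eqref{eq:32.03} then gives
\[
\kappa_+\bigl((\Lambda^-)^{-1}+M_l(0)\bigr)=\kappa_-\bigl((2l+1)I_N-M_R^\Lambda\bigr),
\]
so nonnegativity of $\rh_{R,\gA}^{(l)}$ is equivalent to $(2l+1)I_N-M_R^\Lambda\ge 0$.

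The final step is to translate this operator inequality into a norm bound. The matrix $M_R^\Lambda=(2l+1)|\Lambda^-|^{1/2}M_l(0)|\Lambda^-|^{1/2}$ is Hermitian, and it is positive semidefinite because $M_l(0)$ is positive semidefinite: indeed, the kernel $K_l(\cdot,\cdot;0)$ is positive definite (as noted in the discussion preceding Theorem~\ref{th:bargmann}), so $M_l(0)=(K_l(r_j,r_k;0))_{j,k=1}^N$ is a Gram-type matrix that is nonnegative. Hence all eigenvalues of $M_R^\Lambda$ are nonnegative, and $(2l+1)I_N-M_R^\Lambda\ge 0$ is equivalent to $\lambda_{\max}(M_R^\Lambda)\le 2l+1$, i.e., $\|M_R^\Lambda\|\le 2l+1$.

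The argument is essentially a bookkeeping assembly of previously established identities; the only non-mechanical step is the positive semidefiniteness of $M_l(0)$, which is the one piece to check carefully (and which follows from positive definiteness of the Green's kernel of $\rh_0^{(l)}$ at $\lambda=0$, as used in the second proof of Theorem~\ref{th:bargmann}).
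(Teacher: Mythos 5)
Your proposal is correct and follows essentially the same route as the paper: combine the index formula $\kappa_-(\rh_{R,\gA}^{(l)})=\kappa_+(M_{R,\gA}^l)-\kappa_+(\gA)$ of Theorem \ref{th:kappa-N} with identity \eqref{eq:32.03} to get $\kappa_-(\rh_{R,\gA}^{(l)})=\kappa_-\bigl((2l+1)I_N-M_R^\Lambda\bigr)$, and then read off the norm condition. Your explicit justification of the last step --- that $M_l(0)$, being built from the positive definite Green's kernel of $\rh_0^{(l)}$ at $\lambda=0$, makes $M_R^\Lambda\ge 0$ so that $\lambda_{\max}(M_R^\Lambda)=\|M_R^\Lambda\|$ --- is a detail the paper leaves implicit, and it is correctly supplied.
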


\begin{proof}
Combining \eqref{eq:32.01} with \eqref{eq:32.03} and noting that $\kappa_+(\gA)=0$, we get
\[
\kappa_-(\rh_{R,\gA}^{(l)})=\kappa_-\big((2l+1)I_N-M_R^\Lambda\big).
\]
Therefore, $\kappa_-(\rh_{R,\gA}^{(l)})=0$ precisely if $\|M_R^\Lambda\|\le 2l+1$.
\end{proof}

\begin{corollary}
Let the assumptions of Lemma \ref{lem:4.12} be satisfied. Then $\kappa_-(\rh_{R,\gA}^{(l)})=0$ if and only if
\be\label{eq:4.34}
\left(\begin{array}{ccccc} r_1 & r_1^{l+1}r_2^{-l} &  \dots  & r_1^{l+1}r_N^{-l}\\
r_1^{l+1}r_2^{-l} & r_2 & \dots &r_2^{l+1}r_N^{-l}\\
 \dots& \dots&  \dots & \dots \\
r_1^{l+1}r_N^{-l} & r_2^{l+1}r_N^{-l} & \dots & r_N\\
\end{array}\right)\le
\left(\begin{array}{cccc}
\frac{2l+1}{|\alpha_1|} & 0 & \dots & 0\\
0 &  \frac{2l+1}{|\alpha_2|}& \dots & 0\\
\dots & \dots & \dots &\dots\\
0 & 0 & \dots &\frac{2l+1}{|\alpha_N|}\\
\end{array}\right). 
\ee
\end{corollary}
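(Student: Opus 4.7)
The plan is to feed Lemma \ref{lem:4.12} through a short chain of equivalences between scalar norm bounds and matrix (operator) inequalities. Concretely, Lemma \ref{lem:4.12} reduces the equality $\kappa_-(\rh_{R,\gA}^{(l)})=0$ to the norm estimate $\|M_R^\Lambda\|\le 2l+1$, so the task is simply to rewrite that norm bound in the form \eqref{eq:4.34}.

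First I would observe that $M_R^\Lambda$ is positive semidefinite. Indeed, by \eqref{eq:4.42} it is a congruence of $M_l(0)$, and $M_l(0)$ is positive semidefinite: the kernel $K_l(r,s;0)$ is positive definite (being the pointwise limit of the positive-definite Green kernels $K_l(r,s;\lambda)$ used in the second proof of Theorem \ref{th:bargmann}), and $M_l(0)=(K_l(r_j,r_k;0))_{j,k=1}^N$ is the associated Gram matrix. Once $M_R^\Lambda\ge 0$ is in hand, the scalar bound $\|M_R^\Lambda\|\le 2l+1$ is equivalent to the operator inequality
\[
M_R^\Lambda\le (2l+1)I_N.
\]

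Next, I would substitute the definition $M_R^\Lambda=(2l+1)|\Lambda^-|^{1/2}M_l(0)|\Lambda^-|^{1/2}$ from \eqref{eq:4.42}, divide by $2l+1$, and conjugate by $|\Lambda^-|^{-1/2}$ (which is legitimate since $\gA_k<0$ for every $k$ makes $|\Lambda^-|$ a strictly positive diagonal matrix). This gives the equivalent inequality
\[
M_l(0)\le |\Lambda^-|^{-1}=\diag\bigl(|\gA_1|^{-1},\dots,|\gA_N|^{-1}\bigr).
\]
Multiplying both sides by $2l+1$ and invoking the explicit expression \eqref{eq:33.05} for $(2l+1)M_l(0)$ identifies the left-hand side with the matrix appearing in the left of \eqref{eq:4.34}; likewise $(2l+1)|\Lambda^-|^{-1}=\diag((2l+1)/|\gA_k|)$ is exactly the diagonal matrix on the right of \eqref{eq:4.34}. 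This chain of equivalences gives the stated criterion.

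There is no real obstacle here beyond bookkeeping; the only point that deserves a sentence of justification is the positive semidefiniteness of $M_R^\Lambda$ (equivalently, of $M_l(0)$), which allows us to promote the norm bound to an operator inequality. Everything else is a one-line congruence by the invertible positive diagonal matrix $|\Lambda^-|^{1/2}$.
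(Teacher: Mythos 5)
Your proposal is correct and takes essentially the same route as the paper, whose proof is the one-line instruction to combine Lemma \ref{lem:4.12} with the explicit formulas \eqref{eq:4.42} and \eqref{eq:33.05}; the congruence by $|\Lambda^-|^{1/2}$ you describe is exactly that computation. The only detail you add is the (correct, and implicitly needed) observation that $M_R^\Lambda\ge 0$, which lets you upgrade the norm bound $\|M_R^\Lambda\|\le 2l+1$ to the operator inequality $M_R^\Lambda\le(2l+1)I_N$.
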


\begin{proof}
Combining \eqref{eq:4.42} with \eqref{eq:33.05} and using Lemma \ref{lem:4.12}, we prove the claim.
\end{proof}


%
%

%
%

\begin{corollary}
Let the assumptions of Lemma \ref{lem:4.12} be satisfied.  If
\be\label{eq:4.50}
\frac{1}{|\gA_k|}\ge \frac{1}{2l+1}\Big(\sum_{j=1}^{k-1}\frac{r_j^{l+1}}{r_k^{l}}+r_k\sum_{j=k}^N\Big(\frac{r_k}{r_j}\Big)^l\Big)
\ee
for every  $k\in\{1,...,N\}$, then $\kappa_-(\rh_{R,\gA}^{(l)})=0$, i.e., the operator $\rh_{R,\gA}^{(l)}$ is positive.
   \end{corollary}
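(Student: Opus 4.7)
The plan is to reduce the claim to a matrix inequality and then verify that inequality by diagonal dominance. Specifically, by the previous corollary (or equivalently by Lemma~\ref{lem:4.12}), we have $\kappa_-(\rh_{R,\gA}^{(l)})=0$ if and only if
\be\label{eq:prop_goal}
(2l+1)M_l(0)\le D,\qquad D:=\diag\bigl(\tfrac{2l+1}{|\gA_1|},\dots,\tfrac{2l+1}{|\gA_N|}\bigr),
\ee
where $(2l+1)M_l(0)$ is the symmetric matrix with diagonal entries $r_k$ and off-diagonal entries $r_{\min(j,k)}^{l+1}r_{\max(j,k)}^{-l}$ (cf.~\eqref{eq:33.05} and \eqref{eq:4.34}). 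So the task is to show that the hypothesis \eqref{eq:4.50} forces the symmetric matrix $A:=D-(2l+1)M_l(0)$ to be positive semidefinite.

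The key observation is that \eqref{eq:4.50}, after multiplying by $2l+1$ and splitting the sum $r_k\sum_{j=k}^{N}(r_k/r_j)^l = r_k + \sum_{j=k+1}^{N}r_k^{l+1}r_j^{-l}$, is exactly equivalent to
\[
\frac{2l+1}{|\gA_k|}-r_k\ \ge\ \sum_{j=1}^{k-1}\frac{r_j^{l+1}}{r_k^{l}}+\sum_{j=k+1}^{N}\frac{r_k^{l+1}}{r_j^{l}}\ =\ \sum_{j\neq k}|A_{kj}|,\qquad k\in\{1,\dots,N\}.
\]
In other words, $A$ has non-negative diagonal entries and is (weakly) diagonally dominant in rows.

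Now I apply Gershgorin's theorem (Theorem~\ref{th:gersh}) to the symmetric matrix $A$. Each Gershgorin disk $\mathbb{D}_k$ is centered at $A_{kk}=\frac{2l+1}{|\gA_k|}-r_k\ge 0$ with radius $\sum_{j\neq k}|A_{kj}|\le A_{kk}$, so $\mathbb{D}_k\subset\{z\in\C:\re z\ge 0\}$. Since $A=A^*$, its spectrum is real and contained in $\bigcup_k\mathbb{D}_k\subset[0,\infty)$. Hence $A\ge 0$, which is precisely \eqref{eq:prop_goal}, and the previous corollary yields $\kappa_-(\rh_{R,\gA}^{(l)})=0$.

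There is no real obstacle here: once one notes that the right-hand side of \eqref{eq:4.50}, after subtracting the diagonal contribution $r_k$, is exactly the sum of off-diagonal absolute values in the $k$-th row of $(2l+1)M_l(0)$, the proof reduces to the standard fact that a symmetric, weakly diagonally dominant matrix with non-negative diagonal is positive semidefinite. The only small bookkeeping step is the rewriting of the split $r_k\sum_{j=k}^{N}(r_k/r_j)^{l}=r_k+\sum_{j>k}r_k^{l+1}r_j^{-l}$, which makes the diagonal-dominance structure explicit.
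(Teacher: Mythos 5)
Your proof is correct and is essentially the paper's argument: the paper simply invokes Proposition \ref{propGershgorin} with $\Omega_+=\emptyset$ and $b_1=\dots=b_N=1$, whose proof is exactly the Gershgorin/diagonal-dominance computation you carry out explicitly on $A=D-(2l+1)M_l(0)$, which is $-M^l_{R,\gA}$ in the paper's notation. The only cosmetic difference is that you route through Lemma \ref{lem:4.12} rather than Theorem \ref{th:kappa-N}, but both reduce to showing the same matrix is positive semidefinite.
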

\begin{proof} The proof is immediate from Proposition  \ref{propGershgorin}
with  $\Omega_+=\emptyset$, $\Omega_- = \{1,...,N\}$ and $b_1=\ldots =b_N=1$.
  \end{proof}

\begin{remark}
In general, inequalities \eqref{eq:4.50} do not imply the Jost--Pais estimate \eqref{eq:bargmann0} and vise versa. Namely,
let $N=2$. Assume that $\gA_1$ and $r_1$ are such that $|\gA_1|r_1=l+\frac{1}{2}$. Then \eqref{eq:bargmann0} holds true if
\[
r_2|\gA_2|\le l+\frac{1}{2}.
\]
However, Gershgorin's estimates \eqref{eq:4.50} hold true if
\[
r_2|\gA_2|\le (2l+1)\frac{r_2^{l+1}}{r_1^{l+1}+r_2^{l+1}},
\]
which is weaker than the above estimate since $r_1<r_2$.

On the other hand, set $l=0$, $\gA_1=-2$, $\gA_2=-\frac{1}{3}$ and $r_1=\frac{1}{3}$, $r_2=1$. Then $|\gA_1|r_1=\frac{2}{3}\ge \frac{1}{2}$ and hence estimates \eqref{eq:4.50} are not satisfied. However, 
\[
\sum_{k=1}^2|\gA_k|r_k=2\cdot\frac{1}{3}+\frac{1}{3}\cdot1=1,
\]
and hence \eqref{eq:bargmann0} holds true.
\end{remark}

\section{Schr\"odinger operators with $\delta$-shells}\label{sec:v}

The main aim of this section is to extend the results on spectral properties of Schr\"odinger operators with point interactions from the case of one dimension to the multidimensional case.

\subsection{Self-adjointness}

We begin with the following result.
\begin{theorem}\label{th:sa}
Let the operators $\rH_{R,\gA}$ and $B_{R,\gA}$ be given by \eqref{I_01} and \eqref{I_07}, respectively. Then $\rH_{R,\gA}$  is self-adjoint if and only if the matrix $B_{R,\gA}$ is also self-adjoint. In particular, if $n_\pm(B_{R,\gA})=1$, then $n_\pm(\rH_{R,\gA})=\infty$.
\end{theorem}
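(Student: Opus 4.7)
The strategy is to combine the partial-wave decomposition \eqref{eq:31.03} with the one-dimensional result stated in Lemma \ref{lem:_KM}(i). Concretely, I would first observe that \eqref{eq:31.03} exhibits $\rH_{R,\gA}$ as unitarily equivalent to the orthogonal sum
\begin{equation*}
\rH_{R,\gA}\;\cong\;\bigoplus_{l=0}^{\infty}\bigl(\rh_{R,\gA}^{(l)}\otimes I_{\cH_l}\bigr),
\end{equation*}
with each $\rh_{R,\gA}^{(l)}$ acting in $L^2(\R_+)$ and appearing with multiplicity $d_l:=\dim\cH_l$. Because every $\rh_{R,\gA}^{(l)}$ is closed and symmetric, the sum is self-adjoint if and only if every summand is, and the deficiency indices add:
\begin{equation*}
n_\pm(\rH_{R,\gA})\;=\;\sum_{l=0}^{\infty} d_l\,n_\pm(\rh_{R,\gA}^{(l)}).
\end{equation*}

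Next I would apply Lemma \ref{lem:_KM}(i) (taking $q\equiv 0$) to each partial-wave operator. The key observation is that the Jacobi matrix $B_{R,\gA}$ produced by that lemma depends only on the radii $R$ and the intensities $\gA$, not on the angular momentum label $l$. Hence self-adjointness of $\rh_{R,\gA}^{(l)}$ is equivalent to self-adjointness of one and the same matrix $B_{R,\gA}$ for every $l\ge 0$, and moreover $n_\pm(\rh_{R,\gA}^{(l)})=n_\pm(B_{R,\gA})\in\{0,1\}$. Combining this with the decomposition of the previous paragraph proves the equivalence $\rH_{R,\gA}=\rH_{R,\gA}^*\Longleftrightarrow B_{R,\gA}=B_{R,\gA}^*$, as well as the second claim: if $n_\pm(B_{R,\gA})=1$, then $n_\pm(\rh_{R,\gA}^{(l)})=1$ for every $l$, and since the number of spherical harmonics of degree $l$ in $n\ge 2$ dimensions is positive and summable to $\infty$, one has $\sum_l d_l=\infty$, whence $n_\pm(\rH_{R,\gA})=\infty$.

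\textbf{Main obstacle.} The real technical point is that Lemma \ref{lem:_KM} was stated under the standing hypothesis $\gd^*:=\sup_k\gd_k<\infty$ (and $q\in L^\infty$). If Theorem \ref{th:sa} is read as a statement for arbitrary strictly increasing $r_k\uparrow\infty$, one must either appeal to the more general version of the Jacobi-matrix correspondence from \cite{KM_09} or reduce to the bounded-gap situation by a standard localization: cut $\R_+$ at a point $c\in(r_{k-1},r_k)$ with the Dirichlet condition, note that the operator on $(0,c)$ is automatically self-adjoint with purely discrete spectrum, and apply the correspondence only on $(c,\infty)$, which is a rank-one perturbation (in the resolvent sense) and hence preserves the deficiency indices. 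A secondary but routine matter is that for $l(n)\in[-\tfrac12,\tfrac12)$ the boundary condition \eqref{eq:bc0} at $r=0$ must be carried through the argument; this is already built into the definition of $\rh_{R,\gA}^{(l)}$ and into Lemma \ref{lem:_KM}, so it requires no separate treatment.
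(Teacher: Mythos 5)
Your proposal follows essentially the same route as the paper's proof: the partial-wave decomposition \eqref{eq:31.01}--\eqref{eq:31.03} reduces the claim to the one-dimensional equivalence $n_\pm(\rh_{R,\gA}^{(l)})=n_\pm(B_{R,\gA})\le 1$ for each $l$ (Lemma \ref{lem:_KM}$(i)$, whose Jacobi matrix indeed does not depend on $l$), after which the deficiency indices of the orthogonal sum are added. Your remark about the standing hypothesis $\gd^*<\infty$ is a fair observation --- the paper's one-line proof silently inherits it from Theorem \ref{th_KM}/Lemma \ref{lem:_KM} --- but it does not change the substance of the argument.
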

\begin{proof}
By Theorem \ref{th_KM}$(i)$, the operators $\rh_{R,\gA}^{(l)}$, $l\in\N_0$, are self-adjoint if and only if so is $B_{R,\gA}$ and, moreover, $n_\pm(\rh_{R,\gA}^{(l)})=n_\pm(B_{R,\gA})$. Therefore, representation \eqref{eq:31.01}--\eqref{eq:31.03} completes the proof.
\end{proof}
In \cite{KM_09, AKM_10}, several simple necessary and sufficient self-adjointness conditions have been obtained.
\begin{corollary}\label{cor:51:01}
The Hamiltonian $\rH_{R,\gA}$ is self-adjoint for any
$\gA=\{\alpha_k\}_{k=1}^\infty\subset\R$ whenever
       \begin{equation}\label{eq:51.01}
\sum_{k=1}^\infty \gd_k^2=\infty.
\end{equation}
\end{corollary}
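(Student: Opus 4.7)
The plan is to reduce the self-adjointness of $\rH_{R,\gA}$ to that of the Jacobi matrix $B_{R,\gA}$ using Theorem \ref{th:sa}, and then apply a standard sufficient condition for Jacobi matrix self-adjointness (Carleman's test).

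First, by Theorem \ref{th:sa}, $\rH_{R,\gA} = \rH_{R,\gA}^*$ if and only if $B_{R,\gA} = B_{R,\gA}^*$. So the entire task reduces to proving that $B_{R,\gA}$ is self-adjoint in $\ell^2(\N)$ whenever $\sum_k \gd_k^2 = \infty$, for arbitrary real intensities $\gA$.

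The key observation is that the off-diagonal entries of $B_{R,\gA}$ have absolute value $b_k := (p_k p_{k+1} \gd_{k+1})^{-1}$, and by the definition \eqref{I_07B} of $p_k = \sqrt{\gd_k + \gd_{k+1}}$ we have the elementary bounds $p_k \ge \sqrt{\gd_{k+1}}$ and $p_{k+1} \ge \sqrt{\gd_{k+1}}$. Multiplying gives
\[
\frac{1}{b_k} = p_k p_{k+1} \gd_{k+1} \ge \gd_{k+1}^2,
\]
and hence $\sum_{k=1}^\infty b_k^{-1} \ge \sum_{k=1}^\infty \gd_{k+1}^2 = \infty$ by hypothesis. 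By the classical Carleman test (see, e.g., \cite{Ber68}), divergence of $\sum b_k^{-1}$ is sufficient for the Jacobi matrix $B_{R,\gA}$ to be self-adjoint on its minimal domain, regardless of the diagonal entries. This is crucial, since the $\gA_k$ enter only on the diagonal and Carleman's criterion is insensitive to them.

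There is no real obstacle here; the only thing to check carefully is the elementary estimate for $1/b_k$, which is trivial. Combining Carleman's test with Theorem \ref{th:sa} yields the claim. One may also note as a remark that this condition is not necessary: it is only a convenient sufficient test, since Theorem \ref{th:sa} gives an "iff" reduction to the Jacobi matrix while Carleman's test is one-sided.
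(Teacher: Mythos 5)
Your proof is correct and follows the same route as the paper: reduce to the Jacobi matrix $B_{R,\gA}$ via Theorem \ref{th:sa} and then invoke a self-adjointness criterion for $B_{R,\gA}$. The paper simply cites \cite[Proposition 5.7]{KM_09} for the second step, and your Carleman-test computation ($p_kp_{k+1}\gd_{k+1}\ge\gd_{k+1}^2$, hence $\sum_k p_kp_{k+1}\gd_{k+1}=\infty$) is exactly the argument behind that cited proposition, so you have merely made the "immediate" proof explicit.
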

\begin{proof}
The proof is immediate from Theorem \ref{th:sa} and \cite[Proposition 5.7]{KM_09}.
\end{proof}
It was observed in \cite{KM_09} that condition \eqref{eq:51.01} is sharp for the self-adjointness of $\rh_{R,\gA}^{(0)}$. Therefore, combining \cite[Proposition 5.9]{KM_09} with Theorem \ref{th:sa}, we immediately arrive at the following result. 
\begin{corollary}\label{cor:51.02}
Let \ $\sum_{k\in\N}\gd_k^2<\infty$ and
$\gd_{k-1}\gd_{k+1}\geq \gd_k^2$ for all $k\in\N$. If
\begin{equation}\label{I_06}
\sum_{k=1}^\infty \gd_{k+1}\bigl|\alpha_k+\frac{1}{\gd_{k}}+\frac{1}{\gd_{k+1}}\bigr|<\infty,
\end{equation}
then the operator $\rH_{R,\gA}$ is symmetric with $n_\pm(\rH_{R,\gA})=\infty$.
\end{corollary}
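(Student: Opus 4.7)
The plan is to reduce the assertion about $\rH_{R,\gA}$ to the corresponding statement about the Jacobi matrix $B_{R,\gA}$ by invoking Theorem~\ref{th:sa}, and then to verify the reduced statement by quoting Proposition~5.9 of \cite{KM_09}.

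More precisely, the ``in particular'' clause of Theorem~\ref{th:sa} already gives the desired implication: once $n_\pm(B_{R,\gA})=1$ is established, it follows directly that $n_\pm(\rH_{R,\gA})=\infty$. So the whole task reduces to showing that, under the three hypotheses of the corollary, one has $n_\pm(B_{R,\gA})=1$; equivalently, that $B_{R,\gA}$ is symmetric but not self-adjoint (and a real tridiagonal Jacobi matrix automatically has deficiency indices at most $(1,1)$).

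The sharpness remark preceding the corollary points out that this reduced statement is exactly the content of Proposition~5.9 of \cite{KM_09}: the hypotheses $\sum_k\gd_k^2<\infty$, the log-convexity $\gd_{k-1}\gd_{k+1}\ge \gd_k^2$, and the summability \eqref{I_06} are precisely what that proposition requires in order to conclude that the associated Jacobi matrix is in the limit-circle case at infinity, forcing $n_\pm(B_{R,\gA})=1$. Once this is in hand, Theorem~\ref{th:sa} delivers $n_\pm(\rH_{R,\gA})=\infty$, while symmetry of $\rH_{R,\gA}$ is automatic from its construction in Section~\ref{intro} as a closure of the explicit symmetric minimal operator.

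The main (in fact only) obstacle is external: confirming that the cited result \cite[Proposition~5.9]{KM_09} applies verbatim to the Jacobi matrix \eqref{I_07}--\eqref{I_07B} with the normalization used here. A small consistency check is that the auxiliary hypothesis $\gd^*:=\sup_k\gd_k<\infty$ required in Lemma~\ref{lem:_KM}, and hence implicit in Theorem~\ref{th:sa}, is automatic in this setting: $\sum_k\gd_k^2<\infty$ forces $\gd_k\to 0$, so in particular $\gd^*<\infty$. No further computation is needed internally.
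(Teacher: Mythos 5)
Your proposal is correct and follows exactly the paper's own route: the corollary is obtained by combining \cite[Proposition 5.9]{KM_09} (giving $n_\pm(B_{R,\gA})=1$ under the stated hypotheses) with the ``in particular'' clause of Theorem~\ref{th:sa}. The added observation that $\sum_k\gd_k^2<\infty$ makes $\gd^*<\infty$ automatic is a worthwhile consistency check, but otherwise the argument matches the paper's.
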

In the case  $\{d_k\} \in l_2$, the question on self-adjointness of the operator $\rH_{R,\gA}$ is quite subtle. Several further necessary conditions can be found in \cite[\S 5.2]{KM_09} (see also \cite{Bra85, AKM_10}). Let us demonstrate this by the following example.
\begin{corollary}\label{cor:51.03}
Let $\gA=\{\gA_k\}_{k=1}^\infty\subset \R$ and let $R=\{r_k\}_{k=1}^\infty$ be given by $r_k-r_{k-1}=\frac{1}{k}$, $k\ge 1$ with $r_0:=0$. Let also $\rH:=\rH_{R,\gA}$ be the corresponding Schr\"odinger operator  in $L^2(\R^n)$. 
Then:
 \item  $(i)$\  \ \  $\rH=\rH^*$ if \  $\sum_{k=1}^\infty |\alpha_k|k^{-3}=\infty$;
\item $(ii)$ \ $\rH=\rH^*$ if \ $\alpha_k\leq -2\bigl(2k+1)+O(k^{-1})$;
\item $(iii)$ \ $\rH=\rH^*$ if \ $\alpha_k\geq -Ck^{-1},\ k\in \N,\ C\equiv const>0$;
\item $(iv)$ \ $n_\pm(\rH)=\infty$ if \
$\alpha_k=-2k-1+O(k^{-\varepsilon})$ with some $ \varepsilon>0$;
\item  $(v)$ \ \
$n_\pm(\rH)=\infty$ if \ $\alpha_k=-A\left(2k+1\right)+O(k^{-1})$, $A\in(0,2)$.
    \end{corollary}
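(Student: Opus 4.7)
By Theorem~\ref{th:sa}, each of the assertions $(i)$--$(v)$ reduces to the corresponding deficiency-index statement for the Jacobi matrix $B_{R,\gA}$. Substituting $\gd_k = 1/k$ into \eqref{I_07}--\eqref{I_07B} yields
\[
p_k^2 = \gd_k + \gd_{k+1} = \frac{2k+1}{k(k+1)}, \qquad b_k := \frac{k(k+1)}{2k+1}\bigl(\gA_k + 2k+1\bigr), \qquad |a_k| := (p_k p_{k+1}\gd_{k+1})^{-1} \sim \frac{k^2}{2}
\]
as $k\to\infty$, for the diagonal $b_k$ and off-diagonal modulus $|a_k|$ of $B_{R,\gA}$. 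Since $\sum \gd_k^2 = \sum k^{-2} < \infty$, Corollary~\ref{cor:51:01} is not applicable; on the other hand $\gd_{k-1}\gd_{k+1} = 1/[(k-1)(k+1)] > 1/k^2 = \gd_k^2$, so the log-concavity hypothesis of Corollary~\ref{cor:51.02} is in force throughout.

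Statement $(iv)$ then follows at once from Corollary~\ref{cor:51.02}: under $\gA_k + 2k+1 = O(k^{-\varepsilon})$ one has
\[
\sum_k \gd_{k+1}\Bigl|\gA_k + \tfrac{1}{\gd_k} + \tfrac{1}{\gd_{k+1}}\Bigr| = \sum_k \frac{|\gA_k + 2k+1|}{k+1} = O\Bigl(\sum_k k^{-1-\varepsilon}\Bigr) < \infty.
\]
For $(iii)$, the bound $\gA_k \geq -C/k$ yields $b_k \geq k(k+1) - O(1)$, while for $(ii)$ the bound $\gA_k \leq -2(2k+1) + O(k^{-1})$ gives $b_k \leq -k(k+1) + O(1)$. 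In both cases, combining with $|a_k| + |a_{k-1}| \sim k^2$ and the elementary estimate $2|\mathrm{Re}(a_k x_k \overline{x_{k+1}})| \leq |a_k|(|x_k|^2 + |x_{k+1}|^2)$ shows that $\pm B_{R,\gA}$ is bounded below; hence $B_{R,\gA}$ is self-adjoint, and Theorem~\ref{th:sa} gives $\rH = \rH^*$.

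Cases $(i)$ and $(v)$ are more subtle and rely on the finer Jacobi-matrix criteria collected in \cite[\S 5.2]{KM_09}. For $(i)$, the hypothesis $\sum |\gA_k| k^{-3} = \infty$ is a Carleman/Dennis--Wall type growth condition tailored to the present weights $(a_k, b_k)$, and it forces $B_{R,\gA}$ into the limit-point case via the corresponding result of \cite{KM_09}. The main obstacle is $(v)$: here $b_k \sim (1-A)k^2$ with $A \in (0,2)$, so $b_k/a_k$ has a finite limit strictly inside $(-2, 2)$, Corollary~\ref{cor:51.02} fails (the summands in \eqref{I_06} do not tend to zero), yet one still expects $n_\pm(B_{R,\gA}) = 1$. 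Establishing this in the precise window $A \in (0,2)$ — and, in particular, seeing that $A = 2$ is the sharp threshold separating the limit-point from the limit-circle regime — requires the most delicate of the theorems in \cite[\S 5.2]{KM_09}, and is where the real work of the proof sits.
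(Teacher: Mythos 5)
Your reduction via Theorem~\ref{th:sa} and your computation of the Jacobi parameters ($b_k=\frac{k(k+1)}{2k+1}(\gA_k+2k+1)$ and $|a_k|\sim k^2/2$) follow exactly the paper's route: the paper proves this corollary simply by combining Theorem~\ref{th:sa} with the one-dimensional results of \cite[Example~5.12 and Proposition~5.13]{KM_09}. So your deferral of parts $(i)$ and $(v)$ to \cite[\S 5.2]{KM_09} is no worse than what the paper itself does, and your treatment of $(iv)$ via Corollary~\ref{cor:51.02} is complete and correct (the hypotheses $\sum\gd_k^2<\infty$ and $\gd_{k-1}\gd_{k+1}\ge\gd_k^2$ do hold for $\gd_k=1/k$).

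The genuine gap is in $(ii)$ (and, as written, also in $(iii)$): for a Jacobi matrix, semiboundedness does \emph{not} imply essential self-adjointness --- an indeterminate Stieltjes moment problem produces a nonnegative Jacobi matrix with deficiency indices $(1,1)$ --- so the inference ``$\pm B_{R,\gA}$ is bounded below, hence $B_{R,\gA}$ is self-adjoint'' is a non sequitur. For $(iii)$ the step is rescued by the paper's own machinery: lower semiboundedness of $B_{R,\gA}$ gives lower semiboundedness of $\rH_{R,\gA}$ by Theorem~\ref{th_sb}, and the Glazman--Povzner Theorem~\ref{th:GP} then yields self-adjointness; alternatively, condition \eqref{I_brinck_B} can be verified directly, since $r_k\sim\log k$ and $\sum_{r_k\in[r,r+1]}Ck^{-1}=O(1)$. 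For $(ii)$, however, $B_{R,\gA}$ is bounded \emph{above}, and no Glazman--Povzner theorem is available for $-\rh^{(0)}_{R,\gA}$, which is not a Schr\"odinger operator. You therefore need an additional ingredient: for instance, the Carleman-type criterion that a semibounded Jacobi matrix with $\sum_n|a_n|^{-1/2}=\infty$ is essentially self-adjoint (applicable here since $|a_n|\sim n^2/2$), or else the criterion actually invoked in \cite[\S 5.2]{KM_09}. Until one of these is supplied, your argument for $(ii)$ is incomplete.
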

    \begin{proof}
The proof follows in a straightforward manner from Theorem \ref{th:sa} and
\cite[Example 5.12 and Proposition 5.13]{KM_09}
    \end{proof}

\subsection{Lower-semiboundedness}

Firstly, observe that Theorem \ref{th_KM}$(ii)$ can be easily extended to the multidimensional case.
\begin{theorem}\label{th_sb}
The operator $\rH_{R,\gA}$ is lower semibounded if and only if the matrix $B_{R,\gA}$ is also lower semibounded.
\end{theorem}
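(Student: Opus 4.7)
The plan is to combine the partial-wave decomposition \eqref{eq:31.02}--\eqref{eq:31.03} with Lemma \ref{lem:_KM}(ii), using a monotonicity-in-$l$ argument to obtain a lower bound uniform across the direct summands.

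The ``only if'' direction is immediate: decomposition \eqref{eq:31.03} exhibits $\rh_{R,\gA}^{(l)}$ as a direct summand of $\rH_{R,\gA}$ (with a multiplicity coming from $\mathcal H_l$), so if $\rH_{R,\gA}\geq -C$ then in particular $\rh_{R,\gA}^{(l)}\geq -C$ for every $l\in\N_0$, and Lemma \ref{lem:_KM}(ii) yields lower semiboundedness of $B_{R,\gA}$.

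For the ``if'' direction, suppose $B_{R,\gA}$ is lower semibounded. Writing the Bessel coefficient in the form $\frac{(n-1)(n-3)}{4}+l(l+n-2)=\tilde l(\tilde l+1)$ with $\tilde l:=l+\frac{n-3}{2}\ge\tilde l_{\min}:=\frac{n-3}{2}\ge -\tfrac12$, Lemma \ref{lem:_KM}(ii) applied to $\rh_{R,\gA}^{(\tilde l_{\min})}$ gives a constant $C\ge 0$ such that $\rh_{R,\gA}^{(\tilde l_{\min})}\ge -C$. I would then show that the same $C$ works for every $\tilde l\ge \tilde l_{\min}$ by comparing quadratic forms. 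Indeed, on the form domain of $\rh_{R,\gA}^{(\tilde l)}$, which (for $\tilde l\ge\tilde l_{\min}$) is continuously embedded into the form domain of $\rh_{R,\gA}^{(\tilde l_{\min})}$, one has
\begin{equation*}
\gt_{R,\gA}^{(\tilde l)}[f]-\gt_{R,\gA}^{(\tilde l_{\min})}[f]=\big(\tilde l(\tilde l+1)-\tilde l_{\min}(\tilde l_{\min}+1)\big)\int_0^\infty\frac{|f(r)|^2}{r^2}\,dr\ge 0,
\end{equation*}
since $x\mapsto x(x+1)$ is monotone nondecreasing on $[-1/2,\infty)$. Hence $\gt_{R,\gA}^{(\tilde l)}[f]\ge\gt_{R,\gA}^{(\tilde l_{\min})}[f]\ge -C\|f\|^2$, so $\rh_{R,\gA}^{(\tilde l)}\ge -C$ for every admissible $\tilde l$. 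Plugging this uniform bound into \eqref{eq:31.03} and taking the direct sum over $l\in\N_0$ gives $\rH_{R,\gA}\ge -C$, which completes the proof.

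The main delicate step is the form comparison: one has to make sure that the form domains really are nested as $\tilde l$ increases and that the boundary condition \eqref{eq:1.8} imposed at $r=0$ in the limit-circle regime $\tilde l\in[-1/2,1/2)$ is consistent with the Friedrichs form. This is a standard fact for Bessel operators, but if one wants to avoid relying on it, one may instead invoke Lemma \ref{lem:_KM}(ii) with its localization argument: split at some $c\in(0,r_1)$, observe that on $(0,c)$ the operator $\rh^{(\tilde l),D}_{\emptyset,\gA}(0,c)$ is nonnegative for all $\tilde l\ge\tilde l_{\min}\ge -1/2$ (since the Bessel potential is already positive on $(0,c)$ once $\tilde l\ge 0$, and for $\tilde l\in[-1/2,0)$ there are only finitely many such indices), and apply the form comparison only on the half-line $(c,\infty)$ where the Bessel potential is uniformly bounded and the domains agree. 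Either route yields the required uniformity.
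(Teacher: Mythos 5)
Your proposal is correct and follows essentially the same route as the paper: the partial-wave decomposition reduces everything to the operators $\rh_{R,\gA}^{(l)}$, the monotonicity of the Bessel coefficient in $l$ shows that the $l=0$ channel gives a lower bound uniform over all $l$, and Lemma \ref{lem:_KM}(ii) (Theorem \ref{th_KM}(ii)) transfers the question to $B_{R,\gA}$. Your extra care about the form-domain nesting in the limit-circle regime is a reasonable refinement of a step the paper states only at the operator level, but it does not change the argument.
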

\begin{proof}
By Theorem \ref{th_KM}$(ii)$, the operators $\rh_{R,\gA}^{(l)}$, $l\in\N_0$, are lower semibounded if and only if so is $B_{R,\gA}$. Moreover, $\rh_{R,\gA}^{(l)}\ge \rh_{R,\gA}^{(0)}$ since the potential $q^{(l)}(r)=\frac{l(n)}{r^2}$, $l(n):=\frac{(n-1)(n-3)}{4} +l(l+n-2) $, is positive on $\R_+$ if $l\in\N$ and $n\in\N$. Hence, the representation \eqref{eq:31.01}--\eqref{eq:31.03} completes the proof.
\end{proof}
Also, we obtain the following generalization of the Glazman--Povzner theorem.
\begin{theorem}\label{th:GP}
The operator $\rH_{R,\gA}$ is self-adjoint if it is lower semibounded.
\end{theorem}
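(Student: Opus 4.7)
The plan is to leverage the orthogonal decomposition \eqref{eq:31.02}--\eqref{eq:31.03}, which reduces the multidimensional problem to the family of one-dimensional Bessel operators $\rh_{R,\gA}^{(l)}$, $l\in\N_0$. For each such operator the desired Glazman--Povzner-type implication ``lower semibounded $\Rightarrow$ self-adjoint'' is already available: Theorem \ref{th_III.1} establishes it for the case $l=0$, and by Remark \ref{rem:3.3} it extends verbatim to arbitrary $l\ge -1/2$ (note that for $n\ge 2$ we always have $l(n)=\tfrac{(n-1)(n-3)}{4}+l(l+n-2)\ge -\tfrac{1}{4}$, so the Bessel framework of Section \ref{sec:iii} applies).

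First, I would transfer the lower-semiboundedness hypothesis to each summand. Since $\rH_{R,\gA}$ is symmetric and each subspace $\mathsf{U}^{-1}L^2(0,\infty)\otimes\cH_l$ is reducing, testing the quadratic form $(\rH_{R,\gA}f,f)\ge -C\|f\|^2$ on vectors $f=\mathsf{U}^{-1}g\otimes Y$ with $g\in\dom(\rh_{R,\gA}^{(l)})$ and $Y\in\cH_l$ yields
\[
(\rh_{R,\gA}^{(l)}g,g)_{L^2(\R_+)}\ge -C\,\|g\|^2_{L^2(\R_+)},
\]
so every Bessel component $\rh_{R,\gA}^{(l)}$ is lower semibounded with the same constant $-C$.

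Next, applying the Bessel-extension of Theorem \ref{th_III.1} (Remark \ref{rem:3.3}) to each index $l\in\N_0$, we conclude that $\rh_{R,\gA}^{(l)}=(\rh_{R,\gA}^{(l)})^*$ for every $l$. The direct sum of self-adjoint operators is self-adjoint on its natural domain, so \eqref{eq:31.03} yields $\rH_{R,\gA}=\rH_{R,\gA}^*$.

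The only genuinely delicate point in this argument is the first step, i.e., justifying that lower semiboundedness of a closed symmetric operator is inherited by each of its reducing parts when the reduction is given by the partial-wave decomposition. One has to check that sufficiently many vectors of the tensor form $\mathsf{U}^{-1}g\otimes Y$ lie in $\dom(\rH_{R,\gA})$ (equivalently, in the form domain) so that the inherited bound holds on a core of $\rh_{R,\gA}^{(l)}$; this follows from the construction of the minimal operator in \eqref{III_02B}, whose domain consists of compactly supported functions with the prescribed jump conditions, and which manifestly splits along the partial-wave decomposition. Once this routine verification is in place, the conclusion is immediate from Remark \ref{rem:3.3}.
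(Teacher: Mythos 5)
Your proof is correct and follows essentially the same route as the paper: reduce via the partial-wave decomposition \eqref{eq:31.02}--\eqref{eq:31.03} to the one-dimensional Bessel operators and invoke the Glazman--Povzner theorem for Sturm--Liouville operators with $\delta$-interactions. The only divergence is that the paper applies Theorem \ref{th_III.1} solely to the $l=0$ component and then transfers self-adjointness back to $\rH_{R,\gA}$ through the equivalence with the Jacobi matrix $B_{R,\gA}$ (Theorem \ref{th:sa}), whereas you apply the extension of Theorem \ref{th_III.1} to every $l\ge -\tfrac12$ (Remark \ref{rem:3.3}) and take the direct sum; your variant has the minor advantage of bypassing the Jacobi-matrix machinery and the hypothesis $\gd^*<\infty$ that underlies Theorem \ref{th_KM}.
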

\begin{proof}
Firstly, observe that the operator $\rH_{R,\gA}$ is lower semibounded if and only if so is the operator $\rh_{R,\gA}^{(0)}$. By Theorem \ref{th_III.1}, the operator $\rh_{R,\gA}^{(0)}$ is self-adjoint if it is lower semibounded. However, the operators $\rh_{R,\gA}^{(0)}$ and $\rH_{R,\gA}$ are self-adjoint simultaneously.
\end{proof}
\begin{corollary}\label{cor:52.01}
Assume that $\gA=\{\gA_n\}_{1}^\infty$ satisfies the following condition
\begin{equation}\label{I_brinck_B}
\sup_{r>0}\sum_{r_k\in[r,r+1]}|\gA_k^-| <+\infty,\qquad \gA_k^-=(\gA_k-|\gA_k|)/2.
\end{equation}
Then the operator $\rH_{R,\gA}$ is self-adjoint and lower semibounded.

If $\gA=\gA^-$, then condition \eqref{I_brinck_B} is also necessary for $\rH_{R,\gA}$ to be lower semibounded.
    \end{corollary}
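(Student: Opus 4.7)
The plan proceeds in two parts. For the sufficiency of \eqref{I_brinck_B}, my first step is to apply Theorem \ref{th_III.1}, extended via Remark \ref{rem:3.3} to the Bessel operators of Section \ref{sec:iii}, to the partial-wave operator $\rh_{R,\gA}^{(0)}$ with $q\equiv 0$. Condition \eqref{I_brinck_B} then guarantees that $\rh_{R,\gA}^{(0)}$ is lower semibounded (and self-adjoint). To bootstrap this to the full direct sum, I would invoke the monotonicity $\rh_{R,\gA}^{(l)} \ge \rh_{R,\gA}^{(0)}$ for $l \ge 1$ already exploited in the proof of Theorem \ref{th_sb}, coming from the nonnegativity of the centrifugal coefficient $\frac{(n-1)(n-3)}{4} - \varkappa_l$ for $l \ge 1$. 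This gives a lower bound uniform in $l$, so the partial-wave decomposition \eqref{eq:31.01}--\eqref{eq:31.03} yields that $\rH_{R,\gA}$ is lower semibounded with the same bound. Self-adjointness is then immediate from Theorem \ref{th:GP}.

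For the necessity direction under the additional hypothesis $\gA = \gA^-$, I would argue as follows. If $\rH_{R,\gA}$ is lower semibounded, then by the equivalence established at the start of the proof of Theorem \ref{th:GP}, so is $\rh_{R,\gA}^{(0)}$. Since all interaction strengths are nonpositive, I would then invoke the converse of Theorem \ref{th_III.1} for the scalar Bessel operator --- the necessity of the Brinck-type condition for lower semiboundedness of Hamiltonians with purely attractive $\delta$-interactions, proved in \cite{AKM_10} --- to recover \eqref{I_brinck_B}.

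The main obstacle is precisely this necessity step. The sufficiency is a routine assembly of pieces already present in the excerpt: the scalar form criterion of Theorem \ref{th_III.1}, its Bessel extension via Remark \ref{rem:3.3}, and the partial-wave arguments of Theorems \ref{th_sb} and \ref{th:GP}. By contrast, the necessity rests on the non-trivial scalar fact that, when all strengths are negative, the failure of \eqref{I_brinck_B} forces the quadratic form associated with $\rh_{R,\gA}^{(0)}$ to admit directions of arbitrarily large negative energy. The standard route is to build compactly supported trial functions concentrated on unit intervals where the $|\gA_k|$ accumulate without bound; such constructions are delicate but are already carried out in \cite{AKM_10} and can be invoked without repetition. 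The remaining bookkeeping amounts to checking that nothing is lost when isolating the $l=0$ summand from the full decomposition \eqref{eq:31.03}, which is automatic since $\rh_{R,\gA}^{(0)}$ is unitarily equivalent to an orthogonal direct summand of $\rH_{R,\gA}$.
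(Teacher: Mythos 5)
Your argument is correct and follows essentially the same route as the paper: the paper's proof is the one-line "combine Theorem \ref{th_sb} with Theorem \ref{th_III.1} (see also [AKM\_10, Prop.\ 3.6])", which unpacks to exactly your chain --- the Brinck-type condition gives lower semiboundedness of $\rh_{R,\gA}^{(0)}$ via Theorem \ref{th_III.1} (and Remark \ref{rem:3.3}), the monotonicity $\rh_{R,\gA}^{(l)}\ge\rh_{R,\gA}^{(0)}$ from the proof of Theorem \ref{th_sb} lifts this uniformly to the direct sum, self-adjointness follows from Theorem \ref{th:GP}, and necessity for $\gA=\gA^-$ is delegated to the cited result of \cite{AKM_10}. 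No gaps.
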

    \begin{proof}
    The claim follows by combining Theorem \ref{th_sb} with Theorem \ref{th_III.1} (see also \cite[Prop. 3.6]{AKM_10}).
    \end{proof}

\subsection{Characterization of the spectrum}

Before proceeding further, we need one result on the essential spectra of Hamiltonians with spherically symmetric potentials.
\begin{theorem}[\cite{hhk_87}]\label{th:hempel}
Let $\rH_{R,\gA}$ and $\rh_{R,\gA}^{(0)}$ be the operators \eqref{I_01} and \eqref{III_02}, respectively. Let also $R$ and $\gA$ satisfy \eqref{I_brinck_B}. Then $\sigma(\rH_{R,\gA})$ is bounded from below and
\[
\sigma_{\ess}(\rH_{R,\gA})=\big[\inf \sigma_{\ess}(\rh_{R,\gA}^{(0)}),+\infty\big).
\]
\end{theorem}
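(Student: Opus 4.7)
The plan is to exploit the partial wave decomposition \eqref{eq:31.02}--\eqref{eq:31.03}, which, after conjugation by $\mathsf{U}$, identifies $\rH_{R,\gA}$ with the orthogonal sum $\bigoplus_{l\ge 0}\rh_{R,\gA}^{(l)}\otimes I_l$; each multiplicity $\dim\cH_l$ is finite but unbounded in $l$. Boundedness from below is immediate: under hypothesis \eqref{I_brinck_B}, Corollary \ref{cor:52.01} (based on Theorem \ref{th_sb}) already yields self-adjointness and lower semiboundedness of $\rH_{R,\gA}$. The decisive monotonicity ingredient is $\rh_{R,\gA}^{(l)}\ge\rh_{R,\gA}^{(0)}$ for $l\ge 1$, since the additional centrifugal term $l(l+n-2)/r^2$ is nonnegative.

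For the inclusion $[\lambda_0,+\infty)\subseteq\sigma_{\ess}(\rH_{R,\gA})$ with $\lambda_0:=\inf\sigma_{\ess}(\rh_{R,\gA}^{(0)})$, I would employ the angular-momentum Weyl sequence construction that is the heart of \cite{hhk_87}. Fix $\lambda\ge\lambda_0$ and first produce a singular Weyl sequence $\{\phi_n\}\subset L^2(\R_+)$ for $\rh_{R,\gA}^{(0)}$ at $\lambda_0$ which is localized in intervals $[R_n,R_n+1]$ with $R_n\uparrow\infty$; this follows from Weyl's criterion together with the density of compactly supported functions in the form domain and translation to infinity. Then choose $l_n\ge 0$ so that $l_n(l_n+n-2)/R_n^2\to\lambda-\lambda_0$ and set $\Psi_n(x):=(\mathsf{U}^{-1}\phi_n)(|x|)\,Y_{l_n}(x/|x|)$ for a unit spherical harmonic $Y_{l_n}\in\cH_{l_n}$. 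On the support of $\phi_n$ the centrifugal term $l_n(l_n+n-2)/r^2$ differs from $\lambda-\lambda_0$ by $O(R_n^{-1})$, while the contribution of the $\delta$-shells inside $[R_n,R_n+1]$ is controlled uniformly thanks to \eqref{I_brinck_B}; together these show that $(\rH_{R,\gA}-\lambda)\Psi_n\to 0$ in $L^2(\R^n)$, with $\|\Psi_n\|=1$ and $\Psi_n\rightharpoonup 0$.

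For the reverse inclusion, fix $\lambda<\lambda_0$. The centrifugal perturbation $l(l+n-2)/r^2$ vanishes at infinity and is locally form-compact relative to $\rh_{R,\gA}^{(0)}$, so $\sigma_{\ess}(\rh_{R,\gA}^{(l)})=\sigma_{\ess}(\rh_{R,\gA}^{(0)})\subseteq[\lambda_0,+\infty)$ for every $l$, and below $\lambda_0$ each $\rh_{R,\gA}^{(l)}$ admits only isolated eigenvalues of finite multiplicity. The min--max principle and the monotonicity $\rh_{R,\gA}^{(l+1)}\ge\rh_{R,\gA}^{(l)}$ imply that the number of eigenvalues of $\rh_{R,\gA}^{(l)}$ in $(-\infty,\lambda]$ is non-increasing in $l$ and eventually vanishes, because for $l$ sufficiently large the centrifugal barrier in the quadratic form dominates the $\delta$-shell contribution. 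Hence only finitely many $l$ contribute, each with finite multiplicity $\dim\cH_l<\infty$, and so $(-\infty,\lambda]\cap\sigma(\rH_{R,\gA})$ consists of finitely many eigenvalues of finite multiplicity, yielding $\lambda\notin\sigma_{\ess}(\rH_{R,\gA})$. The hardest step is the Weyl-sequence construction: one must verify that the centrifugal approximation, the localization error, and the interaction with the $\delta$-shells contained in $[R_n,R_n+1]$ all vanish in the limit---this is precisely what \eqref{I_brinck_B} is designed to guarantee.
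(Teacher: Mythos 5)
The paper does not actually prove Theorem \ref{th:hempel}: it is quoted from \cite{hhk_87}, where it is proved for locally integrable spherically symmetric potentials, and the remark following it merely asserts that the argument extends to concentric $\delta$-shells. Your proposal is therefore not an alternative to the paper's proof but a reconstruction of the Hempel--Hinz--Kalf argument itself, and in outline it is the right one: partial wave decomposition, the monotonicity $\rh_{R,\gA}^{(l)}\ge\rh_{R,\gA}^{(0)}$, a Weyl sequence at a prescribed $\lambda\ge\lambda_0:=\inf\sigma_{\ess}(\rh_{R,\gA}^{(0)})$ manufactured by attaching spherical harmonics of suitably chosen degree $l_n$ to radial bumps living far out, and an argument excluding essential spectrum below $\lambda_0$.

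Two steps need repair. First, a singular Weyl sequence for $\rh_{R,\gA}^{(0)}$ at $\lambda_0$ cannot in general be supported in unit intervals $[R_n,R_n+1]$: already for $q\equiv0$ and $\gA\equiv 0$ one has $\lambda_0=0$, while any normalized function vanishing outside an interval of length $1$ has $\int|\phi'|^2\,dr\ge\pi^2$ by the Poincar\'e inequality, so the quadratic form cannot approach $\lambda_0$ along such functions. One must allow supports $[R_n,R_n+L_n]$ with $L_n\to\infty$ and $L_n/R_n\to0$; the centrifugal term is then still constant on the support up to a relative error $O(L_n/R_n)$, and the integrality of $l_n$ costs only $O(R_n^{-1})$, so the construction survives --- but producing such localized near-minimizers is itself a nontrivial Persson-type decoupling step that your sketch takes for granted, and it is precisely here that \eqref{I_brinck_B} enters. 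Second, in the reverse inclusion, the claim that the eigenvalue count of $\rh_{R,\gA}^{(l)}$ below a fixed $\lambda<\lambda_0$ \emph{eventually vanishes} does not follow from monotonicity plus ``the centrifugal barrier dominates'': the added term $l(l+n-2)/r^2$ decays at infinity, so low-lying minimizers could in principle escape to infinity, where the barrier is negligible. The standard fix is Neumann bracketing: decouple at a radius $\rho$ chosen so large that the exterior Neumann operator on $(\rho,\infty)$ is bounded below by some number exceeding $\lambda$ (possible because its infimum tends to $\lambda_0$ as $\rho\to\infty$ under \eqref{I_brinck_B}), and observe that the interior piece is bounded below by $l(l+n-2)/\rho^2+O(1)\to+\infty$. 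With these two points supplied, your argument is a correct rendering of the proof that the paper omits.
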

\begin{remark}
Theorem \ref{th:hempel} was obtained in \cite{hhk_87} in the case of locally integrable spherically symmetric potentials (see \cite[Theorem 2]{hhk_87}). However, the approach used there can be extended to the case of Hamiltonians with concentric $\delta$-shells (see, e.g., \cite{ef_07,ef_08}).
\end{remark}

\subsubsection{Discreteness}

In the study of the discreteness problem for the multidimensional operator we will restrict ourselves to the lower semibounded case.
\begin{theorem}
Let  the sequences $R$ and $\gA$ satisfy \eqref{I_brinck_B} and let $d_k=r_k-r_{k-1}\to 0$ as $k\to\infty$. 
The spectrum $\sigma(\rH_{R,\gA})$ of the operator
$\rH_{R,\gA}$ is discrete if and only if for every
$\varepsilon>0$ condition \eqref{2.1} holds true.
\end{theorem}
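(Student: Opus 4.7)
The plan is to reduce the multidimensional discreteness question to the one-dimensional Bessel criterion already established in Section \ref{sec:iii}, via two ingredients: the partial-wave decomposition \eqref{eq:31.01}--\eqref{eq:31.03} and the essential-spectrum theorem of Hempel--Hinz--Kalf (Theorem \ref{th:hempel}).

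First, by Corollary \ref{cor:52.01} the assumption \eqref{I_brinck_B} guarantees that $\rH_{R,\gA}$ is self-adjoint and lower semibounded; the same reasoning (via Remark \ref{rem:3.3}, i.e.\ the Bessel version of Theorem \ref{th_III.1}) shows that every partial wave operator $\rh_{R,\gA}^{(l)}$ is self-adjoint and lower semibounded. Since the spectrum of a lower semibounded self-adjoint operator is discrete if and only if its essential spectrum is empty, the problem reduces to showing $\sigma_{\ess}(\rH_{R,\gA})=\emptyset$ if and only if \eqref{2.1} holds.

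Next, Theorem \ref{th:hempel} gives
\[
\sigma_{\ess}(\rH_{R,\gA})=\bigl[\inf\sigma_{\ess}(\rh_{R,\gA}^{(0)}),\,+\infty\bigr),
\]
so $\sigma_{\ess}(\rH_{R,\gA})=\emptyset$ is equivalent to $\sigma_{\ess}(\rh_{R,\gA}^{(0)})=\emptyset$, which (again by lower semiboundedness and self-adjointness of $\rh_{R,\gA}^{(0)}$) is equivalent to discreteness of $\sigma(\rh_{R,\gA}^{(0)})$. Thus only the $s$-wave channel needs to be analysed; the centrifugal repulsion $\varkappa_l/r^2>0$ for $l\ge 1$ only pushes higher channels further to the right and Theorem \ref{th:hempel} confirms that they do not create extra essential spectrum below that of the $s$-wave.

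Finally, $\rh_{R,\gA}^{(0)}$ is exactly a Bessel operator of the form \eqref{eq:30.01} with $q\equiv 0$ and effective parameter $l_0=(n-3)/2\ge -1/2$, since $l_0(l_0+1)=(n-1)(n-3)/4$. Hence by Remark \ref{rem:3.3}, the Molchanov-type discreteness criterion (Theorem \ref{th_discretcriter}) applies to $\rh_{R,\gA}^{(0)}$ under the hypothesis \eqref{I_brinck_B}, and therefore $\sigma(\rh_{R,\gA}^{(0)})$ is discrete precisely when \eqref{2.1} holds for every $\varepsilon>0$. Chaining the equivalences completes the proof.

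The step on which everything hinges, and which carries the real analytic content, is Theorem \ref{th:hempel}: the passage from the multidimensional essential spectrum to that of the single lowest partial wave. All remaining steps amount to quoting earlier results, and the assumption $\gd_k\to 0$ is in fact built into \eqref{2.1} (if $\gd_k\not\to 0$, the sum in \eqref{2.1} vanishes identically for small enough $\varepsilon$ along certain sequences $r\to\infty$), so it plays no independent role in the argument.
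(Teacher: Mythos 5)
Your proposal is correct and follows essentially the same route as the paper: the Molchanov-type criterion (Theorem \ref{th_discretcriter}, via its Bessel extension in Remark \ref{rem:3.3}) applied to the lowest partial wave, combined with the Hempel--Hinz--Kalf identity of Theorem \ref{th:hempel} to pass to the multidimensional essential spectrum. Your additional observations --- identifying $\rh_{R,\gA}^{(0)}$ explicitly as a Bessel operator with $l_0=(n-3)/2$ and noting that the hypothesis $\gd_k\to 0$ is already forced by \eqref{2.1} when $q\equiv 0$ --- are accurate refinements of the argument the paper gives more tersely.
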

\begin{proof}
Under the assumption \eqref{I_brinck_B}, By Theorem \ref{th_discretcriter}, $\sigma(\rh_{R,\gA})$ is purely discrete, $\sigma_{\ess}(\rh_{R,\gA})=\emptyset$, if and only if \eqref{2.1} holds true for every $\varepsilon>0$. Therefore, by Theorem \ref{th:hempel}, $\sigma_{\ess}(\rH_{R,\gA})=\emptyset$ if and only if \eqref{2.1} holds true for every $\varepsilon>0$.
\end{proof}

\begin{remark}
If $\sigma(\rh_{R,\gA}^{(0)})$ is discrete but non lower semibounded, then it is possible to construct $R$ and $\gA$ such that $\sigma_{\ess}(\rH_{R,\gA})=\R$. We shall treat this case in greater detail elsewhere.
\end{remark}
%

\subsubsection{Continuous spectrum}
\begin{theorem}
Let the sequence $\gA$ satisfy \eqref{I_brinck}.
          If 
          \begin{equation}\label{3.2A}
\lim_{n\to\infty}\sum_{r_k\in[n,n+1]}|\alpha_k|=0,
     \end{equation}
      then $\sigma_{\ess}(\rH_{R,\gA})=\sigma_{\ess}(\rh_{R,\gA}^{(l)})=\R_+$.
\end{theorem}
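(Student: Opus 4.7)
The plan is to reduce the multidimensional assertion to its one-dimensional counterpart via the partial-wave decomposition \eqref{eq:31.02}--\eqref{eq:31.03}, apply the 1D continuous-spectrum theorem (Theorem \ref{thContSpec}) extended to the Bessel setting by Remark \ref{rem:3.3} in each angular sector, and finally lift the conclusion to $\rH_{R,\gA}$ by invoking Theorem \ref{th:hempel}.

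First I note that \eqref{3.2A} trivially implies the Brinck bound \eqref{I_brinck} (equivalently \eqref{I_brinck_B}), so the standing hypotheses of Theorems \ref{thContSpec} and \ref{th:hempel} are at our disposal. Matching coefficients in \eqref{III_02} gives
\[
\tfrac{(n-1)(n-3)}{4} - \varkappa_l = L(L+1) \quad \text{with} \quad L := l + \tfrac{n-3}{2} \ge -\tfrac{1}{2},
\]
valid for every $n \ge 2$ and $l \ge 0$. Hence the radial operator $\rh_{R,\gA}^{(l)}$ of \eqref{III_02} is precisely a Bessel operator of type \eqref{eq:30.01} with $q \equiv 0$ and angular parameter $L$. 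By Remark \ref{rem:3.3}, Theorem \ref{thContSpec} applies and yields
\[
\sigma_c\bigl(\rh_{R,\gA}^{(l)}\bigr) = \sigma_c\bigl(\rh_0^{(l)}\bigr) = \R_+,
\]
where $\rh_0^{(l)}$ is the free Bessel operator. Since the latter has purely absolutely continuous spectrum $[0,+\infty)$ and the $\delta$-interaction is a relatively form-compact perturbation under \eqref{3.2A} (cf.\ the techniques of \cite{AKM_10}), Weyl's stability theorem gives $\sigma_{\ess}(\rh_{R,\gA}^{(l)}) = \R_+$ for every $l \ge 0$.

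The multidimensional step is now immediate: since \eqref{I_brinck_B} holds, Theorem \ref{th:hempel} applies to $\rH_{R,\gA}$, and combining it with the previous paragraph specialized to $l = 0$ yields
\[
\sigma_{\ess}(\rH_{R,\gA}) = \bigl[\inf \sigma_{\ess}(\rh_{R,\gA}^{(0)}),+\infty\bigr) = \R_+.
\]

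The main obstacle lies in passing from $\sigma_c = \R_+$ (which is what Theorem \ref{thContSpec} asserts) to $\sigma_{\ess} = \R_+$: a priori, negative eigenvalues of $\rh_{R,\gA}^{(l)}$ could accumulate at some $E_0 \in (-\infty,0)$ and contribute to the essential spectrum. Closing this gap cleanly requires a relative form-compactness argument for the perturbation $\sum_k \alpha_k \delta(r-r_k)$ exploiting \eqref{3.2A}, or equivalently a Weyl-sequence argument; alternatively, a Birman--Schwinger bound applied to the truncated measures $\mu_n := \sum_{r_k \ge n} |\alpha_k^-|\delta_{r_k}$ (whose total mass tends to zero by \eqref{3.2A}) can be used to show that negative eigenvalues may cluster only at $0$.
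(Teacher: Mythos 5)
Your argument follows the paper's proof essentially verbatim: partial-wave decomposition, the Bessel-operator extension of Theorem \ref{thContSpec} via Remark \ref{rem:3.3}, and the lift to $\R^n$ through Theorem \ref{th:hempel}. The only substantive difference is that you explicitly flag (and sketch how to close) the passage from $\sigma_c(\rh_{R,\gA}^{(l)})=\R_+$ to $\sigma_{\ess}(\rh_{R,\gA}^{(l)})=\R_+$ --- a point the paper glosses over by appealing to lower semiboundedness --- and your Birman--Schwinger remark on the truncated measures is a reasonable way to make that step explicit.
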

\begin{proof}
Observe that the operators $\rH_{R,\gA}$ and $\rh_{R,\gA}^{(l)}$ are lower semibounded due to condition \eqref{I_brinck_B}. Using the partial wave decomposition, by Theorem \ref{thContSpec} we obtain
\[
\sigma_{\ess}(\rh_{R,\gA}^{(l)})=\R_+.
\]
It remains to note that by Theorem \ref{th:hempel}, $\sigma_{\ess}(\rh_{R,\gA}^{(l)})\subseteq \sigma_{\ess}(\rH_{R,\gA})\subseteq\R_+$.
%
\end{proof}

\subsubsection{Number of bound states}
The following equality clearly follows from the partial wave decomposition,
\be\label{eq:bs_mul}
\kappa_-(\rH_{R,\gA})=\sum_{l=0}^\infty \dim(\mathcal{H}_l)\kappa_-(\rh_{R,\gA}^{(l)}),
\ee
where $\mathcal{H}_l$ is the eigenspace corresponding to the $l$-th eigenvalue $\varkappa_l=-l(l+n-2)$ of the Laplace--Beltrami operator on $L^2(S^{n-1})$. Note that $\dim(\mathcal{H}_l)=2l+1$ if $n=3$ and in the case $n=2$
\[
\dim(\mathcal{H}_l)=\begin{cases}
                          2, & l>0\\
                          1, & l=0
                              \end{cases}.
\]

Therefore, the results of Section \ref{sec:bstates} provide estimates for the operator $\rH_{R,\gA}$. For instance, using the Bargman bound \eqref{eq:barg}, we can easily obtain the following estimates:
\begin{itemize}
\item[(i)] The case $n=3$:
\be
\kappa_-(\rH_{R,\gA})\le\sum_{l=0}^\infty (2l+1)\floor{\frac{I_0}{2l+1}}\le\frac{\floor{I_0}(\floor{I_0}+1)}{2},\quad I_0=\sum_{k=1}^\infty|\gA_k^-|r_k^-
\ee
\item[(ii)] The case $n=2$:
\be
\kappa_-(\rH_{R,\gA})\le \floor{I_{-1/2}}+\sum_{l=1}^\infty 2 \floor{\frac{I_0}{2l}}\le \floor{I_{-1/2}}+\floor{I_{0}}\log(\floor{I_0}),
\ee
where
\[
 I_{-1/2}=\sum_{k=1}^\infty|\gA_k^-\, r_k^-\log(r_k^-)|
\]

\end{itemize}


\section*{Appendix}

\appendix

\section{Boundary triplets and Weyl functions}
\label{app:hi}

Let $A$ be a densely defined closed symmetric operator in a
separable Hilbert space $\gH$ with equal deficiency indices
$\mathrm{n}_\pm(A)=\dim \gN_{\pm \I} \leq \infty$, $ \gN_z:=\ker(A^*-z)$.

\begin{definition}[\cite{Gor84}]\label{def:bt}
A triplet $\Pi=\{\cH,\gG_0,\gG_1\}$ is called a {\rm boundary
triplet} for the adjoint operator $A^*$ if $\cH$ is a Hilbert
space and $\Gamma_0,\Gamma_1:\  \dom(A^*)\rightarrow \cH$ are
bounded linear mappings such that the abstract Green identity
\begin{equation}\label{eq:green_f}
(A^*f,g)_\gH - (f,A^*g)_\gH = (\gG_1f,\gG_0g)_\cH - (\gG_0f,\gG_1g)_\cH, \quad
f,g\in\dom(A^*),
\end{equation}
holds
and the mapping $\gG:=\{\Gamma_0,\Gamma_1\}:  \dom(A^*)
\rightarrow \cH \oplus \cH$ is surjective.
\end{definition}
A boundary triplet 
for $A^*$ exists since the deficiency indices of $A$ are assumed to be
equal. Moreover, $A=A^*\upharpoonright\left(\ker(\Gamma_0) \cap \ker(\Gamma_1)\right)$ and $\mathrm{n}_\pm(A) = \dim(\cH)$ hold. Note also that a boundary triplet for $A^*$ is not unique. With every boundary triplet one naturally associates two self-adjoint extensions $A_0$ and $A_1$ of $A$ defined by
\be
A_j:=A^*\upharpoonright \ker (\Gamma_j),\quad j\in\{1,2\}.
\ee

In \cite{DM91}  the concept of the classical Weyl--Titchmarsh $m$-function
from the theory of Sturm-Liouville operators  was generalized to
the case of symmetric operators  with equal deficiency indices.
The role of abstract Weyl functions in the extension theory  is
similar to that of the classical Weyl--Titchmarsh $m$-function in the spectral
theory of singular Sturm-Liouville operators.

\begin{definition}[{\cite{DM91}}]\label{def:wf}
Let $A$ be a densely defined closed symmetric operator in $\gH$
with equal deficiency indices and let $\Pi=\{\cH,\gG_0,\gG_1\}$ be
a boundary triplet for $A^*$.
The operator valued function $M:\rho(A_0)\rightarrow  [\cH]$ defined by
\begin{equation}\label{II.1.3_01}
M(z):=\Gamma_1(\Gamma_0\upharpoonright\N_z)^{-1}, \qquad
z\in\rho(A_0),
\end{equation}
is called the 
{\em Weyl function} corresponding to the boundary triplet $\Pi$.
\end{definition}
The Weyl function
$M(\cdot)$ in \eqref{II.1.3_01}
is well defined. 
Moreover, $M(\cdot)$ is holomorphic on
$\rho(A_0)$ and is a Nevanlinna--Herglotz function  (see \cite{DM91}),
     \begin{equation}\label{II.1.3_03}
 \im z\cdot\im M(z)\geq 0,\qquad  M(z)^*=M(\overline
z),\qquad  z\in \C\setminus \R.
\end{equation}
Finally, we need the following result describing the number of negative squares of self-adjoint extensions of $A$ (see \cite[Section 4.4]{DM91}).

\begin{proposition}[\cite{DM91}]\label{prop:a.1}
Let $A$ be a nonnegative densely defined closed symmetric operator in $\gH$. Let $\Pi=\{\cH,\gG_0,\gG_1\}$ be
a boundary triplet for $A^*$ such that the operator $A_0$ is nonnegative. Let also $M(.)$ be the corresponding Weyl function. Then for any $B=B^*\in \mathcal{C}(\cH)$ the following equality holds true
\be
\kappa_-(A_B)=\kappa_-(B-M(0))-\kappa_-(B-M(-\infty)).
\ee
Here
\be
A_B:=A^*\upharpoonright \ker (\Gamma_1-B\Gamma_0),
\ee
and $M(0)$, $M(-\infty)$ are the strong resolvent limits (see, e.g., \cite[Chapter VIII]{Kato66})
\be
M(0):=s-R-\lim_{\lambda\uparrow 0}M(\lambda),\quad M(-\infty):=s-R-\lim_{\lambda\downarrow -\infty} M(\lambda).
\ee
\end{proposition}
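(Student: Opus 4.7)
The plan is to deduce the eigenvalue count from the Kre\u\i n-type resolvent formula associated with the boundary triplet, combined with the operator monotonicity of the Weyl function on $(-\infty,0)$. Recall that, writing $\gamma(z):=(\Gamma_0\upharpoonright\gN_z)^{-1}$ for the $\gamma$-field, the standard identity
\[
(A_B-z)^{-1}-(A_0-z)^{-1}=-\gamma(z)\big(B-M(z)\big)^{-1}\gamma(\bar z)^*,\qquad z\in\rho(A_0)\cap\rho(A_B),
\]
reduces the spectral analysis of $A_B$ below $0$ to that of the finite-dimensional-looking pencil $B-M(\lambda)$, since $(-\infty,0)\subset\rho(A_0)$ by the hypothesis $A_0\ge 0$. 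The first step is therefore to record the classical correspondence: for every $\lambda<0$,
\[
\dim\ker(A_B-\lambda)=\dim\ker\big(B-M(\lambda)\big),
\]
which follows by inspecting the image of $\gamma(\lambda)$ and using that $\gamma(\lambda)$ is injective with range $\gN_\lambda$.

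Next I would exploit the fact that $M(\cdot)$ is an operator-valued Nevanlinna--Herglotz function and, since $A_0$ is nonnegative, it continues holomorphically to $(-\infty,0)$ and is operator-monotone increasing there; in particular $\lambda\mapsto B-M(\lambda)$ is operator-monotone \emph{decreasing} on $(-\infty,0)$. Combined with the strong resolvent limits $M(0)$ and $M(-\infty)$ assumed to exist (as self-adjoint limits in $\cH$), this makes $B-M(\lambda)$ a monotone family between the extremes $B-M(-\infty)$ and $B-M(0)$. The counting principle of Birman--Schwinger type then applies: the number of negative eigenvalues of $A_B$, counted with multiplicity, equals the total number of eigenvalues of $B-M(\lambda)$ that move from the positive half-axis to the negative half-axis as $\lambda$ sweeps from $-\infty$ up to $0$. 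By monotonicity, once an eigenvalue of $B-M(\lambda)$ crosses $0$ it stays negative, so the net count is simply
\[
\kappa_-(B-M(0))-\kappa_-(B-M(-\infty)),
\]
which is the asserted formula.

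The technical heart of the argument is the rigorous bookkeeping of eigenvalue crossings for the family $\{B-M(\lambda)\}_{\lambda\in(-\infty,0)}$, and this is where I expect the main obstacle to lie. Since $B\in\cC(\cH)$ is merely closed and not necessarily bounded, and $\cH$ may be infinite-dimensional, one cannot simply count roots of a determinant; instead one has to invoke the min-max principle applied to the quadratic forms generated by $B-M(\lambda)$, check that the spectral projections $E_{B-M(\lambda)}(-\infty,0)$ depend monotonically on $\lambda$, and control the strong resolvent limits at the endpoints to conclude $E_{B-M(\lambda)}(-\infty,0)\to E_{B-M(0)}(-\infty,0)$ as $\lambda\uparrow 0$ and analogously at $-\infty$. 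Once this monotone convergence of spectral projections is established, the identity follows by subtracting dimensions. A last routine check is that no eigenvalue of $A_B$ can accumulate at $0$ from below except possibly through $\ker(B-M(0))$ being infinite-dimensional, in which case both sides of the identity are $+\infty$ and the formula remains valid.
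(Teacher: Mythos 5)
The paper does not prove this proposition; it is imported verbatim from \cite{DM91} (Section 4.4), so there is no in-paper argument to compare against. Your proposal reproduces the standard Derkach--Malamud proof: the kernel identity $\dim\ker(A_B-\lambda)=\dim\ker(B-M(\lambda))$ from the Krein resolvent formula, the operator monotonicity of $M$ on $(-\infty,0)\subset\rho(A_0)$, and the resulting crossing count — exactly the mechanism the authors themselves replay in the second proof of Theorem \ref{th:bargmann} — and your identification of the endpoint limits $\lambda\uparrow 0$, $\lambda\downarrow-\infty$ (handled via monotone strong resolvent convergence, cf.\ \cite[Theorems VIII.3.11, VIII.5.1]{Kato66}) as the only delicate step is accurate.
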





\end{document}